\title{Computational complexity of the Weisfeiler-Leman dimension}
\author{Moritz Lichter}{RWTH Aachen University}{lichter@lics.rwth-aachen.de}{https://orcid.org/0000-0001-5437-8074}{The research of this author has received further funding by the European Union (ERC, SymSim, 101054974).}
\author{Simon Raßmann}{TU Darmstadt}{rassmann@mathematik.tu-darmstadt.de}{https://orcid.org/0000-0003-1685-410X}{}
\author{Pascal Schweitzer}{TU Darmstadt}{schweitzer@mathematik.tu-darmstadt.de}{https://orcid.org/0009-0001-3585-8213}{}
\authorrunning{M. Lichter, S. Raßmann and P. Schweitzer}
\keywords{Weisfeiler-Leman algorithm, dimension, complexity, coherent configurations}
\g@addto@macro\bfseries{\boldmath}
\tikzset{>=stealth}
\renewcommand*{\vec}[1]{{\mathbf{#1}}}
\newcommand{\eval}[1]{\llbracket #1\rrbracket}
\newcommand*{\bigunion}{\bigcup}
\newcommand*{\dotcup}{\mathbin{\dot\cup}}
\newcommand{\N}{\mathbb{N}}
\newcommand*{\C}{\mathrm{C}}
\newcommand{\True} {\textup{\textsc{True}}\xspace}
\newcommand{\False}{\textup{\textsc{False}}\xspace}
\newcommand*{\NP}{\textsf{NP}\xspace}
\newcommand*{\coNP}{\textsf{co-NP}\xspace}
\newcommand*{\Ptime}{\textsf{P}\xspace}
\newcommand*{\ACZ}{\textsf{AC\textsubscript{0}}\xspace}
\newcommand{\WL}[1][]{\ifstrempty{#1}{\ensuremath{\mathrm{WL}}}{\ensuremath{#1}\text{-}\ensuremath{\mathrm{WL}}}}
\DeclareMathOperator{\WLdim}{WL-dim}
\DeclareMathOperator{\tw}{tw}
\newcommand{\spl}{{\mathrm{split}}}
\DeclareMathOperator{\dom}{dom}
\DeclareMathOperator{\Aut}{Aut}
\DeclareMathOperator{\Sym}{Sym}
\DeclareMathOperator{\val}{val}
\DeclareMathOperator{\CFI}{CFI}
\newcommand*{\simeqq}{\cong}
\newcommand{\vertex}[3][black]{\node[shape=circle,draw=#1, fill=#1, inner sep=2.5pt] (#2) at (#3) {}}
\newcommand{\cconf}[1]{\mathcal{#1}} 
\newcommand{\rb}[1]{\mathcal{#1}} 
\newcommand{\CC}[1]{\operatorname{WL}_{#1}} 
\newcommand{\F}[2][]{F_{#1}(#2)} 
\newcommand{\Eq}[1]{\mathsf{Eq}_{#1}} 
\newcommand{\intnum}[2]{p\left(#1;#2\right)}
\newcommand{\extnum}[2]{p\left(#1;#2\right)}
\newcommand{\salgaut}[1]{\mathbb{A}(#1)} 
\newcommand{\relstruc}[1]{\mathfrak{#1}} 
\newcommand{\feq}{\preceq}
\let\@vareps\varepsilon
\let\varepsilon\epsilon
\let\epsilon\@vareps
\let\@varph\varphi
\let\varphi\phi
\let\phi\@varph
\begin{document}

\maketitle

\begin{abstract}
The Weisfeiler-Leman dimension of a graph \(G\) is the least number \(k\)
such that the \(k\)-dimensional Weisfeiler-Leman algorithm
distinguishes \(G\) from every other non-isomorphic graph, or equivalently,
the least \(k\) such that \(G\) is definable in \((k+1)\)-variable logic with counting.
The dimension is a standard measure of the descriptive or structural complexity of a graph
and recently finds various applications in particular in the context of machine learning.
This paper studies the complexity of computing the Weisfeiler-Leman dimension.
We observe that deciding whether the Weisfeiler-Leman dimension of $G$ is at most $k$
is \NP-hard, even if \(G\) is restricted to have \(4\)-bounded color classes.
Therefore, we study parameterized versions of the problem. 
For each fixed \(k\geq 2\), we give a polynomial-time algorithm
that decides whether the Weisfeiler-Leman dimension
of a given graph with \(5\)-bounded color classes is at most \(k\).
Moreover, we show that for these bounds on the color classes, this is optimal
because the problem is \Ptime-hard under logspace-uniform \ACZ-reductions.
Furthermore, for each larger bound $c$ on the color classes and each fixed $k \geq 2$,
we provide a polynomial-time decision algorithm for the abelian case,
that is, for structures of which each color class has an abelian automorphism group.

While the graph classes we consider may seem quite restrictive, 
graphs with \(4\)-bounded abelian colors include CFI-graphs and multipedes,
which form the basis of almost all known hard instances and lower bounds related to the Weisfeiler-Leman algorithm.
\end{abstract}

\section{Introduction}
The \emph{Weisfeiler-Leman algorithm}
is a simple combinatorial procedure studied in the context of the graph isomorphism problem.
For every \(k\geq 1\), the algorithm has a \(k\)-dimensional variant, \WL[k] for short,
that colors \(k\)-tuples of vertices according to how they structurally
sit inside the whole graph:
if two tuples get different colors, they cannot be mapped onto each other by an automorphism of the graph (while the converse is not always true).
The \(1\)-dimensional algorithm, which is also known as \emph{color refinement},
starts by coloring each vertex according to its degree, and then
repeatedly refines this coloring by including into each vertex color
the multisets of colors of its neighbors.
The \(k\)-dimensional variant generalizes this idea and colors \(k\)-tuples of vertices
instead of single vertices~\cite{WL,CFI}.

The Weisfeiler-Leman algorithm plays an important role in both theoretic and practical
approaches to the graph isomorphism problem, but is also related to a plethora of seemingly
unrelated areas: to finite model theory and descriptive complexity via the correspondence of \WL[k] to \((k+1)\)-variable
first-order logic with counting~\cite{CFI, ImmermanLanderCCS3},
to machine learning via a correspondence to the expressive power of (higher-dimensional)
graph neural networks~\cite{WLgoesNeural},
to the Sherali-Adams hierarchy in combinatorial optimization \cite{sherali_adams1,sherali_adams2},
and to homomorphism counts from treewidth-\(k\) graphs \cite{wl_vs_homomorphisms}.

On the side of practical graph isomorphism, the color refinement procedure is a basic building block
of the so-called \emph{individualization-refinement framework},
which is the basis of almost every modern practical solver for the graph
isomorphism problem \cite{nauty, bliss1, bliss2, dejavu}.
On the side of theoretical graph isomorphism, Babai's quasipolynomial-time
algorithm for the graph isomorphism problem \cite{Quasipolynomial}
uses a combination of group-theoretic techniques
and a logarithmic-dimensional Weisfeiler-Leman algorithm.

The Weisfeiler-Leman algorithm is a powerful algorithm for distinguishing non-isomorphic graphs on its own.
For every $k$, \WL[k] can be used as an incomplete polynomial-time isomorphism test:
if the multiset of colors of \(k\)-tuples of two graphs $G$ and $H$ differ,
then $G$ and $H$ cannot be isomorphic.
In this case, \WL[k] distinguishes $G$ and $H$, otherwise $G$ and $H$ are \WL[k]-equivalent.
For a given graph \(G\), we say that the \(k\)-dimensional Weisfeiler-Leman algorithm \WL[k] \emph{identifies} \(G\) if
it distinguishes \(G\) from every non-isomorphic graph.
The smallest such \(k\) is known as the \emph{Weisfeiler-Leman dimension of~\(G\)}~\cite{GroheMinors}.

It is known that almost all graphs have Weisfeiler-Leman dimension~\(1\)~\cite{CRIdentifiesAlmostAllGraphs}.
However, color refinement fails spectacularly on regular graphs, where it always returns the
monochromatic coloring. For these, it is known that \WL[2] identifies almost all regular graphs \cite{2WLIdentifiesAlmostAllRegularGraphs, KuceraRandomRegularGraphs}.
In contrast to these positive results,
for every~\(k\) there is some graph \(G\)
(even of order linear in~$k$, of maximum degree~\(3\), and with \(4\)-bounded abelian color classes,
i.e., such that no more than~\(4\) vertices can share the same vertex-color
and every color class induces a graph with abelian automorphism group)
that is not identified by \WL[k]~\cite{CFI}. These so-called CFI-graphs
have high Weisfeiler-Leman dimension
and are thus hard instances for combinatorial approaches to the graph isomorphism problem.

The situation changes for restricted classes of graphs.
If the Weisfeiler-Leman dimension over some class of graphs is bounded by \(k\),
then the \(k\)-dimensional Weisfeiler-Leman algorithm correctly decides isomorphism over this class.
And since \WL[k] can be implemented in polynomial time \(O(n^{k+1}\log n)\) \cite{ImmermanLanderCCS3},
this puts graph isomorphism over such classes into polynomial time.
Examples of graph classes with bounded Weisfeiler-Leman dimension include graphs of bounded tree-width~\cite{BoundedTreeWidth}, graphs of bounded rank-width \cite{BoundedRankWidth},
graphs with \(3\)-bounded color classes.
\cite{ImmermanLanderCCS3},
planar graphs \cite{WL3onPlanarGraphs}, and more generally
every non-trivial minor-closed graph class \cite{GroheMinors}.

In this paper, we study the computational complexity of computing the Weisfeiler-Leman dimension. We call the problem of deciding whether the Weisfeiler-Leman dimension of a given graph
is at most \(k\) the \emph{\WL[k]-identification problem}.
For upper complexity bounds, non-identification of a graph \(G\) can be witnessed
by providing a graph \(H\) that is not distinguished from \(G\)
by \WL[k] but is also not isomorphic to \(G\).
As the latter can be checked in \coNP, this places the identification
problem into the class \(\Pi_2^{\Ptime}\) of the polynomial hierarchy.
If the graph isomorphism problem is solvable in polynomial time,
this complexity bound collapses to \coNP.
However, there is no apparent reason why the identification
problem should not be polynomial-time decidable.

On the side of lower complexity bounds, the \WL[1]-identification problem is
complete for polynomial time under uniform reductions in the circuit complexity class \ACZ \cite{CRIdentification,CRIdentificationPHard}.
Hardness of the \WL[1]-identification problem does, however, not easily imply any hardness results for the \WL[k]-identification problem for higher values of \(k\).
Indeed, no hardness results are known for \(k\geq 2\).
The \WL[2]-identification problem in particular includes
the problem of deciding whether a given strongly regular graph is determined up to isomorphism
by its parameters, which is a baffling problem from classic combinatorics far beyond our current knowledge. 
To understand the difficulties of the \WL[k]-identification problem better, we can again consider classes of graphs.
On every class of graphs with bounded color classes, graph isomorphism
is solvable in polynomial time \cite{GIBoundedCCSLasVegas,GIBoundedCCS},
which puts the identification problem over this class into \coNP for every \(k\geq 2\).
Graphs with \(3\)-bounded color classes
are identified by \WL[2]~\cite{ImmermanLanderCCS3}, which makes their identification problem trivial.
As shown by the CFI-graphs~\cite{CFI}, this is no longer true for graphs with \(4\)-bounded color classes.
Nevertheless, as shown by Fuhlbrück, Köbler, and Verbitsky, identification of graphs
with \(5\)-bounded color classes by \WL[2] is efficiently decidable~\cite{WL2onCCS4}.
For higher dimensions or bounds on the color classes essentially nothing is known.

\subparagraph*{Contribution.}
We extend the results of~\cite{WL2onCCS4} from~\WL[2] to~\WL[k] and
give a polynomial-time algorithm
deciding 
whether a graph with \(5\)-bounded color classes
is identified by \(\WL[k]\):
\begin{theorem}[name=,restate=WLkIDccsFive]\label{Theorem: WL-identification on ccs 5}
For every $k$, there is an algorithm that decides the
\WL[k]-identification problem for vertex- and edge-colored,
directed graphs with $5$-bounded color classes in time \(O_{k}(n^{O(k)})\).
If such a graph \(G\) is not identified by \(\WL[k]\),
the algorithm provides a witness for this, i.e., a graph \(H\) that is not isomorphic to \(G\) and not distinguished from \(G\) by \WL[k].
\end{theorem}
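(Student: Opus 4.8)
The plan is to reduce the \WL[k]-identification problem to deciding a group-theoretic property of the coherent structure that \WL[k] produces on~$G$, and then to use the bound~$5$ on the color multiplicity to make that property decidable in polynomial time. Recall that $G$ fails to be identified by \WL[k] exactly when some non-isomorphic graph $H$ has the same \WL[k]-coloring as~$G$; the algorithm must detect this and, if so, output such an~$H$. First I would run \WL[k] on~$G$ in time $O(n^{k+1}\log n)$ to obtain the stable coloring of $k$-tuples, and extract from it the vertex color classes (fibers) $V_1,\dots,V_m$, each of size at most~$5$, together with the colors assigned to pairs, triples and $k$-tuples across and within fibers. This data forms a coherent-configuration-like object~$\cconf{C}$ whose $2$-ary part is a coherent configuration and whose higher-arity part records the $k$-dimensional refinement. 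After a canonical preprocessing (splitting fibers that carry internal structure and re-encoding edge- and direction-colors) one may assume $G$ is in a ``CFI-like'' normal form in which the only freedom in building a graph with the prescribed \WL[k]-coloring is, for each pair of fibers that is \emph{linked} (whose interspace is a non-trivial, matching-type relation), the choice of an admissible bijection between the two fibers.

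The technical heart is to show that every graph $H$ with the same \WL[k]-coloring as $G$ arises from $G$ by a \emph{generalized CFI twist}: a choice, at each linked pair of fibers, of a local bijection compatible with all colors \WL[k] prescribes on triples and on longer tuples. These compatibility constraints form a (possibly non-abelian) cocycle condition, so the set of twists modulo isomorphism is a ``cohomology''-type quotient: the admissible local choices at a fiber $V_i$ form a subgroup $L_i \le \Sym(V_i) \le S_5$ (the algebraic local group), the \WL[k]-valid global twists form a subgroup $\Gamma$ of $\prod_i L_i$, and twisting by the local permutations coming from genuine partial isomorphisms identifies twists giving isomorphic graphs. Then $G$ is identified by \WL[k] iff this quotient is trivial; and otherwise any $\gamma\in\Gamma$ representing a non-trivial class yields the witness $H = G_\gamma$, which by construction keeps the \WL[k]-coloring and, as one checks, stays within color multiplicity~$5$. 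The prototype is $G=\CFI(B)$, where the quotient is $\ker(\mathrm{sum})/\operatorname{im}(\partial)\cong\mathbb{F}_2$ exactly when $k$ is below the relevant separator width of~$B$.

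It remains to compute the quotient in polynomial time. Whenever a fiber's local group $L_i$ is abelian — automatic for $|V_i|\le 4$, by the classification underlying the color-multiplicity-$4$ case — the cocycle conditions become linear algebra over a small ring ($\mathbb{F}_2$ or $\Z/q$), and triviality of the quotient is a rank computation over a structure of polynomial size, hence in time $n^{O(1)}$. The genuinely new cases are the non-abelian color classes of size exactly~$5$, with algebraic local action among $D_{10}$, $F_{20}$, $A_5$ and $S_5$; for these I would extend the Fuhlbr\"uck--K\"obler--Verbitsky analysis from \WL[2] to arbitrary~$k$, showing that the colors \WL[k] assigns to the interspaces incident to such a fiber force enough rigidity that the admissible twists remain enumerable — raising the WL-dimension only refines the coloring, hence only shrinks $\Gamma$, so these fibers contribute nothing beyond the \WL[2] case. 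Assembling the per-fiber data, computing $\Gamma$ and its isomorphism-realizable subgroup, and testing equality gives the decision; the running time is dominated by \WL[k] and by computations in groups of order $O(1)$, yielding $O_k(n^{O(k)})$.

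The main obstacle is the structural characterization together with this size-$5$ non-abelian analysis: one must prove that the \WL[k]-equivalent graphs are \emph{exactly} the generalized CFI twists inside $\Gamma$ — missing none, and admitting none that breaks the \WL[k]-coloring or the multiplicity bound — and in particular rule out that a non-solvable local action ($A_5$ or $S_5$) on a size-$5$ fiber creates twists invisible to \WL[k]. This is exactly where the bound~$5$ enters essentially, as it already does for \WL[2] in~\cite{WL2onCCS4}; color multiplicity~$6$ would reintroduce the pathologies of $S_6$.
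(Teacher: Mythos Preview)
Your high-level strategy---reduce identification to comparing the group of ``algebraic'' symmetries of the \WL[k]-stable structure against those realized by genuine isomorphisms---matches the paper's, but two load-bearing steps are either wrong or missing. First, the claim that the local groups $L_i$ are abelian for $|V_i|\le 4$ is false: the color-preserving automorphism group of the homogeneous configuration $K_4$ is $S_4$, and that of $C_4$ is the dihedral group of order~$8$. The paper's proof for color multiplicity~$5$ does \emph{not} use abelianness at all; that hypothesis appears only in the separate approximation result of Theorem~\ref{Theorem: Approximation WL-identification}. So your ``linear algebra over a small ring'' does not apply, and the quotient you describe is not even well-defined as a group quotient in the non-abelian case.

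Second, and more seriously, the structural assertion that every \WL[k]-equivalent $H$ arises as a ``generalized CFI twist''---equivalently, that every algebraic isomorphism of $\CC{k}(G)$ is fiber-wise induced by a combinatorial bijection---is exactly the crux, and it is false without a preprocessing step you omit. The paper first eliminates every interspace containing a disjoint union of stars, proving this preserves both separability and $2$-inducedness (Lemmas~\ref{Lemma: Elimination of disjoint unions of stars I} and~\ref{Lemma: Elimination of disjoint unions of stars II}). Only in the resulting star-free configuration does the interspace classification bite: the surviving non-uniform interspaces are just $2K_{2,2}$ and $C_8$, no two $C_8$-interspaces share a fiber, and---crucially---no non-uniform interspace is incident to a size-$5$ fiber at all. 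That is why size-$5$ fibers are essentially vacuous here, not a separate $A_5$/$S_5$ case analysis. With this in hand, the paper reduces to checking whether every strict algebraic automorphism is combinatorially induced, and solves \emph{that} not by cohomology but by encoding $\salgaut{\cconf{C}}$ as the automorphism group of an auxiliary structure of bounded color multiplicity and invoking bounded-color-class graph isomorphism (Lemmas~\ref{Lemma: computing generating set of A(C)} and~\ref{Lemma: Deciding whether algebraic iso is combinatorial in polynomial time}).
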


\noindent Via the correspondence of \WL[k] to $(k+1)$-variable counting logic,
Theorem~\ref{Theorem: WL-identification on ccs 5} implies that definability of graphs with $5$-bounded color classes in this logic is decidable in polynomial time.
While the restriction to \(5\)-bounded color classes may seem stark,
almost all known hardness results and lower bounds for the Weisfeiler-Leman algorithm remain true
for graphs with bounded color classes and in most cases even \(4\)-bounded color classes suffice~\cite{WLEquivalencePHard, CFIvstw, schneider_upperBound, multipedesI, multipedesII}.

Towards generalizing Theorem~\ref{Theorem: WL-identification on ccs 5} to arbitrary relational structures
and larger color classes, we consider structures with abelian color classes,
i.e., structures of which each color class induces a structure with an abelian automorphism group.
Such structures were previously considered in the context of descriptive complexity theory \cite{abelian_color_classes},
and include both CFI-graphs \cite{CFI} and multipedes \cite{multipedesI, multipedesII} over ordered base graphs, which form the basis of all known constructions of graphs with high Weisfeiler-Leman dimension.
For many case in descriptive  complexity theory,
restricting to \(4\)-bounded abelian color classes is sufficient,
but in some cases larger (but still abelian) color classes are required~\cite{Bijective_Pebble_Games,GradelPakusa19, Lichter2023, Lichter2023b}.
For such structures, we obtain a polynomial-time algorithm as before:
\begin{theorem}[restate=IdentificationAbelianColors, name=]\label{thm:abelian_colors:deciding_identication}
For every $k\in\N$ and \(c,r\leq k\), there is an algorithm that decides the
\WL[k]-identification problem for
\(r\)-ary relational structures with \(c\)-bounded abelian color classes in time \(O_{k}(n^{O(k)})\).
If such a structure \(\relstruc{A}\) is not identified by \(\WL[k]\),
the algorithm provides a witness for this, i.e., a second structure \(\relstruc{B}\) that is not isomorphic to \(\relstruc{A}\) and not distinguished from \(\relstruc{A}\) by \WL[k].
\end{theorem}

\noindent On the side of hardness results, we first prove that when the dimension \(k\) is part of the input, the identification problem is \NP-hard.
Note that a similar result was recently independently observed by Seppelt~\cite{seppelt_WLEquivcoNPhard}.
\begin{theorem}[name=,restate=WLdimNPhard]\label{Theorem: Computing WL-dimension NP-hard}
The problem of deciding,
given a graph \(G\) and a natural number \(k\), 
whether the Weisfeiler-Leman dimension of $G$ is at most $k$
is \NP-hard, both over uncolored simple graphs, and over simple graphs with \(4\)-bounded abelian color classes.
\end{theorem}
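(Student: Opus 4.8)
The plan is to reduce from the \NP-complete problem of deciding, for a given graph $H$ and a number $k \in \N$, whether $\tw(H) \le k$. The bridge to the Weisfeiler-Leman dimension is the Cai-F\"urer-Immerman (CFI) construction, which from a connected base graph $H$ produces a pair of non-isomorphic graphs $\CFI(H)$ and $\widetilde{\CFI}(H)$, the latter obtained by a single ``twist''. Two facts drive the reduction: first, \WL[k] distinguishes $\CFI(H)$ from $\widetilde{\CFI}(H)$ exactly when $k \ge \tw(H)$ (a refinement of the Cai-F\"urer-Immerman lower bound, together with monotonicity of the algorithm in its dimension); second, \WL[\tw(H)] in fact \emph{identifies} $\CFI(H)$, i.e.\ distinguishes it from every non-isomorphic graph. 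Together these give $\WLdim(\CFI(H)) = \tw(H)$ (up to a fixed additive constant, which is harmless below), and since $\CFI(H)$ is polynomial-time computable whenever $H$ has bounded degree, the map $(H, k) \mapsto (\CFI(H),\, k)$ is the desired reduction.

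So the first task is to establish $\WLdim(\CFI(H)) = \tw(H)$ for suitably nice base graphs. One may preprocess $H$ so that it is connected, of bounded degree, contains a cycle (hence $\tw(H) \ge 2$ and $\CFI(H) \not\cong \widetilde{\CFI}(H)$), and is, say, free of degree-$1$ vertices or $2$-connected, since deciding $\tw(H) \le k$ remains \NP-complete under such restrictions. Then one argues, separately: (i) the lower bound $\WLdim(\CFI(H)) \ge \tw(H)$, via the correspondence of \WL[k] with the bijective $(k{+}1)$-pebble game and of that game on the CFI pair with a cops-and-robber game on $H$ capturing $\tw(H)$, so that for $k < \tw(H)$ the Duplicator wins and \WL[k] fails to separate $\CFI(H)$ from the non-isomorphic graph $\widetilde{\CFI}(H)$; and (ii) the matching upper bound, that \WL[\tw(H)] identifies $\CFI(H)$ among all graphs---this is where the structural hypotheses on $H$ enter, and where most of the effort goes.

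The theorem asks for two variants. For uncolored simple graphs, replace each vertex-color class of $\CFI(H)$---the size-$2$ classes coming from edges of $H$, and the size-$2^{\deg_H(v)-1}$ classes coming from its vertices---by pairwise non-isomorphic small rigid gadgets, attached identically to all vertices of a class, chosen so that Weisfeiler-Leman refinement of bounded dimension reconstructs the original coloring and so that no new non-isomorphic \WL-equivalent ``twin'' is introduced beyond the decorated copy of $\widetilde{\CFI}(H)$; this shifts the Weisfeiler-Leman dimension by an absolute constant $c_0$, so the reduction outputs $(G,\, k + c_0)$ for the decorated graph $G$. For simple graphs of color multiplicity at most $4$, keep the base graph \emph{subcubic}: a degree-$3$ vertex of $H$ contributes a color class of size $2^{3-1} = 4$, edges and lower-degree vertices contribute classes of size at most $2$, and $\CFI(H)$ is then already a simple (undirected, no edge colors) graph of color multiplicity $4$. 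Since \NP-hardness of the treewidth problem is available already on graphs of bounded degree, it suffices to first apply a degree reduction bringing the maximum degree down to $3$---replacing each high-degree vertex by a small tree of degree-$3$ vertices with its incident edges distributed over the leaves, equivalently realizing a large CFI gadget as a cascade of degree-$3$ CFI gadgets---while perturbing $\tw(H)$, connectivity, and the property of containing a cycle only in a controlled, computable way that can be absorbed into an additive shift of the target.

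The principal obstacle is the exact identity $\WLdim(\CFI(H)) = \tw(H)$, and within it the upper bound: one has to show that \WL[\tw(H)] pins $\CFI(H)$ down among \emph{all} graphs, not merely that it withstands the single twist. This needs the structural hypotheses on the base graph together with a careful argument---via pebble games, or via the coherent configuration computed by the algorithm. A secondary technical point is the degree reduction in the color-multiplicity-$4$ case, which must preserve the relevant width parameter up to a computable correction while bringing the maximum degree down to $3$.
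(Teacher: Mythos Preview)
Your approach is essentially the paper's: reduce from \textsc{Tree-width} via the CFI construction, using the identity \(\WLdim(\CFI(H,0))=\tw(H)\) for base graphs of tree-width at least \(2\). Two simplifications you are missing. First, the paper invokes the recent result that \textsc{Tree-width} is \NP-hard already over \emph{cubic} graphs~\cite{twNPhard}; since cubic base graphs yield CFI-graphs of color multiplicity exactly \(4\), the degree-reduction step you sketch is unnecessary. Second, the identity \(\WLdim(\CFI(H,0))=\tw(H)\) is established in the paper as Lemma~\ref{Lemma: CFI vs. tw} (building on~\cite{WLdimvstw}) with no additive slack, so your hedging about ``up to a fixed additive constant'' and the worry you flag about the identification upper bound are already resolved there: \WL[2] distinguishes CFI-graphs over \(H\) from all non-CFI graphs, and \WL[\tw(H)] separates the two CFI-graphs over \(H\), which together give the exact upper bound. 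With these two inputs in hand the proof is three lines; your outline reconstructs them correctly but works harder than needed.
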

\noindent Furthermore, we extend the \Ptime-hardness results for \WL[1] \cite{CRIdentificationPHard}
to arbitrary \(k\) and prove that, when \(k\) is fixed, the \WL[k]-identification problem
is hard for polynomial time:
\begin{theorem}[name=,restate=IdentificationPHard]\label{Theorem: WL-identification P-hard}
For every \(k\geq 1\), the \WL[k]-identification problem is \Ptime-hard under uniform \ACZ-reductions
over both uncolored simple graphs, and simple graphs with \(4\)-bounded abelian color classes.
\end{theorem}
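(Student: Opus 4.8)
The plan is to reduce the \WL[1]-identification problem --- which by \cite{CRIdentificationPHard} is \Ptime-hard under uniform \ACZ-reductions, already over uncolored simple graphs --- to the \WL[k]-identification problem. For each fixed $k\ge 2$ (the case $k=1$ is the cited result) I would build an \ACZ-computable map $G\mapsto\widehat G$ from simple graphs to simple graphs of color multiplicity $4$ such that $\WL[k]$ identifies $\widehat G$ if and only if $\WL[1]$ identifies $G$. Composing with the standard \ACZ-computable transformation that replaces bounded-size color classes by rigid attached subgraphs, and which preserves the \WL[k]-identification status, then yields the theorem both for uncolored simple graphs and for graphs of color multiplicity $4$.

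The map $G\mapsto\widehat G$ is a \emph{dimension-amplifying gadget substitution}. Fix a cubic base graph $B$ of treewidth larger than $k$, so that Duplicator wins the bijective $(k+1)$-pebble game between the twisted CFI-pair $\CFI_0(B)$ and $\CFI_1(B)$. The graph $\widehat G$ is obtained by placing, at each vertex of $G$, a self-contained CFI-type gadget modelled on $B$ with one ``port'' per incident edge, and wiring these gadgets together along the edges of $G$ through CFI edge-gadgets; vertices of degree $d>3$ are handled by the usual replacement of a degree-$d$ CFI gadget by a network of degree-$3$ ones, so that every color class has size at most $4$. This is a purely local, first-order replacement, hence \ACZ-computable, and for fixed $k$ it blows up the instance only polynomially. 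The role of the treewidth bound on $B$ is to make each gadget \emph{opaque} to the $k$-dimensional algorithm: $\WL[k]$ cannot detect the internal twist of a single gadget, and the gadgets and their wiring should be designed so that the only information $\WL[k]$ extracts about the global layout of $\widehat G$ is exactly the information that color refinement extracts from $G$.

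I would then prove two lemmas. \textbf{Faithfulness}: $\widehat H\cong\widehat G$ iff $H\cong G$ --- immediate, since $G$ is recovered from $\widehat G$ by contracting each gadget. \textbf{Reduction}: a graph $X$ is \WL[k]-equivalent to $\widehat G$ iff $X\cong\widehat H$ for some $H$ that is \WL[1]-equivalent to $G$. For the ``if'' direction one lifts a winning Duplicator strategy in the bijective $2$-pebble game on $G$ versus $H$ to a winning Duplicator strategy in the bijective $(k+1)$-pebble game on $\widehat G$ versus $\widehat H$, combining the skeleton-level strategy with the local CFI-Duplicator strategies inside the currently relevant gadgets, for which the remaining pebbles suffice by the treewidth bound on $B$. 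Granting both lemmas, $\widehat G$ is identified by $\WL[k]$ iff every graph \WL[k]-equivalent to $\widehat G$ is isomorphic to it, iff every $H$ that is \WL[1]-equivalent to $G$ has $\widehat H\cong\widehat G$, iff (by faithfulness) every such $H$ is isomorphic to $G$, iff $G$ is identified by $\WL[1]$; and the reduction sends a non-identification witness $H$ for $G$ to the witness $\widehat H$ for $\widehat G$.

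The main obstacle is controlling the power of $\WL[k]$ on $\widehat G$ precisely enough to get both directions of the Reduction lemma. On one hand $\WL[k]$ must be \emph{strong enough} to recognize the gadget-and-wiring skeleton, so that every \WL[k]-equivalent graph $X$ really is of the form $\widehat H$ and no spurious non-isomorphic ``partially re-twisted'' variant of $\widehat G$ survives \WL[k]-equivalence; on the other hand it must be \emph{weak enough} that, between opaque gadgets, it learns no more about the layout of $\widehat G$ than color refinement learns about $G$ --- otherwise \WL[k]-identification of $\widehat G$ would track \WL[k]-identification of $G$ rather than \WL[1]-identification, and the reduction would be circular. Pinning down the wiring so that the coloring induced on the ``super-vertices'' refines exactly to the stable color-refinement coloring of $G$, and balancing the pebble budget (enough pebbles at the skeleton level to mirror the $2$-pebble game on $G$, enough inside a gadget for the CFI-Duplicator, which is what fixes the exact required treewidth of $B$), is the technical heart of the argument. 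A minor remaining point is to check that the final decoloring transformation is \ACZ-computable and identification-preserving for every $k\ge 1$.
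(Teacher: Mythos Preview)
Your proposal takes a genuinely different route from the paper: instead of reducing directly from the monotone circuit value problem via carefully engineered \emph{one-way switches} (gadgets that let \WL[k]-color information pass in one direction only), you attempt to reduce from the \WL[1]-identification problem by a dimension-amplifying gadget substitution \(G\mapsto\widehat G\) that should make \WL[k] on \(\widehat G\) see exactly what \WL[1] sees on \(G\).

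There is a genuine gap in the ``if'' direction of your Reduction lemma, and it is not a matter of missing details but of the basic mechanism. A local gadget substitution cannot force \WL[k] to forget the higher-dimensional structure of the skeleton. Concretely, take \(G=C_6\) and \(H=2K_3\): these are \WL[1]-equivalent (both \(2\)-regular on six vertices) but \WL[2]-inequivalent. In \(\widehat H\), Spoiler places one pebble in each of three pairwise connected vertex-gadgets; Duplicator must respond in \(\widehat G\), but no three vertex-gadgets of \(\widehat{C_6}\) are pairwise connected by edge-gadgets, and this discrepancy is visible to \WL[k] for every \(k\ge 2\). The opacity of each individual CFI gadget (guaranteed by \(\tw(B)>k\)) is only useful when many pebbles sit in \emph{one} gadget; it gives Duplicator nothing when Spoiler spreads the pebbles one-per-gadget, which then simulates the \((k+1)\)-pebble game on the skeleton rather than the \(2\)-pebble game. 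Your own budget analysis (``skeleton-level strategy \dots\ remaining pebbles inside a gadget'') tacitly assumes the pebbles concentrate, but Spoiler is under no such obligation.

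What is missing is precisely a device that blocks the backward flow of \WL[k]-information between gadgets so that pebbles in distinct gadgets do not cooperate at the skeleton level. That is exactly what the paper's one-way switches accomplish, and why the paper builds the reduction around a circuit (where information is meant to flow from inputs to output) rather than around an arbitrary graph \(G\). If you insert such switches on every wire of your construction you are essentially rebuilding the paper's gadgetry; without them, the equivalence \(\widehat G\equiv_{\WL[k]}\widehat H\Leftrightarrow G\equiv_{\WL[1]}H\) fails already on the smallest nontrivial example.
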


\subparagraph*{Techniques.}
To prove Theorem~\ref{Theorem: WL-identification on ccs 5},
we exploit the close connection between the coloring computed by \WL[k]
and \(k\)-ary coherent configurations.
These structures come with two notions of isomorphisms,
algebraic ones and combinatorial ones.
Similarly to~\cite{WL2onCCS4}, we reduce the \WL[k]-identification problem to the separability
problem for \(k\)-ary coherent configurations, 
that is, to decide whether algebraic and combinatorial isomorphisms for a given \(k\)-ary coherent configuration coincide.
We make two crucial observations:
First, we show that the \(k\)-ary coherent configurations obtained from graphs
are fully determined by their underlying \(2\)-ary configurations.
We call such configurations \(2\)-induced.
Second, we reduce the separability problem for arbitrary \(k\)-ary coherent configurations
to that of \(k\)-ary coherent configurations where no interspace contains a disjoint union of stars.
Combining both observations,
we show that two \(2\)-induced, star-free \(k\)\nobreakdash-ary coherent configurations
obtained from \WL[k]-equivalent graphs must be isomorphic.
Given such a \(k\)-ary coherent configuration obtained from a graph,
it thus suffices to decide whether
there is another non-isomorphic graph yielding the same configuration.
Finally, we solve this problem by
encoding it into the graph isomorphism problem
for structures with bounded color classes,
which is polynomial-time solvable~\cite{GIBoundedCCSLasVegas, GIBoundedCCS}.

The main obstacle to generalize Theorem~\ref{Theorem: WL-identification on ccs 5} to larger color classes or relational structures of higher arity is the existence of \WL[k]-equivalent structures
that yield non-isomorphic star-free \(k\)-ary coherent configurations,
which greatly increases the space of possibly equivalent bot non-isomorphic structures.

To make up for this, we consider structures with abelian color classes.
Using both the bijective pebble game \cite{Bijective_Pebble_Games} and ideas
from the theory of coherent configurations, we provide structural insights for
the class of \(k\)-ary coherent configurations with abelian fibers
which allows us to finally prove that in this case, it does suffice
to consider other relational structures yielding the same
\(k\)-ary coherent configuration.

\NP-hardness in Theorem~\ref{Theorem: Computing WL-dimension NP-hard}
is proved by combining the known relationship between the Weisfeiler-Leman dimension
of CFI-graphs \cite{CFI} and the tree-width of the underlying base graphs
with the  recent result that computing the tree-width of cubic graphs
is \NP-hard~\cite{twNPhard}. With the same techniques, we can also prove that
deciding \WL[k]-equivalence of graphs is \coNP-hard when the dimension \(k\) is considered
part of the input.

For the \Ptime-hardness result of the \WL[k]-identification problem in Theorem~\ref{Theorem: WL-identification P-hard},
we adapt a construction by Grohe \cite{WLEquivalencePHard} to encode
monotone boolean circuits into graphs using different types of gadgets.
This simultaneously reduces the monotone circuit value problem, which is known to be hard for polynomial time,
to the \WL[k]-equivalence and \WL[k]-identification problem.
The main difficulty was showing identification of Grohe's gadgets, specifically his so-called \emph{one-way switches}.
We give an alternative construction of these one-way switches based on the
CFI-construction \cite{CFI}.
This construction simplifies proofs and more importantly
yields graphs with \(4\)-bounded abelian color classes for every $k$.
This shows hardness for the \WL[k]-equivalence
and \WL[k]-identification problems even for graphs with \(4\)-bounded abelian color classes.

\section{The Weifeiler-Leman algorithm and coherent configurations}\label{sec:preliminaries}

\paragraph*{Preliminaries.}
For \(n\in\N\), we set \([n] \coloneqq \{1,\dots,n\}\).
For a set~$A$, the set of all \(k\)-element subsets of~$A$ is denoted by \(\binom{A}{k}\).
For two runtime-bounding functions \(f\) and \(g\) with parameters including \(\kappa\),
we write \(f\in O_\kappa(g)\) if \(f/g\) is bounded by a function of \(\kappa\).
A simple graph is a pair \(G=(V(G),E(G))\) of a set \(V(G)\) of \emph{vertices}
and a set \(E(G)\subseteq\binom{V(G)}{2}\) of undirected edges.
For a directed graph, we allow \(E(G)\subseteq V(G)^2\setminus\{(v,v)\colon v\in V(G)\}\).
For either graph type, we write \(uv\) for the edge \(\{u,v\}\) or \((u,v)\) respectively.
For a simple or directed graph \(G\), a \emph{vertex-coloring} of \(G\) is a map
\(\chi\colon V(G)\to C\) for some finite, ordered set \(C\) of colors.
Similarly, an \emph{edge-coloring} is a map \(\eta\colon E(G)\to C\).
A (vertex-)color class is a set \(\chi^{-1}(c)\) for some vertex color \(c\in C\).
If all color classes have order at most \(q\), we say that the colored graph \((G,\chi)\)
has \emph{\(q\)-bounded color classes}.

Relational structures are a higher-arity analogue of graphs.
Formally, a \emph{\(k\)-ary relational structure} \(\relstruc{A}\)
is a tuple \((V(\relstruc{A}),R_1,\dots,R_\ell)\) of vertices \(V(\relstruc{A})\)
and relations \(R_i\subseteq V(\relstruc{A})^{r_i}\) with \(r_i\leq k\).
The number \(r_i\) is the \emph{arity} of the relation \(R_i\).
We again allow relational structures to come with a vertex-coloring and define
\(q\)-bounded color classes as before.

An \emph{isomorphism} between graphs \(G\) and \(H\) is a bijection
\(\phi\colon V(G)\to V(H)\) such that \(uv\in E(G)\) if and only if \(\phi(u)\phi(v)\in E(H)\).
In this case \(G\) and \(H\) are \emph{isomorphic} and we write \(G\simeqq H\).
An isomorphism between edge- or vertex-colored graphs must also preserve the vertex- and edge-colors.
Similarly, an isomorphism between (vertex-colored) relational structures is a (color-preserving)
bijection between the vertex sets that preserves all relations and their complements.
An automorphism is an isomorphism from a structure to itself. The set of automorphisms
of a structure forms a group, and we say that a graph or relational structure \(\relstruc{A}\)
has \emph{abelian color classes} if for every color class \(C\), the induced substructure
\(\relstruc{A}[C]\) has an abelian automorphism group.
\paragraph*{The Weisfeiler-Leman algorithm.}
For every \(k\geq 2\), the \(k\)-dimensional Weisfeiler-Leman algorithm (\WL[k])
computes an isomorphism-invariant coloring of \(k\)-tuples of vertices
of a given graph~$G$ via an iterative refinement process.
Initially, the algorithm colors each \(k\)-tuple
according to its \emph{isomorphism type},
i.e., $\vec{x}=(x_1,\dots,x_k),\vec{y}=(y_1,\dots,y_k)\in V(G)^k$
get the same color if and only if mapping $x_i \mapsto y_i$ for every $i\in[k]$ is an isomorphism of
the induced subgraphs $G[\{x_1,\dots,x_k\}]$ and $G[\{y_1,\dots, y_k\}]$.
In each iteration, this coloring is refined as follows:
if \(\chi^G_r \colon V(G)^k \to C_i\) is the coloring obtained after \(i\) refinement rounds,
the coloring \(\chi^G_{r+1}\colon V(G)^k \to C_{i+1}\) is defined as
\(\chi^G_{r+1}(\vec{x})\coloneqq(\chi^G_r(\vec{x}),M_\vec{x}^i)\), where
\[ M_\vec{x}^r=\left\{\!\!\left\{
\Bigl(\chi^G_r\bigl(\vec{x}\frac{y}{1}\bigr),
      \dots,
      \chi^G_r\bigl(\vec{x}\frac{y}{k}\bigr)\Bigr)
\colon y\in V(G)
\right\}\!\!\right\}\]
and $\vec{x}\frac{y}{i}$ denotes the tuple obtained from $\vec{x}$
by replacing the $i$-th entry by $y$.
If \(\chi^G_{r+1}\) does not induce a finer color partition on $V(G)^k$ than \(\chi^G_r\),
the algorithm terminates and returns the stable coloring~\(\chi^G_\infty \coloneqq \chi^G_r\).
This must happen before the \(n^k\)-th refinement round.

We say that \WL[k] \emph{distinguishes
two \(k\)-tuples $\vec{x},\vec{y} \in V(G)^k$} if $\chi^G_\infty(\vec{x}) \neq \chi^G_\infty(\vec{y})$
and that \WL[k] \emph{distinguishes two \(\ell\)-tuples $\vec{x},\vec{y} \in V(G)^\ell$} for \(\ell<k\)
if \WL[k] distinguishes the two \(k\)-tuples we get by repeating the last entries of $\vec{x}$ respectively $\vec{y}$.
Finally, \emph{\WL[k] distinguishes two graphs}~$G$ and~$H$
if there is a color $c$ such that
\[\left|\left\{\vec{x} \in V(G) : \chi_\infty^G(\vec{x}) = c\right\}\right| \neq \left|\left\{\vec{x} \in V(H) \colon \chi_\infty^H(\vec{x}) = c\right\}\right|.\]
Otherwise,~$G$ and~$H$ are \emph{\WL[k]-equivalent} and we write $G \equiv_{\WL[k]} H$.
A graph~$G$ is \emph{identified} by \WL[k]
if \WL[k] distinguishes~$G$ from every other non-isomorphic graph.
Every \(n\)-vertex graph is identified by \WL[n], and the least number~\(k\) such that \WL[k] identifies~\(G\) is called the \emph{Weisfeiler-Leman dimension} of~\(G\),
denoted by \(\WLdim(G)\).

If \(c_1,\dots,c_k\) are colors assigned by the stable coloring \(\chi^G_r\),
then the multiplicity of the tuple \((c_1,\dots,c_k)\) in the multiset
\(M^r_{\vec{x}}\) is given by
\[p^\vec{x}_{c_1,\dots,c_k}\coloneqq\left|\left\{y\in V(G)\colon
	\chi^G_r\left(\vec{x}\frac{y}{i}\right)=c_i\text{ for all } i\in[k]\right\}\right|.\]
These numbers, which are called \emph{intersection numbers}, fully determine
the multiset \(M^r_{\vec{x}}\).
In the theoretical study of the Weisfeiler-Leman algorithm,
these numbers are often quite handy to work with compared to the multiset
view from the definition.
For example, the statement that a coloring \(\chi\) is stable under \WL[k]-refinement
can be expressed as the intersection numbers
\(p^{\vec{x}}_{c_1,\dots,c_k}\) being determined by \(\chi(\vec{x})\)
and not depending on \(\vec{x}\) itself.
This leads to the notion of \emph{coherent configurations}.

As every coloring of \(k\)-tuples also induces a coloring of \(\ell\)-tuples for 
\(\ell\leq k\) by repeating the last entry, \WL[k] is at least
as powerful in distinguishing graphs as \WL[\ell] and
this hierarchy is actually strict~\cite{CFI}.
Completely analogously, \WL[k] can be applied to relational structures.
\paragraph*{Coherent configurations.}
For an introduction to (\(2\)-ary) coherent configurations we refer to~\cite{CC}
and for their connection to the Weisfeiler-Leman algorithm we refer to \cite{WL2onCCS4}.
For \(k\geq 2\), a \emph{\(k\)-ary rainbow} is a pair \((V,\rb{R})\)
of a finite set of vertices \(V\) and a partition
\(\rb{R}\) of~\(V^k\), whose elements are called \emph{basis relations},
that satisfies the following two conditions:
\begin{description}
\item[(R1)] For every basis relations \(R\in\rb{R}\), all tuples \(\vec{x},\vec{y}\in R\)
	have the same \emph{equality type}, i.e., \(x_i=x_j\) if and only if \(y_i=y_j\).
	We also call this the equality type of the relation \(R\).
\item[(R2)] \(\rb{R}\) is closed under permuting indices:
	For all basis relations \(R\in\rb{R}\)
	and permutations \(\sigma\)
	of~\([k]\),
	the set \(R^\sigma \coloneqq  \{(x_{\sigma(1)}, \dots, x_{\sigma(k)}) \colon (x_1,\dots,x_k) \in R\}\) is a basis relation.
\end{description}
Because the vertex set \(V\) is determined by the partition \(\rb{R}\), we also write
\(\rb{R}\) to denote the rainbow \((V,\rb{R})\) and in this case write \(V(\rb{R})\) for
its vertex set \(V\).

A \emph{\(k\)-ary coherent configuration} is a \(k\)-ary rainbow \(\cconf{C}\)
that is stable under \WL[k]-refinement. More formally, this means that
\begin{description}
\item[(C)] for all basis relations \(R,R_1,\dots,R_k\in\cconf{C}\),
	the \emph{intersection number}
	\[\intnum{R}{R_1,\dots,R_k}\coloneqq\left|\left\{y\in V(\cconf{C})\colon \vec{x}\frac{y}{i}\in R_i \text{ for all } i\in[k]\right\}\right|\]
	is the same for all choices of \(\vec{x}\in R\) and is thus well-defined.
\end{description}
For \(\ell\leq k\), the partition of $k$-vertex tuples of an \(\ell\)-ary relational structure
according to their isomorphism type always yields a $k$-ary rainbow.
The connection of \WL[k] and $k$-ary coherent configurations is
that the partition of \(k\)-vertex tuples of a graph according to their \WL[k]-colors
always forms a \(k\)-ary coherent configuration.

\subparagraph*{Induced configurations.}
If \(\rb{R}\) is an \(\ell\)-ary rainbow for \(\ell\leq k\), we can 
interpret \(\rb{R}\) as the \(k\)-ary rainbow \(\rb{R}|^k\) 
by partitioning \(k\)-tuples according
to the basis relations of the \(\ell\)-subtuples they contain.
Formally, let \(\sim_\rb{R}\) be the equivalence relation on \(V(\rb{R})^\ell\)
whose equivalence classes are the basis relations of \(\rb{R}\).
We define the equivalence relation \(\sim_\rb{R}^k\) on \(V(\rb{R})^k\) by writing
$\vec{x}\sim_k\vec{y}$ if and only if for all \(I\in\binom{[k]}{\ell}\) we have $\vec{x}|_I\sim_\rb{R}\vec{y}|_I$,
where $\vec{x}|_I$ is the subtuple of $\vec{x}$
for which all indices not in $I$ are deleted.
The basis relations of  \(\rb{R}|^k\) are the equivalence classes of \(\sim_\rb{R}^k\).

For every $k$-ary rainbow \(\rb{R}\), there is a unique coarsest \(k\)-ary coherent configuration  \(\CC{k}(\rb{R})\)
that is at least as fine as \(\rb{R}\) and is called the \emph{\(k\)-ary coherent closure}
of \(\rb{R}\).
For an \(\ell\leq k\) and an \(\ell\)-ary rainbow \(\rb{R}\), we also write \(\CC{k}(\rb{R})\) for \(\CC{k}(\rb{R}|^k)\).
Similarly, for an \(\ell\)-ary relational structure \(\relstruc{A}\), we write \(\CC{k}(\relstruc{A})\)
for the partition of \(V(\relstruc{A})^k\) into \(\WL[k]\)-color classes.

Every \(k\)-ary coherent configuration \(\cconf{C}\) induces the \(\ell\)-ary coherent configuration
\(\cconf{C}|_\ell\) for every \(\ell\leq k\) by considering the partition of tuples of the form
\((x_1,\dots,x_\ell,\dots,x_\ell)\in V(\cconf{C})^k\). This \(\ell\)-ary coherent configuration is called the \emph{\(\ell\)-skeleton of \(\cconf{C}\)}.
For every basis relation \(R\in\cconf{C}\) and every subset \(I\in\binom{[k]}{\ell}\) of the indices,
the set \(R_I\coloneqq\{\vec{x}|_I\colon\vec{x}\in R\}\) is a basis relation of~\(\cconf{C}|_{\ell}\)
and called the \emph{\(I\)-face} of \(R\).
For basis relations \(R\in\cconf{C}|_\ell\) and \(T\in\cconf{C}\),
we get well-defined \emph{extension numbers}
$\extnum{R}{T}\coloneqq|\{\vec{y}\in V(\cconf{C})^{k-\ell}\colon \vec{xy}\in T\}|$ for some (and every) $\vec{x}\in R$.

For \(\ell=1\), the \(1\)-skeleton yields a partition of \(V(\cconf{C})\),
whose partition classes are called \emph{fibers}. We denote the set of fibers by \(\F{\cconf{C}}\).
\(\cconf{C}\) has \emph{\(c\)-bounded fibers} if all fibers of \(\cconf{C}\) have order at most \(c\).
Between two fibers \(X\) and \(Y\), the induced configuration \(\cconf{C}|_2\)
further induces a partition \(\cconf{C}|_2[X,Y]\) of \(X\times Y\), called an \emph{interspace}.

We call a \(k\)-ary coherent configuration \(\cconf{C}\) \emph{\(\ell\)-induced} if it is the coherent closure of its \(\ell\)-skeleton,
i.e., if \(\cconf{C}=\CC{k}(\cconf{C}|_\ell)\). This is equivalent to \(\cconf{C}\) being the coherent closure
of some \(\ell\)-ary rainbow. In particular, the \(k\)-ary coherent closure of a (directed, colored) graph is \(2\)-induced and, more generally, the $k$-ary coherent closure
of an \(\ell\)-ary relational structure is \(\ell\)-induced for every \(k\geq \ell\).

For a \(k\)-ary rainbow \(\rb{R}=(V,\{R_1,\dots,R_\ell\})\),
the vertex-colored \(k\)-ary relational structure \((V,R_1,\dots,R_\ell,\chi)\)
where \(\chi\) maps every vertex to its fiber is
a \emph{colored variant} of \(\rb{R}\). Note that this requires choosing an ordering
of the basis relations; colored variants are thus not unique.

\subparagraph{Algebraic and combinatorial isomorphisms.}
There are two notions of isomorphism for two \(k\)-ary coherent configurations $\cconf{C}$ and $\cconf{D}$.
First, a \emph{combinatorial isomorphism} is a bijection \(\phi\colon V(\cconf{C})\to V(\cconf{D})\) that
preserves the partition into basis relations, i.e.,
for every basis relation \(R\in\cconf{C}\),
the mapped set $R^\phi \coloneqq \{ (\phi(x_1),\dots,\phi(x_k)) \colon (x_1,\dots,x_k) \in R\}$ is a basis relation of \(\cconf{D}\).
Combinatorial isomorphisms are thus isomorphisms between certain colored variants
of \(\cconf{C}\) and \(\cconf{D}\) and the notion also applies to rainbows.

Second, an \emph{algebraic isomorphism} is a map
\(f\colon\cconf{C}\to\cconf{D}\) between the two partitions
that preserves the intersection numbers. More formally, we require that
\begin{description}
\item[(A1)] for all \(R\in\cconf{C}\), the relations \(R\) and \(f(R)\) have the same equality type,
\item[(A2)] for all \(R\in\cconf{C}\) and permutations \(\sigma\) of $[k]$,
	we have \(f(R^\sigma)=f(R)^\sigma\), and
\item[(A3)] for all \(R,T_1,\dots,T_k\in\cconf{C}\), we have $\intnum{R}{T_1,\dots,T_k}=\intnum{f(R)}{f(T_1),\dots,f(T_k)}$,
\end{description}
but Property~(A3) already implies the former two.
Algebraic isomorphisms can be thought of as maps preserving the Weisfeiler-Leman colors
and thus as a functional perspective on Weisfeiler-Leman equivalence.
More formally, if for \(k\)-ary relational structures \(\relstruc{A}\) and \(\relstruc{B}\),
\(f\colon\CC{k}(\relstruc{A})\to\CC{k}(\relstruc{B})\) is an algebraic isomorphism
that preserves the relations of \(\relstruc{A}\) and \(\relstruc{B}\),
then \(f\) is the unique map that maps every color class of the stable coloring computed by \WL[k]
on \(\relstruc{A}\) to the corresponding color class of the stable coloring computed by \WL[k]
on \(\relstruc{B}\). In particular, we get \(\relstruc{A}\equiv_{\WL[k]}\relstruc{B}\) in this case.

If \(f\colon\cconf{C}\to\cconf{D}\) is an algebraic isomorphism, then~\(f\)
induces an algebraic isomorphism \(f|_\ell\colon\cconf{C}|_\ell\to\cconf{D}|_\ell\) for every $\ell \leq k$.
A combinatorial (respectively algebraic) \emph{automorphism of \(\cconf{C}\)} is a combinatorial
(respectively algebraic) isomorphism from~\(\cconf{C}\) to itself.
Every combinatorial isomorphism induces an algebraic isomorphism, but the converse is not true.
Algebraic isomorphisms behave nicely with coherent closures as seen in the next lemma
(the proof is analogue to the $k=2$ case~\cite[Lemma 2.4]{WL2onCCS4}):
\begin{lemma}
\label{Lemma: algebraic isomorphism and coherent closures}
Let \(\rb{R}\) be a \(k\)-ary rainbow, \(\cconf{C}=\CC{k}(\rb{R})\), and \(f\colon\cconf{C}\to\cconf{D}\) an algebraic isomorphism. Then
\begin{enumerate}
\item \(\cconf{D}=\CC{k}(\rb{R}^f)\),
	in particular, if \(\cconf{C}\) is \(\ell\)-induced, then so is \(\cconf{D}\),
\item \(f\) is fully determined by its action on basis relations in \(\rb{R}\), and
\item if \(f|_\rb{R}\) is induced by a combinatorial isomorphism \(\phi\),
	then \(\phi\) induces \(f\).
\end{enumerate}
\end{lemma}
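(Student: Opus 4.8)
I plan to derive all three parts from one transport principle for algebraic isomorphisms, together with the following property of coherent closures: \emph{every} \(k\)-ary coherent configuration that refines a rainbow \(\rb{Q}\) already refines \(\CC{k}(\rb{Q})\) --- this is immediate from the iterative construction of \(\CC{k}(\rb{Q})\), since a coherent configuration is stable under \WL[k]-refinement. Write \(f^{-1}\) for the inverse of \(f\) (also an algebraic isomorphism), and for a rainbow \(\rb{Q}\) refined by \(\cconf{C}\) set \(\rb{Q}^f\coloneqq\{R^f\colon R\in\rb{Q}\}\) with \(R^f\coloneqq\bigcup\{f(C)\colon C\in\cconf{C},\ C\subseteq R\}\). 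The transport principle reads: if \(\cconf{E}\) is a \(k\)-ary coherent configuration refined by \(\cconf{D}\), then the partition \(f^{-1}(\cconf{E})\) of \(V(\cconf{C})^k\) with classes \(\bigcup\{f^{-1}(D)\colon D\in\cconf{D},\ D\subseteq E\}\), \(E\in\cconf{E}\), is again a \(k\)-ary coherent configuration with the same intersection numbers as \(\cconf{E}\), and \(f^{-1}(\cconf{E})\) refines a rainbow \(\rb{Q}\) if and only if \(\cconf{E}\) refines \(\rb{Q}^f\). Coherence of \(f^{-1}(\cconf{E})\) follows by summing (A3) over subrelations: for \(\vec{x}\in f^{-1}(E)\), the count of \(y\) with \(\vec{x}\frac{y}{i}\in f^{-1}(E_i)\) for all \(i\) equals \(\sum_{D_i\subseteq E_i}\intnum{f^{-1}(D)}{f^{-1}(D_1),\dots,f^{-1}(D_k)}=\sum_{D_i\subseteq E_i}\intnum{D}{D_1,\dots,D_k}=\intnum{E}{E_1,\dots,E_k}\), which is independent of \(\vec{x}\); equality types and index-permutation closure transport via (A1), (A2). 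The refinement equivalence is a set-theoretic check.

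\emph{Part (1).} First verify that \(\rb{R}^f\) is a \(k\)-ary rainbow: it partitions \(V(\cconf{D})^k\) because \(\{f(C)\colon C\in\cconf{C}\}=\cconf{D}\) does, tuples in \(R^f\) inherit the equality type of \(R\) by (A1), and \((R^f)^\sigma=(R^\sigma)^f\) by (A2). Since \(\cconf{D}\) is a coherent configuration refining \(\rb{R}^f\), it refines \(\cconf{E}\coloneqq\CC{k}(\rb{R}^f)\). By the transport principle \(f^{-1}(\cconf{E})\) is a coherent configuration that refines \(\rb{R}\) (as \(\cconf{E}\) refines \(\rb{R}^f\)), hence refines \(\CC{k}(\rb{R})=\cconf{C}\). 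Conversely \(\cconf{C}=f^{-1}(\cconf{D})\) refines \(f^{-1}(\cconf{E})\) because \(\cconf{D}\) refines \(\cconf{E}\). Thus \(f^{-1}(\cconf{E})=\cconf{C}\), and applying \(f\) yields \(\cconf{E}=\cconf{D}\), i.e.\ \(\cconf{D}=\CC{k}(\rb{R}^f)\). If \(\cconf{C}\) is moreover \(2\)-induced, then \(\cconf{C}=\CC{k}(\cconf{C}|_2)\), so the statement just proven (applied with \(\cconf{C}|_2\) in place of \(\rb{R}\)) gives \(\cconf{D}=\CC{k}((\cconf{C}|_2)^f)\); since the \(2\)-skeleton isomorphism \(f|_2\) sends the \(I\)-face of \(C\) to the \(I\)-face of \(f(C)\), we get \((\cconf{C}|_2)^f=\cconf{D}|_2\), so \(\cconf{D}=\CC{k}(\cconf{D}|_2)\) is \(2\)-induced.

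\emph{Part (2).} Here "the action of \(f\) on the basis relations of \(\rb{R}\)" is the assignment \(R\mapsto R^f\), \(R\in\rb{R}\); so I must show that two algebraic isomorphisms \(f,f'\colon\cconf{C}\to\cconf{D}\) with \(R^f=R^{f'}\) for all \(R\in\rb{R}\) coincide. Equivalently, the algebraic automorphism \(g\coloneqq(f')^{-1}\circ f\) of \(\cconf{C}\), which stabilises every \(R\in\rb{R}\) setwise (for \(C\subseteq R\) we have \(f(C)\subseteq R^f=R^{f'}\), forcing \(g(C)\subseteq R\)), is the identity. Merge each \(\langle g\rangle\)-orbit of basis relations of \(\cconf{C}\) into a single class to obtain a partition \(\cconf{C}'\) coarser than \(\cconf{C}\). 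A computation in the style of the transport principle --- using that \(g\) is an algebraic automorphism, so that \(\intnum{g^m(C)}{\dots}\) rewrites via \(g^{-m}\) and a sum over an orbit is shift-invariant --- shows \(\cconf{C}'\) is a coherent configuration. As \(g\) stabilises each \(\rb{R}\)-relation, every class of \(\cconf{C}'\) lies inside a single \(\rb{R}\)-relation, so \(\cconf{C}'\) refines \(\rb{R}\), hence refines \(\CC{k}(\rb{R})=\cconf{C}\). Therefore \(\cconf{C}'=\cconf{C}\), every \(\langle g\rangle\)-orbit is trivial, and \(g=\id\).

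\emph{Part (3).} Suppose a combinatorial isomorphism \(\phi\) induces \(f|_\rb{R}\), i.e.\ \(R^\phi=R^f\) for all \(R\in\rb{R}\). Since \WL[k]-refinement is isomorphism-invariant, \(\phi\) maps \(\CC{k}(\rb{R})=\cconf{C}\) onto \(\CC{k}(\rb{R}^f)=\cconf{D}\) (the latter by Part~(1)), so \(\phi\) is a combinatorial isomorphism \(\cconf{C}\to\cconf{D}\) and induces an algebraic isomorphism \(\widehat{\phi}\colon\cconf{C}\to\cconf{D}\), \(C\mapsto C^\phi\). On each \(R\in\rb{R}\), \(\widehat{\phi}\) agrees with \(f\): \(\bigcup\{C^\phi\colon C\subseteq R\}=R^\phi=R^f=\bigcup\{f(C)\colon C\subseteq R\}\). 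By Part~(2), \(\widehat{\phi}=f\), so \(\phi\) induces \(f\). The main obstacle is Part~(1) --- establishing that \(\cconf{D}\) is \emph{exactly} \(\CC{k}(\rb{R}^f)\), not merely a coherent refinement of it --- which is where the transport principle (coherent coarsenings of \(\cconf{D}\) pull back along \(f\) to coherent coarsenings of \(\cconf{C}\)) is essential; the rainbow axioms for \(\rb{R}^f\), the orbit-merging computation, and the \(2\)-face compatibility are routine.
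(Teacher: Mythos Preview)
Your proof is correct. The paper itself does not spell out a proof but simply points to the analogous \(k=2\) statement in \cite{WL2onCCS4}; your argument is precisely the natural \(k\)-ary generalisation of that approach---the ``transport principle'' (pulling back coherent coarsenings along an algebraic isomorphism) together with the minimality of the coherent closure handles Part~(1), and the orbit-merging trick for Part~(2) is the standard way to show that an algebraic automorphism fixing a generating rainbow must be trivial.
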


A \(k\)-ary coherent configuration \(\cconf{C}\) is called \emph{separable}
if every algebraic isomorphism \(f\colon\cconf{C}\to\cconf{D}\) from \(\cconf{C}\)
is induced by a combinatorial one.
There is a close relation to the power of the Weisfeiler-Leman algorithm
(the proof is analogue to the $k=2$ case~\cite[Theorem 2.5]{WL2onCCS4}):
\begin{lemma}\label{Lemma: identification and separability}
Let \(\ell\leq k\) and \(\relstruc{A}\) be an \(\ell\)-ary relational structure. Then \(\relstruc{A}\) is identified by the \(k\)-dimensional Weisfeiler-Leman algorithm if and only if  \(\CC{k}(\relstruc{A})\) is separable.
\end{lemma}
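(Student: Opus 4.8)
The plan is to follow the strategy of \cite[Theorem~2.5]{WL2onCCS4}, which proves the case $k=2$, and lift it to arbitrary $k$. The bridge between the two sides of the equivalence is the observation that running \WL[k] on an $\ell$-ary structure $\relstruc{A}$ yields precisely the coherent configuration $\CC{k}(\relstruc{A})$, equipped with a canonical \emph{labelling} of its basis relations by the isomorphism types of the induced substructures; this labelling in particular records which relations of $\relstruc{A}$ the entries of a tuple satisfy, and it is related to the algebraic data by the fact (already used in the excerpt) that each stable color determines and is determined by the relevant intersection numbers. From this one extracts two dictionaries, both proved by invariance of the refinement procedure together with Lemma~\ref{Lemma: algebraic isomorphism and coherent closures}:
\begin{enumerate}
\item $\relstruc{A}\equiv_{\WL[k]}\relstruc{B}$ if and only if there is a \emph{labelling-respecting algebraic isomorphism} $f\colon\CC{k}(\relstruc{A})\to\CC{k}(\relstruc{B})$, i.e.\ one mapping the basis relations that encode a relation $R_i$ of $\relstruc{A}$ onto those encoding $R_i$ of $\relstruc{B}$, so that, via the canonical comparison of \WL[k]-colors across structures, $f$ matches up equally-sized color classes;
\item $\relstruc{A}\cong\relstruc{B}$ if and only if there is a labelling-respecting \emph{combinatorial} isomorphism $\CC{k}(\relstruc{A})\to\CC{k}(\relstruc{B})$, the nontrivial direction being that such a map preserves each $R_i$ after restriction to the appropriate skeleton.
\end{enumerate}

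For the implication \textbf{separable $\Rightarrow$ identified}, consider any $\relstruc{B}$ with $\relstruc{B}\equiv_{\WL[k]}\relstruc{A}$. Dictionary~(1) provides a labelling-respecting algebraic isomorphism $f\colon\CC{k}(\relstruc{A})\to\CC{k}(\relstruc{B})$, and separability makes $f$ the image of a combinatorial isomorphism $\phi$. Since $\phi$ and $f$ agree on basis relations and each $R_i$ is a union of basis relations, $\phi$ is again labelling-respecting, so $\relstruc{A}\cong\relstruc{B}$ by dictionary~(2). As $\relstruc{B}$ was arbitrary, $\relstruc{A}$ is identified by \WL[k].

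For the converse \textbf{identified $\Rightarrow$ separable}, let $f\colon\CC{k}(\relstruc{A})\to\cconf{D}$ be any algebraic isomorphism and let $\rb{R}$ be the isomorphism-type rainbow of $\relstruc{A}$, so that $\CC{k}(\relstruc{A})=\CC{k}(\rb{R})$ and, by Lemma~\ref{Lemma: algebraic isomorphism and coherent closures}, $\cconf{D}=\CC{k}(\rb{R}^f)$. Build a structure $\relstruc{B}$ with $\CC{k}(\relstruc{B})=\cconf{D}$ over the signature of $\relstruc{A}$ by declaring $f(R)$ to carry the label of $R$ for each basis relation $R\in\rb{R}$; then $f$ is labelling-respecting by construction, so $\relstruc{A}\equiv_{\WL[k]}\relstruc{B}$ by dictionary~(1). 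Identification of $\relstruc{A}$ now gives $\relstruc{A}\cong\relstruc{B}$, which by dictionary~(2) is witnessed by a labelling-respecting combinatorial isomorphism $\psi\colon\CC{k}(\relstruc{A})\to\cconf{D}$. By the choice of labels, $\psi$ agrees with $f$ on the basis relations of $\rb{R}$; hence $\psi$ induces $f$ by Lemma~\ref{Lemma: algebraic isomorphism and coherent closures}\,(3). Thus every algebraic isomorphism out of $\CC{k}(\relstruc{A})$ is induced by a combinatorial one, i.e.\ $\CC{k}(\relstruc{A})$ is separable.

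The one genuinely delicate point — and the place where I expect to spend the most care — is the precise form of the two dictionaries: a colored variant of a rainbow depends on an arbitrary ordering of its basis relations, so the statements must be phrased to be insensitive to this ordering while still tracking the original relations of $\relstruc{A}$, and in the converse direction one must ensure that the combinatorial isomorphism harvested from $\relstruc{A}\cong\relstruc{B}$ induces the \emph{given} $f$ and not some other algebraic isomorphism. Labelling $\rb{R}^f$ through $f$ itself and then invoking part~(3) of Lemma~\ref{Lemma: algebraic isomorphism and coherent closures} is exactly what settles the latter; the remaining ingredients — that intersection numbers and stable colors determine one another, that algebraic isomorphisms preserve the sizes of basis relations, and that combinatorial and algebraic isomorphisms descend to skeletons — are routine and carried out for $k=2$ in \cite{WL2onCCS4}.
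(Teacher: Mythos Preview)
Your proposal is correct and follows exactly the approach the paper intends: the paper does not spell out a proof but simply states that it is analogous to the $k=2$ case \cite[Theorem~2.5]{WL2onCCS4}, and you carry out precisely this lift, with Lemma~\ref{Lemma: algebraic isomorphism and coherent closures} playing the role it should in the converse direction. Your attention to the labelling issue (ensuring that the combinatorial isomorphism obtained from $\relstruc{A}\cong\relstruc{B}$ induces the \emph{given} $f$ rather than some other algebraic isomorphism) is exactly the point that needs care, and your resolution via part~(3) of that lemma is the right one.
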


\paragraph*{Bounded variable counting logics}
The \(k\)-dimensional Weisfeiler-Leman algorithm has an alternative characterization in terms of the distinguishing power
of some logic, namely \((k+1)\)-variable counting logic.
First-order counting logic \(\C\) is the extension of first-order logic by the \emph{counting quantifiers} \(\exists^{\geq k}\)
for all natural numbers \(k\), which state that there exist at least \(k\) distinct elements satisfying the formula that follows.
But because first-order logic has the ability to simulate the counting quantifier \(\exists^{\geq k}\) by a sequence of \(k\) usual
existential quantifiers, adding counting quantifiers does not actually increase the expressive power of first-order logic.
This situation changes when we restrict the number of variables.
For a natural number \(k\geq 2\), we define \(k\)-variable counting logic \(\C^k\) to be the fragment of \(\C\)
which only uses the variables \(x_1,\dots,x_k\). In order to not restrict the expressive power of these logics too much,
we do, however, allow \emph{requantifications}, that is, quantifications over a variable within the scope of another
quantification over the same variable. As an example, the following is a \(\C^2\)-formula
stating that
\[\forall x_1\exists x_2 \left(Ex_1x_2\land \left(\exists^{\geq 5} x_1 Ex_2x_1\right) \land \neg\exists^{\geq 6} x_1 Ex_2x_1\right),\]
which states that every vertex is adjacent to a vertex of degree \(5\).
\paragraph*{The bijective pebble game.}
The question whether \WL[k] can distinguish two graphs
\(G\) and \(H\) has another characterization in terms of the so-called \emph{bijective \((k+1)\)-pebble game}.
In this game, there are two players: Spoiler and Duplicator.
Game positions are partial maps \(\vec{g}\mapsto\vec{h}\) between \(G\) and \(H\),
where both tuples contain at most \(k+1\) elements. We also sometimes
identify such partial maps with the set \(P=\{g_i\mapsto h_i\colon i\leq|\vec{g}|\}\).
We think of these maps
as \(k+1\) pairs of corresponding pebbles placed in the two graphs.

If such a partial map is not a partial isomorphism,
i.e., not an isomorphisms on the induced subgraphs, Spoiler wins immediately.
Otherwise, at the beginning of each turn, Spoiler picks up one pebble pair, either from the board
if all \(k+1\) pairs are placed, or from the side if there are pebble pairs left. Duplicator
responds by giving a bijection \(\phi\colon V(G)\to V(H)\) between the two graphs.
Spoiler then places the pebble pair they picked up on a pair \((g,\phi(g))\) of vertices of their choice.
The game then continues in the resulting new position.

We say that Spoiler wins if the graphs have differing cardinality or
they can reach a position that is no longer a partial isomorphism (and thus win immediately).
Duplicator wins the game if they can find responses to Spoiler's moves indefinitely.

\begin{lemma}[{\cite{CFI}, \cite{Bijective_Pebble_Games}}]
\label{lem:wl-counting-logic-bijective-pebble-game}
Let \(\relstruc{A}\) and \(\relstruc{B}\) be two relational structures of arity at most \(k\), and \(\vec{a}\in V(\relstruc{A})^k\) and \(\vec{b}\in V(\relstruc{B})^k\)
two tuples of vertices. Then the following are equivalent:
{
\renewcommand{\labelenumi}{(\roman{enumi})}
\begin{enumerate}
\item Duplicator has a winning strategy in position \(\vec{a}\mapsto\vec{b}\) of the bijective \((k+1)\)-pebble game between \(\relstruc{A}\) and \(\relstruc{B}\),
\item for every \(\C^{k+1}\)-formula \(\phi(x_1,\dots,x_k)\), we have \((\relstruc{A},\vec{a})\models\phi\) if and only if \((\relstruc{B},\vec{b})\models\phi\),
\item the stable colors computed by \WL[k] for the tuples \(\vec{a}\) and \(\vec{b}\) agree.
\end{enumerate}
Further, every stable color class is definable by a single \(\C^{k+1}\)-formula.
}
\end{lemma}
In particular, the Weisfeiler-Leman dimension of a structure is precisely one less than the number of variables needed to define the structure in first-order counting logic.
\paragraph*{The CFI-construction.}\label{Section: CFI}
CFI-graphs are certain graphs with high Weisfeiler-Leman dimension \cite{CFI}.
To construct them, we start with a base graph \(G\), which is a connected simple graph, and a function \(f\colon E(G)\to\mathbb{F}_2\).
For a vertex \(v\in V(G)\), we denote the set of edges incident to \(v\) by \(E[v]\coloneqq\{uv\colon v\in N_G(v)\}\subseteq E(G)\). Now, to construct the CFI-graph \(\CFI(G,f)\), we replace each vertex \(v\in V(G)\)
by a gadget \(X_v\) which consists of inner vertices \(I_v\coloneqq \{v\}\times\{\vec{x}\in\mathbb{F}_2^{E[v]}\colon\sum\vec{x}=0\}\)
and outer vertices \(\{v\}\times\{(e,i)\colon e\in E[v], i\in\mathbb{F}_2\}\).
Inside each gadget, the inner and outer vertices each form an independent set, and an inner vertex \((v,\vec{x})\)
and outer vertex \((v,e,i)\) are connected by an edge if and only if \(\vec{x}_e=i\).
The resulting gadget for a vertex of degree \(3\) is depicted in Figure~\ref{Figure: CFI-gadget}.

\begin{figure}
\centering
\begin{tikzpicture}
\vertex[green]{x} {1,0};
\vertex[green]{x'}{2,0}; 

\vertex{k0}{0,1};
\vertex{k1}{1,1};
\vertex{k2}{2,1};
\vertex{k3}{3,1};

\vertex[red]{i} {2,2};
\vertex[red]{i'}{3,2};

\vertex[blue]{j} {0,2};
\vertex[blue]{j'}{1,2};

\draw (x) --  (k0) (x) --  (k1);
\draw (x') -- (k2) (x') -- (k3);

\draw (i) --  (k0) (i) --  (k2);
\draw (i') -- (k1) (i') -- (k3);

\draw (j) --  (k0) (j) --  (k3);
\draw (j') -- (k1) (j') -- (k2);
\end{tikzpicture}
\caption{A CFI-gadget for a vertex of degree \(3\), consisting of four inner vertices and three outer pairs}
\label{Figure: CFI-gadget}
\end{figure}
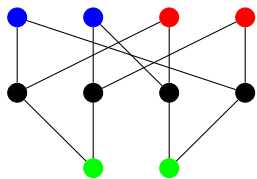

Next, we define the edge set between different gadgets. For every edge \(e=uv\in E(G)\),
we connect the outer vertices \((u,e,i)\) and \((v,e,j)\) if and only if \(i+j=f(e)\)
and add no further edges. Thus, corresponding outer vertex pairs \((u,e,\cdot)\) and \((v,e,\cdot)\)
are always connected by a matching, which is either \emph{untwisted} if \(f(e)=0\), or \emph{twisted} if \(f(e)=1\).

Finally, we define a vertex coloring on this graph. For every vertex \(v\), we turn the set~\(I_v\) of inner vertices
into a color class of size \(2^{d(v)-1}\). Moreover, we turn each outer pair \(\{(v,e,0),(v,e,1)\}\) into a color class
of size \(2\). This finishes the construction of CFI-graphs.

It turns out that for two functions \(f,g\colon E(G)\to\mathbb{F}_2\), we have \(\CFI(G,f)\simeqq\CFI(G,g)\) if and only if
\(\sum f=\sum g\), meaning that every even number of twists cancels out. Thus, we also write
\(\CFI(G,0)\) and \(\CFI(G,1)\) for the \emph{untwisted} and \emph{twisted} CFI-graphs over the base graph \(G\).

To understand the power of the Weisfeiler-Leman algorithm on CFI-graphs, it is convenient to study \emph{tree-width},
which is a graph parameter that intuitively measures how far a graph is from being a tree.
In this work, we do not need the formal definition of tree-width, and refer to~\cite{bodlaender_treewidth}.
The power of the Weisfeiler-Leman algorithm to distinguish CFI-graphs can now conveniently be expressed in terms of the tree-width of the base graphs, see \cite{WLdimvstw}.
\begin{lemma}\label{Lemma: CFI vs. tw}
For every base graph \(G\) of tree-width $\tw(G) \geq 2$, we have
\[\WLdim(\CFI(G,0))=\WLdim(\CFI(G,1))=\tw(G).\]
\end{lemma}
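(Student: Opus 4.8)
\medskip
\noindent\emph{Proof sketch (proposal).}
Write $k=\tw(G)$. The plan is to prove the two inequalities $\WLdim(\CFI(G,f))\ge k$ and $\WLdim(\CFI(G,f))\le k$ separately, for both $f\in\{0,1\}$. For the lower bound, first note that by the isomorphism criterion for CFI-graphs stated above we have $\CFI(G,0)\not\simeqq\CFI(G,1)$ since $\sum 0=0\ne 1$. Hence it suffices to show that $\WL[k-1]$ does \emph{not} distinguish $\CFI(G,0)$ from $\CFI(G,1)$; this single statement yields $\WLdim(\CFI(G,f))\ge k$ for both~$f$ at once.

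To prove this non-distinguishability I would invoke Lemma~\ref{lem:wl-bijective-pebble-game} and exhibit a winning strategy for Duplicator in the bijective $k$-pebble game between $\CFI(G,0)$ and $\CFI(G,1)$. The strategy rests on the standard CFI fact that a twist can be ``slid'' along any path of~$G$ avoiding a prescribed set $S$ of vertices (proved by composing gadget automorphisms along that path), so that whenever the pebbled base-vertices $S$ do not separate the twisted edge from the rest of the graph, the two CFI-graphs are isomorphic on the pebbled region. Identifying the current location of the (consolidated) twist with a robber, Duplicator keeps the twist away from~$S$ by playing the robber's escape strategy against the at most $k$ ``cops'' that Spoiler can deploy — possible since $G$ has helicopter cop number $\tw(G)+1>k$, so the robber survives $k$ cops. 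Concretely, after Spoiler lifts a pebble, Duplicator first slides the twist along an $S$-avoiding path to a safe edge and then answers with a bijection that is an isomorphism everywhere except on that single far-away twisted matching; such a bijection restricts to a partial isomorphism on all $\le k$ pebble pairs no matter where Spoiler places the lifted pair, so Duplicator never loses.

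For the upper bound I would show that $\WL[k]$ identifies $\CFI(G,f)$ in two steps. First, $\WL[k]$ distinguishes $\CFI(G,0)$ from $\CFI(G,1)$: dually to the lower bound, Spoiler fixes a width-$k$ tree decomposition of~$G$ — equivalently, a winning strategy for $k+1$ cops — and plays it, pebbling the gadgets of a bag and moving this bag down the decomposition; this forces the residual twist of the already-separated part of the graph, until at a leaf bag the twist is pinned to a single gadget, where the disagreement on the total twist forces Duplicator into a non-partial-isomorphism. (Alternatively one could argue via the homomorphism-count characterisation \cite{wl_vs_homomorphisms}, exhibiting a treewidth-$k$ graph with different homomorphism counts into the two CFI-graphs.) Second, I would show that every graph $H$ with $H\equiv_{\WL[k]}\CFI(G,f)$ is isomorphic to $\CFI(G,g)$ for some~$g$: using the vertex-colouring, the stable coloring of $\CFI(G,f)$ already recovers the fibers (the inner-vertex class of~$v$, of size $2^{d(v)-1}$, and the outer pairs of size~$2$) and the interspaces between them, hence recovers the base graph~$G$ up to isomorphism and forces $H$ to be CFI-shaped over~$G$; combined with the first step this gives $\sum g=\sum f$, so $H\simeqq\CFI(G,f)$.

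The main obstacle is this last reconstruction step: one must verify that no graph can be $\WL[k]$-equivalent to $\CFI(G,f)$ without literally being a CFI-graph over~$G$, i.e., determine carefully the shape of all fibers and interspaces of the coherent configuration $\CC{k}(\CFI(G,f))$ — phrased via Lemma~\ref{Lemma: identification and separability}, that this configuration is separable. A secondary, bookkeeping difficulty is pinning down the pebble/cop count so that the threshold comes out exactly as $\tw(G)$ rather than $\tw(G)\pm 1$; this is also where the hypothesis $\tw(G)\ge 2$ is used, since for base graphs of tree-width at most~$1$ the CFI-gadget structure alone already pushes the Weisfeiler-Leman dimension up to~$2$, so the claimed equality becomes a strict inequality there.
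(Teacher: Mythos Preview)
Your proposal is correct and follows essentially the same route as the paper's proof. Two remarks on the points you flag as obstacles: the reconstruction step is handled in the paper by the simpler observation that already $\WL[2]$ (not $\WL[k]$) distinguishes CFI-graphs over~$G$ from every non-CFI graph by recognising the gadget structure locally, so no appeal to separability of $\CC{k}(\CFI(G,f))$ is needed; and the exact pebble/cop bookkeeping is dealt with by first passing to the inner-vertex-only variant $\CFI'(G,f)$ and then invoking the precise threshold result from~\cite{WLdimvstw}.
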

\begin{proof}
	We first show that CFI-graphs can be easily distinguished from other graphs.
	\begin{claim}\label{clm:wl-identifies-cfi}
		For  every base graph $G$,
		\WL[2] distinguishes CFI-graphs over $G$ from all other graphs.
	\end{claim}
	\begin{claimproof}
		Let $G$ be a base graph, $i \in \mathbb{F}_2$,
		and $H$ be some graph not isomorphic to $\CFI(G,0)$ or $\CFI(G,1)$.
		If $H$ has a color class of different size than the one of the same color in $\CFI(G,i)$,
		then \WL[2] certainly distinguishes $H$ and  $\CFI(G,i)$.
		Let $u \in V(G)$ be of degree $d$. Let $c$ be the color of inner vertices of the gadget of $u$ and $c_1,\dots,c_d$ be the colors of outer vertices of this gadget.
		Then every vertex of color $c$ has exactly one neighbor of each $c_i$ and no others.
		For each $i \in [d]$, the two vertices of color $c_i$ have the same number of neighbors in $c$, no common neighbor in $c$, and no neighbor in all $c_j$.
		Finally, for every two $c$-vertices~$u$ and~$v$,
		the number of $i\in[d]$, for which $u$ and $v$ have a different $c_i$-neighbor, is even.
		All these conditions can easily be recognized by \WL[2]
		and hence \WL[2] identifies all CFI-gadgets.
		If $e=uv \in E(G)$, let $c$ and $d$ be the color classes for the outer vertices of the gadgets for $u$ respectively $v$ for the edge $e$.
		Then $c$ and $d$ are of size two and connected by a matching,
		which is also easily identified by \WL[2].
		Hence, \WL[2] distinguishes $\CFI(G,i)$ from $H$.
	\end{claimproof}
	
	We present another modified CFI-construction.
	For a base graph $G$ and a function $f\colon E(G) \to \mathbb{F}_2$,
	let $\CFI'(G,f)$ be the graph obtained from $\CFI(G,f)$ in the following way:
	The vertex set of $\CFI'(G,f)$ is the set of all inner vertices of all CFI-gadgets in $\CFI(G,f)$.
	Two inner vertices $(u,\vec{x})$ and $(v,\vec{y})$ are adjacent in $\CFI'(G,f)$
	whenever $uv\in E(G)$ and $\vec{x}_{uv} + \vec{y}_{uv} = f(uv)$.
	This is equivalent to that there is a path of length $3$ between $(u,\vec{x})$ and $(v,\vec{y})$ 
	(namely the path $(u,\vec{x}), (u, uv, \vec{x}_{uv}), (v,uv, \vec{y}_{uv}), (v,\vec{y})$).
	These modified CFI-construction shares all important properties with the one presented here,
	in particular,  $\CFI'(G,f) \simeqq  \CFI'(G,g)$ if and only if $\sum f = \sum g$.
	The following claim is well-known (see~\cite{CFI,CFIvstw,WLdimvstw}):
	\begin{claim}\label{clm:cfi-inner-iff-inner-and-outer}
		For every $k\geq3$ and every base graph $G$,
		Spoiler wins the bijective $k$-pebble game on $\CFI'(G,0)$ and $\CFI'(G,1)$ if and only if 
		Spoiler wins the bijective $k$-pebble game on $\CFI(G,0)$ and $\CFI(G,1)$.
	\end{claim}
	The main insight to prove this lemma is that it is never beneficial for Spoiler to place a pebble on an outer vertex apart from the very end of the game because an outer vertex $(u,e,i)$ is uniquely identified by an inner vertex $(u, \vec{x})$ such that $\vec{x}_e = i$.
	
	\begin{claim}[\cite{WLdimvstw}]\label{clm:cfi-inner-tree-width}
		For every $k\geq 3$ and every base graph $G$,
		Spoiler wins the bijective $(k+1)$-pebble game on  $\CFI'(G,0)$ and $\CFI'(G,1)$
		if and only if $G$ has tree-width $\tw(G)\leq k$.
	\end{claim}
	
	Now finally let $G$ be a base graph of tree-width $\tw(G)\geq 2$.
	By Claims~\ref{clm:cfi-inner-iff-inner-and-outer} and~\ref{clm:cfi-inner-tree-width} and Lemma~\ref{lem:wl-counting-logic-bijective-pebble-game},
	\WL[k] distinguishes $\CFI(G,0)$ and $\CFI(G,1)$ if and only if $k \geq \tw(G)$.
	This implies $\WLdim(\CFI(G,i)) \geq \tw(G)$ for both $i \in \mathbb{F}_2$.
	By Claim~\ref{clm:wl-identifies-cfi},
	\WL[2] distinguishes CFI-graphs over $G$ from all other graphs.
	Since there are only two non-isomorphic CFI-graphs over $G$
	and \WL[\tw(G)] distinguishes them,
	\WL[\tw(G)] identifies CFI-graphs over $G$ and thus
	$\WLdim(\CFI(G,i)) \leq \tw(G)$ for both $i \in \mathbb{F}_2$.
	This implies $\WLdim(\CFI(G,0))=\WLdim(\CFI(G,1))=\tw(G)$.
\end{proof}

\section{Deciding identification for graphs with 5-bounded color classes}
\label{Section: deciding identification for graphs of color multiplicity 5}
As recently shown~\cite{WL2onCCS4}, identification
of a given graph with \(5\)-bounded color classes by \(\WL[2]\)
is polynomial-time decidable.
We extend this result to arbitrary dimensions of the Weisfeiler-Leman algorithm.
We adapt the approach of \cite{WL2onCCS4} and solve the separability for
\(2\)-induced \(k\)-ary coherent configurations with \(5\)-bounded fibers instead.
We generalize the elimination of interspaces containing a matching and interspaces of type \(2K_{1,2}\)
in order to reduce to \emph{star-free} \(2\)-induced \(k\)-ary coherent configurations.
By characterizing separability using certain automorphism groups, we provide a new reduction
of the separability problem for such configurations to the isomorphism problem
for graphs with bounded color classes, which can be solved in polynomial time.
\subsection{Functional basis relations and disjoint unions of stars}\label{sec:stars}

Let \(\cconf{C}\) be a \(k\)-ary coherent configuration and \(X,Y\in \F{\cconf{C}}\)
two distinct fibers.
A \emph{disjoint union of stars} between \(X\) and \(Y\)
is a basis relation \(S\in\cconf{C}|_2[Y,X]\) such that
every vertex in~\(Y\) is incident to exactly one outgoing edge in~\(S\).
If no interspace of \(\cconf{C}\) contains a disjoint union of stars,
\(\cconf{C}\) is called \emph{star-free}.

In this section, we want to show that whenever
\(\cconf{C}|_2[X,Y]\) contains a disjoint union of stars,
then \(\cconf{C}\) is separable if and only if
\(\cconf{C}\setminus X\coloneqq\cconf{C}[V(\cconf{C})\setminus X]\)
is separable. This allows us to reduce the separability problem to the class
of star-free \(k\)-ary coherent configurations.
As we will see, this is possible because the configuration \(\cconf{C}\)
is fully determined by the configuration \(\cconf{C}\setminus X\),
together with the information how the disjoint union of stars in \(\cconf{C}[X,Y]\)
attaches to \(Y\).

Slightly more generally, we call a binary basis relation \(S\in\cconf{C}|_2\)
\emph{functional} if every vertex of \(\cconf{C}\) is incident to at most
one outgoing edge in \(S\). Besides disjoint unions of stars, examples
of such basis relations are matchings or directed cycles within a fiber.

For a functional basis relation \(S\in\cconf{C}|_2\), we define a map
\begin{align*}
\nu_S\colon V(\cconf{C}) &\to V(\cconf{C}),\\
y&\mapsto
	\begin{cases}
		x &\text{if } yx\in S,\\
		y &\text{if no such } x \text{ exists}.
	\end{cases}
\end{align*}
Moreover, we denote for every \(I\subseteq[k]\) by \(\nu_S^I\colon V(\cconf{C})^k\to V(\cconf{C})^k\)
the map defined via \(\nu_S^I(y_1,\dots,y_k) \coloneqq (z_1,\dots,z_k)\),
where $z_i \coloneqq \nu_S(y_i)$ if $i\in I$ and $z_i \coloneqq y_i$ for every $i\in[k]\setminus I\).

\begin{lemma}\label{lem:stars:definable_maps_are_algebraic}
Let \(\cconf{C}\) be a \(k\)-ary coherent configuration,
and \(S\in\cconf{C}|_2\) a functional basis relation.
For every subset \(I\subseteq[k]\) and every basis relation \(R\in\cconf{C}\),
the set \(\nu_S^I(R)\) is also a basis relation of \(\cconf{C}\).

Moreover, for every algebraic isomorphism
\(f\colon\cconf{C}\to\cconf{D}\),
the basis relation \(f(S)\) is also functional in \(\cconf{D}\) 
and \(f\circ \nu_S^I=\nu_{f(S)}^I\circ f\) for every \(I\subseteq[k]\).
\begin{proof}
If \(I=\{i_1,\dots,i_\ell\}\), then we can write
\(\nu_S^I=\nu_S^{\{i_1\}}\circ\dots\circ\nu_S^{\{i_\ell\}}\).
As both claims of the lemma are compatible with compositions, it thus suffices to consider the case that \(I=\{i\}\) is a singleton.

Now, consider a colored variant \(\relstruc{C}\) of \(\cconf{C}\).
Because the partition \(\cconf{C}\) is stable under \WL[k]-refinement,
it is also equal to the partition into \(C^{k+1}\)-types by Lemma~\ref{lem:wl-counting-logic-bijective-pebble-game}.
In particular, for every formula \(\phi(x_1,\dots,x_\ell)\in C^{k+1}\) with \(\ell\leq k\),
the set \(\eval{\phi}_\relstruc{C}\coloneqq\{\vec{x}\in V(\cconf{C})^\ell\colon \relstruc{C},\vec{x}\models\phi\}\)
is a union of basis relations and conversely, every union of basis relations
is of the form \(\eval{\phi}_\relstruc{C}\) for an appropriate formula \(\phi\).

Thus, we find a formula \(\nu_S(x,y)\) which encodes the graph of the function \(\nu_S\),
and further, we find for every basis relation \(R\in\cconf{C}\), a formula \(\phi_R\) encoding \(R\).
But then, the formula
\(\psi_{\nu_S^i(R)}\coloneqq\exists x_{k+1} \nu_S(x_{k+1},x_i)\land\phi_R\frac{x_{k+1}}{x_i}\)
is a \(C^{k+1}\)-formula defining the set \(\nu_S^i(R)\). Thus, \(\nu_S^i(R)\) is again a union of basis relations.
If it was a proper union of basis relations, then there would exist some basis relation \(T\subsetneq \nu_S^i(R)\).
But then, the formula
\(\psi_R\land\exists x_{k+1} \nu_S(x_i,x_{k+1})\land \phi_T\frac{x_{k+1}}{x_i}\)
would define a proper subset of \(R\), which is impossible.
Thus, \(\nu_S^i(R)\) is a basis relation.

For the second claim, consider an algebraic isomorphism \(f\colon\cconf{C}\to\cconf{D}\)
and construct a colored variant of \(\cconf{D}\) in such a way that
\(f\) becomes a color-preserving map from \(\relstruc{C}\) to \(\relstruc{D}\).
Then \(f\) also preserves Weisfeiler-Leman colors and thus \(C^{k+1}\)-types.
In particular, this means that \(f(\eval{\phi}_\relstruc{C})=\eval{\phi}_{\relstruc{D}}\)
for every formula \(\phi\). In particular, for every basis relation \(R\in\cconf{C}\), we get
\[f(\nu_S^i(R))=f(\eval{\psi_{\nu_S^i(R)}}_\relstruc{C})=\eval{\psi_{\nu_S^i(R)}}_\relstruc{D}
=\nu_{f(S)}^i(f(R)),\]
which is what we wanted to show.
\end{proof}
\end{lemma}

From now on, we fix \(S\) to be a disjoint union of stars in \(\cconf{C}[X,Y]\).
In this case, \(\nu_S\) maps the fiber \(Y\) onto the fiber \(X\)
while fixing every other fiber.
Then, Lemma~\ref{lem:stars:definable_maps_are_algebraic} in particular implies that the basis relations of \(\cconf{C}\)
are fully determined by those of \(\cconf{C}\setminus X\).
Indeed, if \(R\) is a basis relation of \(\cconf{C}\)
and \(\vec{x}\in R\), let \(I\subseteq[k]\) be the set of \(X\)-components
of \(\vec{x}\). Then, for every \(i\in I\), we pick a \(\nu_S\)-preimage
\(y_i\) of \(x_i\). If we now define \(\vec{y}\coloneqq\vec{x}\left(\frac{y_i}{i}\right)_{i\in I}\),
we get \(\vec{x}=\nu_S^I(\vec{y})\) and thus~%
\(R=T^{\nu_S^I}\), where \(T\in\cconf{C}\setminus X\) is the basis relation containing \(\vec{y}\).

Together with Lemma \ref{lem:stars:definable_maps_are_algebraic},
this implies that the basis relations of \(\cconf{C}\) are precisely
those of the form \(T^{\nu_S^I}\) for \(T\in\cconf{C}\setminus X\) and \(I\subseteq[k]\).
Moreover, we can assume that \(I\) is a subset of the set of \(Y\)-components of \(T\).
These representations of basis relations of \(\cconf{C}\) are, however, not necessarily unique.

Before we are ready to eliminate all interspaces with a disjoint union of stars, we need one more technical observation
\begin{lemma}\label{Lemma: Definition of EqS}
If \(S\in \cconf{C}|_2[X,Y]\) is a disjoint union of stars, then the relation
\[\Eq{S}\coloneqq \{vw\in Y^2\colon \nu_S(v)=\nu_S(w)\}\]
is a union of \(2\)-ary basis relations.
Moreover, for every algebraic isomorphism~%
\(f\colon\cconf{C}\to \cconf{D}\),
we have \(f(\Eq{S})=\Eq{f(S)}\).
\begin{proof}
It suffices to assume that \(\cconf{C}\) is a \(2\)-ary coherent configuration.
Then, \(vw\in \Eq{S}\) if and only if for some \(z\) we have \(zv,zw\in S\).
If \(E\) is the basis relation of \(\cconf{C}\) containing \(vw\),
this is equivalent to
\(\intnum{E}{S,S^{-1}}\geq 1\),
where \(S^{-1}=\{yx\colon xy\in S\}\).
Because this property depends only on the basis relation containing \(vw\) and not on \(vw\) itself,
the first claim follows.

For the second claim, note that \(R\subseteq \Eq{S}\)
if and only if \(\nu_S^{\{1\}}(R)=S\).
Lemma~\ref{lem:stars:definable_maps_are_algebraic}
now implies
\[f(S)=(f\circ\nu_S^{\{1\}})(R)=(\nu_{f(S)}^{\{1\}}\circ f)(R)
=\nu_{f(S)}^{\{1\}}(f(R))\]
and thus \(f(R)\subseteq \Eq{f(S)}\).
\end{proof}
\end{lemma}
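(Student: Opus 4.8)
The plan is to reduce both statements to the \(2\)-skeleton \(\cconf{C}|_2\), which is a \(2\)-ary coherent configuration: since \(\Eq{S}\subseteq Y^2\) and its definition refers only to the \(2\)-ary relation \(S\) (via the map \(\nu_S\) it induces), it suffices to prove the lemma under the assumption that \(\cconf{C}\) itself is \(2\)-ary. The key observation is that membership in \(\Eq{S}\) is visible on the level of basis relations: for \(v,w\in Y\) we have \(vw\in\Eq{S}\) exactly when \(v\) and \(w\) share their (unique) \(S\)-predecessor, i.e., when there is \(z\in X\) with \((z,v),(z,w)\in S\). Writing \(S^{-1}\) for the transpose of \(S\), this is precisely the statement that \(\intnum{E}{S,S^{-1}}\geq 1\), where \(E\in\cconf{C}\) is the basis relation containing \(vw\) (and this number is in fact always \(0\) or \(1\), as \(S\) is a disjoint union of stars). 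By coherence (condition~(C)) the value of \(\intnum{E}{S,S^{-1}}\) depends only on \(E\), so \(\Eq{S}\) is exactly the union of those basis relations \(E\) with \(\intnum{E}{S,S^{-1}}\geq 1\); in particular it is a union of \(2\)-ary basis relations. Note that \(\intnum{E}{S,S^{-1}}\geq 1\) already forces the fibers of \(E\) to lie in \(Y\), so no separate case distinction is needed.

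For the second claim, let \(f\colon\cconf{C}\to\cconf{D}\) be an algebraic isomorphism. By Lemma~\ref{Lemma: Map defined by disjoint union of stars preserves basis relations}, \(f(S)\) is again a disjoint union of stars, so \(\Eq{f(S)}\) is well-defined, and passing to \(2\)-skeletons we may again assume everything is \(2\)-ary. Algebraic isomorphisms commute with transposition (apply~(A2) to the swap of the two coordinates), hence \(f(S^{-1})=f(S)^{-1}\), and they preserve intersection numbers by~(A3). Therefore, for every basis relation \(E\in\cconf{C}\), we have \(\intnum{E}{S,S^{-1}}\geq 1\) if and only if \(\intnum{f(E)}{f(S),f(S)^{-1}}\geq 1\), which by the characterization above says \(E\subseteq\Eq{S}\) if and only if \(f(E)\subseteq\Eq{f(S)}\). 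Taking the union over all such \(E\) yields \(f(\Eq{S})=\Eq{f(S)}\). Alternatively, one can check that for a basis relation \(E\subseteq Y^2\) one has \(E\subseteq\Eq{S}\) iff \(\nu_S^{\{1\}}(E)=S\), and then apply the identity \(f\circ\nu_S^{\{1\}}=\nu_{f(S)}^{\{1\}}\circ f\) from Lemma~\ref{Lemma: Map defined by disjoint union of stars preserves basis relations} directly.

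I do not expect a genuine obstacle: the lemma is bookkeeping on top of the machinery already in place. The only points that need a little care are that the reduction to the \(2\)-skeleton is harmless — neither \(\Eq{S}\) nor the action of \(f\) on it sees any higher-arity information — and that the characterization of \(\Eq{S}\) is phrased purely through intersection numbers, which is exactly what makes its invariance under algebraic isomorphisms automatic.
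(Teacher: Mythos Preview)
Your proof is correct and essentially matches the paper's. The first claim is handled identically via the intersection-number characterization \(\intnum{E}{S,S^{-1}}\geq 1\); for the second claim your primary argument applies (A2) and (A3) directly to this characterization, whereas the paper routes through the identity \(f\circ\nu_S^{\{1\}}=\nu_{f(S)}^{\{1\}}\circ f\) from the preceding lemma---an approach you also mention as your alternative.
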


\noindent We are now ready to prove one of the implications of the claimed equivalence:
\begin{lemma}\label{Lemma: Elimination of disjoint unions of stars I}
Let \(\cconf{C}\) be a \(k\)-ary coherent configuration,
\(X,Y\in \F{\cconf{C}}\) two distinct fibers
and \(S\in\cconf{C}|_2[X,Y]\) a disjoint union of stars between \(X\) and \(Y\).
If \(\cconf{C}\setminus X\) is separable, then so is \(\cconf{C}\).
\begin{proof}
Assume that \(\cconf{C}\setminus X\) is separable
and let \(f\colon\cconf{C}\to\cconf{D}\)
be an algebraic isomorphism.
Then \(f\) restricts to an algebraic isomorphism~\(f_0\colon\cconf{C}\setminus X\to\cconf{D}\setminus f(X)\),
which is induced by a combinatorial isomorphism~%
\(\phi_0\colon\cconf{C}\setminus X\to\cconf{D}\setminus f(X)\)
because \(\cconf{C}\setminus X\) is separable.
We extend this map to all of \(\cconf{C}\) by defining
\begin{align*}
\phi\colon V(\cconf{C})&\to V(\cconf{D}),\\
x&\mapsto\begin{cases}
\phi_0(x)&\text{if } x\notin X,\\
(\nu_{f(S)}\circ\phi_0)(y)&\text{if } x\in X \text{ and } x=\nu_S(y) \text{ for some } y\in Y.
\end{cases}
\end{align*}
We claim that \(\phi\) is a well-defined combinatorial isomorphism
inducing \(f\).

To see well-definedness, we need to argue that the definition of \(\phi(x)\) for vertices \(x\in X\)
does not depend on the choice of its \(\nu_S\)-preimage \(y\in Y\).
But indeed, if~\(\nu_S(y)=\nu_S(y')\), then \(yy'\in \Eq{S}\).
By Lemma~\ref{Lemma: Definition of EqS}, it holds that \(f(\Eq{S})=\Eq{f(S)}\).
As \(\phi_0\) induces \(f_0\), we get \(\phi_0(y)\phi_0(y')\in \Eq{f(S)}\).
But this means that \(\nu_{f(S)}(\phi_0(y))=\nu_{f(S)}(\phi_0(y'))\) which implies that~\(\phi\) is well-defined.
Moreover, it is a bijection as \(\phi_0\) is a bijection
and \(\phi|_X\colon X\to f(X)\) is a bijection.
To show that \(\phi\) induces \(f\), consider a basis relation \(\nu_S^I(R)\in\cconf{C}\)
where \(R\in\cconf{C}\setminus X\) and \(I\) is a subset of the \(Y\)-components of \(X\).
Applying \(\phi\) to this basis relation yields
\begin{align*}
\phi(\nu_S^I(R))
&=\left(\phi\circ\nu_S^I\right)(R).\\
	\intertext{Using the definition of \(\phi\), this is equal to}
&=\left(\nu_{f(S)}^I\circ \phi_0\right)(R)
=\left(\nu_{f(S)}^I\circ f_0\right)(R)
=\left(\nu_{f(S)}^I\circ f\right)(R),\\
	\intertext{which, by Lemma~\ref{Lemma: Definition of EqS}, is equal to}
&=\left(f\circ \nu_{S}^I\right)(R)
=f \left(\nu_S^I(R)\right).
\end{align*}
This proves that \(\phi\) induces \(f\) and thus in particular that \(\phi\) is a combinatorial isomorphism.
Thus, \(\cconf{C}\) is separable.
\end{proof}
\end{lemma}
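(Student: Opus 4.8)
The plan is to verify the definition of separability for $\cconf{C}$ directly, bootstrapping from separability of $\cconf{C}\setminus X$. So let $f\colon\cconf{C}\to\cconf{D}$ be an arbitrary algebraic isomorphism; the goal is to exhibit a combinatorial isomorphism that induces it. First I would apply Lemma~\ref{Lemma: Map defined by disjoint union of stars preserves basis relations} to see that $f(S)$ is again a disjoint union of stars in $\cconf{D}$ (between the fibers $f(X)$ and $f(Y)$), and that $f$ restricts to an algebraic isomorphism $f_0\colon\cconf{C}\setminus X\to\cconf{D}\setminus f(X)$ (the restricted intersection numbers are computed with basis relations avoiding $X$, so they agree with the ones in $\cconf{C}$). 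Since $\cconf{C}\setminus X$ is assumed separable, $f_0$ is induced by some combinatorial isomorphism $\phi_0\colon\cconf{C}\setminus X\to\cconf{D}\setminus f(X)$.

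The key idea is the structural observation recorded right after Lemma~\ref{Lemma: Map defined by disjoint union of stars preserves basis relations}: every basis relation of $\cconf{C}$ is of the form $T^{\nu_S^I}$ for some $T\in\cconf{C}\setminus X$ and some index set $I$ (which we may take among the $Y$-components of $T$), and symmetrically for $\cconf{D}$ with $f(S)$ in place of $S$. This suggests reconstructing the desired isomorphism by extending $\phi_0$ along $\nu_S$: define $\phi\colon V(\cconf{C})\to V(\cconf{D})$ to agree with $\phi_0$ off $X$, and, for $x\in X$, write $x=\nu_S(y)$ for some $y\in Y$ and set $\phi(x)\coloneqq\nu_{f(S)}(\phi_0(y))$. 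I then claim $\phi$ is a well-defined bijection inducing $f$.

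The step I expect to be the main obstacle is well-definedness on $X$: a single $x\in X$ may have many $\nu_S$-preimages in $Y$, and one must check they all yield the same value. This is precisely where Lemma~\ref{Lemma: Definition of EqS} is needed: if $\nu_S(y)=\nu_S(y')$ then $yy'\in\Eq{S}$, which is a union of $2$-ary basis relations lying inside $\cconf{C}\setminus X$; since $\phi_0$ induces $f_0$ and $f_0(\Eq{S})=\Eq{f(S)}$, we get $\phi_0(y)\phi_0(y')\in\Eq{f(S)}$, hence $\nu_{f(S)}(\phi_0(y))=\nu_{f(S)}(\phi_0(y'))$, as required. Once well-definedness holds, bijectivity is immediate: $\phi$ coincides with the bijection $\phi_0$ off $X$ and restricts to a bijection $X\to f(X)$ because $\nu_{f(S)}\circ\phi_0$ does.

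Finally, to see that $\phi$ induces $f$, I would take a basis relation of $\cconf{C}$ in the form $\nu_S^I(R)$ with $R\in\cconf{C}\setminus X$ and compute
\[
\phi\bigl(\nu_S^I(R)\bigr)=\bigl(\phi\circ\nu_S^I\bigr)(R)=\bigl(\nu_{f(S)}^I\circ\phi_0\bigr)(R)=\bigl(\nu_{f(S)}^I\circ f\bigr)(R)=\bigl(f\circ\nu_S^I\bigr)(R)=f\bigl(\nu_S^I(R)\bigr),
\]
where the second equality holds because $\nu_S^I$ only moves $Y$-coordinates to their $\nu_S$-images and $\phi$ was defined exactly to match $\nu_{f(S)}^I\circ\phi_0$ there, the third uses that $\phi_0$ induces $f_0=f|_{\cconf{C}\setminus X}$, and the fourth is the intertwining relation $f\circ\nu_S^I=\nu_{f(S)}^I\circ f$ from Lemma~\ref{Lemma: Map defined by disjoint union of stars preserves basis relations}. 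Hence $\phi$ is a combinatorial isomorphism inducing $f$; since $f$ was arbitrary, $\cconf{C}$ is separable. I expect everything apart from the well-definedness argument to be routine bookkeeping with the maps $\nu_S^I$ and their compatibility with algebraic isomorphisms.
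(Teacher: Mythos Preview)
Your proposal is correct and follows essentially the same route as the paper's own proof: restrict $f$ to $\cconf{C}\setminus X$, lift the resulting combinatorial isomorphism $\phi_0$ to $X$ via $\phi(x)\coloneqq\nu_{f(S)}(\phi_0(y))$, use $\Eq{S}$ to verify well-definedness, and then check $\phi$ induces $f$ by the intertwining identity $f\circ\nu_S^I=\nu_{f(S)}^I\circ f$. Your citation for that last identity (Lemma~\ref{Lemma: Map defined by disjoint union of stars preserves basis relations}) is in fact the right one; the paper's text slightly misattributes it to Lemma~\ref{Lemma: Definition of EqS}.
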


Next, we want to prove the converse of Lemma~\ref{Lemma: Elimination of disjoint unions of stars I}, which will allow us to
eliminate all disjoint unions of stars without affecting separability of the configuration.
For this, we need one auxiliary lemma:
\newcommand{\attachStar}[2]{#1_{#2}^\star}
\begin{lemma}\label{Lemma: Construction of C_X}
For every \(k\)-ary coherent configuration \(\cconf{C}\),
fiber \(Y\in\F{\cconf{C}}\) and every union of basis relations \(E\in\left(\cconf{C}|_2[Y]\right)^\cup\)
which forms an equivalence relation on \(Y\),
there exists an extension \(\attachStar{\cconf{C}}{E}\) of \(\cconf{C}\) by a single fiber \(X\)
such that the interspace \(\attachStar{\cconf{C}}{E}[X,Y]\) contains a disjoint union
of stars \(S\) satisfying \(E=\Eq{S}\).
Moreover, this extension is unique up to combinatorial isomorphism.

Furthermore, the construction satisfies the following properties:
\begin{enumerate}
\item\label{Property: Construction of C_X, compatibility with 2-skeletons}
	For every \(\cconf{C}\) and \(E\) as above, we have
	\(\attachStar{(\cconf{C}|_2)}{E}=\left.\left(\attachStar{\cconf{C}}{E}\right)\right|_2\).
\item\label{Property: Construction of C_X, inverse to removing X}
	For every configuration \(\cconf{C}\) and every interspace \(\cconf{C}[X,Y]\) that contains a disjoint union of stars \(S\),
	we have	\(\cconf{C}\simeqq\attachStar{(\cconf{C}\setminus X)}{\Eq{S}}\).
\end{enumerate}
\begin{proof}
Define \(X\coloneqq Y/E\) as the quotient of \(Y\) by the equivalence relation \(E\),
and \(\nu_S\colon Y\to X\) as the natural projection.
Now, we define \(V(\attachStar{\cconf{C}}{E})\coloneqq V(\cconf{C})\dotcup X\),
and extend \(\nu_{S}\) to all of \(V(\attachStar{\cconf{C}}{E})\) by letting it act
as the identity on every fiber besides \(Y\). Once again, we write \(\nu_S^I\)
for the map \(V(\attachStar{\cconf{C}}{E})^k\to V(\attachStar{\cconf{C}}{E})^k\) defined by \(\nu_S^I(y_1,\dots,y_k)\coloneqq (z_1,\dots,z_k)\),
where \(z_i\coloneqq \nu_S^I(y_i)\) for all \(i\in I\) and \(z_i\coloneqq y_i\) for all \(i\notin I\).
To simplify notation, we also abbreviate this map as \((I)\).

By Lemma~\ref{lem:stars:definable_maps_are_algebraic},
all of the sets \(R^{(I)}\) with \(R\in\mathcal{C}\) and \(I\) a subset of the \(Y\)-components of \(R\) must
be basis relations of \(\attachStar{\cconf{C}}{E}\) if we want to make \(\attachStar{\cconf{C}}{E}\) coherent.
Hence, we can only define the basis relations of \(\attachStar{\cconf{C}}{E}\) in a unique way:
\[\attachStar{\cconf{C}}{E}\coloneqq\left\{R^{(I)}\colon R\in\cconf{C},I\subseteq[k], R_I\subseteq Y^{|I|}\right\}.\]

\begin{claim}
\(\attachStar{\cconf{C}}{E}\) is a \(k\)-ary rainbow with \(\attachStar{\cconf{C}}{E}\setminus X=\cconf{C}\).
\begin{claimproof}
The fact that every \(k\)-tuple in \(\attachStar{\cconf{C}}{E}\) is contained
in at least one of these basis relations follows from surjectivity of \(\nu_S\).
To see that every two distinct basis relations are disjoint,
we note that for every basis relation \(R^{(I)}\in\attachStar{\cconf{C}}{E}\), the set~\(I\)
can be reconstructed as the set of \(X\)-components of \(R^{(I)}\).
This implies that if \(R^{(I)}\cap {R'}^{(I')}\neq\emptyset\),
then \(I=I'\). For every $k$-tuple~\(\vec{x}\) in this intersection,
we can write \(\vec{x}=\vec{y}^{(I)}=(\vec{y}')^{(I)}\) for some \(\vec{y}\in R\)
and \(\vec{y}\in R'\). Then for every \(i\in I\), we have
\(\nu_{S}(y_i)=x_i=\nu_{S}(y_i')\) and thus \(y_iy_i'\in E\).

Now, fix some \(i\in I\), and let \(R_i\) be the basis relation
containing \(\vec{y}\frac{y_i'}{i}\).
For all \(j\neq i\), we set
\[R_j\coloneqq\{\vec{z}\in V(\cconf{C})\colon z_iz_j\in E\}.\]
The sets \(R_j\) are unions of basis relations, and we have
\[\intnum{R}{R_1,\dots,R_k}\geq 1,\]
because \(\vec{y}\frac{y_i'}{j}\in R_j\) for all \(j\in[k]\).
But as the relations \(R_j\) for \(j\neq i\) also ensure that any witness~\(z\)
to this inequality must satisfy \(y_iz\in E\),
we get that for every~\(\vec{y}\in R\), there is some~%
\(z\in V(\cconf{C})\) with \(y_iz\in E\) such that \(\vec{y}\frac{z}{i}\in R_i\).
We can iterate this construction with \(R_i\) instead of \(R\)
and a different \(i\in I\). If we do this for all \(i\in I\),
we finally get that for every \(\vec{y}\in R\) and every~%
\(i\in I\), there is some~\(z_i\in V(\cconf{C})\)
such that \(y_iz_i\in E\) and furthermore,
these~\(z_i\) satisfy \(\vec{y}\left(\frac{z_i}{i}\right)_{i\in I}\in R'\).
But this means~\(R^{(I)}\subseteq {R'}^{(I)}\) and
by symmetry of the argument,~\(R^{(I)}={R'}^{(I)}\).

This concludes the proof that the basis relations \(R^{(I)}\) form a partition
of \(V(\attachStar{\cconf{C}}{E})^k\).
The fact that this partition is compatible with equality types follows from the fact
that for~\(\vec{x}=\vec{y}^{(I)}\in R^{(I)}\) we have
\(x_i=x_j\) if and only if
either~\(i,j\in I\) and~\(y_iy_j\in E\)
or~\(i,j\notin I\) and~\(y_i=y_j\).
The fact that the partition is compatible with permutations follows from the observation that
\(\bigl(R^{(I)}\bigr)^\sigma=(R^\sigma)^{(I^\sigma)}\)
 for every permutation \(\sigma\) on \([k]\).
This concludes the proof that \(\attachStar{\cconf{C}}{E}\) is a \(k\)-ary rainbow.
As furthermore \(R^{(\emptyset)}=R\) for all \(R\in\cconf{C}\),
this rainbow extends \(\cconf{C}\).
\end{claimproof}
\end{claim}

In order to finish the proof that \(\attachStar{\cconf{C}}{E}\) is a \(k\)-ary coherent configuration,
it remains to show that the intersection numbers of \(\attachStar{\cconf{C}}{E}\) are well-defined,
that is, independent on the specific choice of vertex tuple in the basis relation.
For this, we write \(R\sim_I R'\) for basis relations \(R,R\in\cconf{C}\)
if \(R^{(I)}={R'}^{(I)}\).
Moreover, we write
\[R/I\coloneqq\bigunion_{\substack{R'\in\cconf{C}\\ R'\sim_I R}} R'.\]

Now, to prove coherence of \(\attachStar{\cconf{C}}{E}\),
let \(R^{(I)},T_1^{(I_1)},\dots,T_k^{(I_k)}\in\attachStar{\cconf{C}}{E}\)
be arbitrary basis relations and pick some representative
\(\vec{x}=\vec{y}^{(I)}\in R^{(I)}\)
with \(\vec{y}\in R\).
For every \(z\in V(\cconf{C})\), we then have
\[\vec{x}\frac{z}{i}
=\vec{y}^{(I)}\frac{z}{i}
=\left(\vec{y}\frac{z}{i}\right)^{(I\setminus\{i\})}.\]
This means that \(\vec{x}\frac{z}{i}\in T_i^{(I_i)}\)
if and only if \(I_i=I\setminus\{i\}\) and
\(\vec{y}\frac{z}{i}\in T_i[\sim_{I_i}]\). 
For all \(x\in X\) and \(y\in Y\) such that \(x=\nu_S(y)\),
the analogous calculation instead yields
\[\vec{x}\frac{x}{i}
=\vec{y}^{(I)}\frac{\nu_{S}(y)}{i}
=\left(\vec{y}\frac{y}{i}\right)^{(I\cup\{i\})}.\]
This means that \(\vec{x}\frac{x}{i}\in T_i^{(I_i)}\)
if and only if
\(I_i=I\cup\{i\}\) and
\(\vec{y}\frac{y}{i}\in T_i[\sim_{I_i}]\).
As one of the~\(S\)-neighbors of \(x\) satisfies this last condition if and only if
every~\(S\)-neighbor does, we overcount the number of such \(x\in X\) by a factor of
\(\extnum{X}{S}\) if we count the number of such \(y\in Y\) instead.

In total, this means that if for some \(\vec{x}\in R^{(I)}\) we define
\[\intnum{\vec{x}}{T_1^{(I_1)},\dots,T_k^{(I_k)}}\coloneqq
\left|\left\{y\in V(\cconf{C})\colon \vec{x}\frac{y}{i}\in R_i \text{ for all } i\in[k]\right\}\right|,\]
then this intersection number can be expressed as
\begin{equation}
\label{Equation: Intersection numbers of D_X in proof of elimination of stars}
\displaystyle \intnum{\vec{x}}{T_1^{(I_1)},\dots,T_k^{(I_k)}}
=\begin{cases}
\intnum{R}{T_1[\sim_{I_1}],\dots,T_k[\sim_{I_k}]}
	&\text{if } I_i=I\setminus\{i\} \text{ for all } i\in[k],\vspace{2mm}\\
\displaystyle\frac{\intnum{R}{T_1[\sim_{I_1}],\dots,T_k[\sim_{I_k}]}}{\extnum{X}{S}}
	&\text{if } I_i=I\cup\{i\} \text{ for all } i\in[k],\vspace{2mm}\\
0	&\text{otherwise.}
\end{cases}
\end{equation}
As the right-hand side of Equation~\ref{Equation: Intersection numbers of D_X in proof of elimination of stars}
does not depend on \(\vec{x}\) but only on the basis relation \(R\) containing \(\vec{y}\)
which can be chosen to be the same for all \(\vec{x}=\vec{y}^{(I)}\in R^{(I)}\),
the intersection number \(\intnum{R^{(I)}}{T_1^{(I_1)},\dots,T_k^{(I_k)}}\) is well-defined.
This proves coherence of \(\attachStar{\cconf{C}}{E}\).

Property~\ref{Property: Construction of C_X, compatibility with 2-skeletons} follows from the observation that
for all \(I\subseteq[k]\), the \([2]\)-face of \(R^{(I)}\)
is given by \((R_{[2]})^{(I\cap [2])}\).
Property~\ref{Property: Construction of C_X, inverse to removing X} follows from uniqueness of the extension.
\end{proof}
\end{lemma}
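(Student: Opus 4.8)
The plan is to build the extension $\attachStar{\cconf{C}}{E}$ by hand, show that its basis relations are \emph{forced}, and then read off uniqueness, the rainbow axioms, coherence, and the two listed properties in turn. Concretely, I would take the new fiber to be the quotient $X\coloneqq Y/E$, let $\nu_S\colon Y\to X$ be the natural projection, extend $\nu_S$ to act as the identity on every fiber of $\cconf{C}$ other than $Y$, and for $I\subseteq[k]$ let $\nu_S^I$ denote the map on $k$-tuples that pushes the $i$-th coordinate through $\nu_S$ precisely for $i\in I$. Lemma~\ref{Lemma: Map defined by disjoint union of stars preserves basis relations} says that in any coherent extension of $\cconf{C}$ by a fiber carrying a disjoint union of stars $S$ with $\Eq{S}=E$, every set $\nu_S^I(R)$ with $R\in\cconf{C}$ and $R_I\subseteq Y^{|I|}$ must be a basis relation; so the only candidate for the partition is
\[
\attachStar{\cconf{C}}{E}\coloneqq\bigl\{\,\nu_S^I(R)\colon R\in\cconf{C},\ I\subseteq[k],\ R_I\subseteq Y^{|I|}\,\bigr\}.
\]
This already yields uniqueness up to combinatorial isomorphism: given any other such extension $\cconf{D}$ with new fiber $X'$ and disjoint union of stars $S'$, the condition $\Eq{S'}=E$ forces the fibers of $\nu_{S'}$ on $Y$ to be exactly the $E$-classes, so $\nu_S$ and $\nu_{S'}$ induce a canonical bijection $X\to X'$; extending it by the identity on $V(\cconf{C})$ gives a map sending each $\nu_S^I(R)$ to $\nu_{S'}^I(R)$, which by the same forcing argument applied to $\cconf{D}$ is a basis relation of $\cconf{D}$, so this map is a combinatorial isomorphism $\attachStar{\cconf{C}}{E}\to\cconf{D}$.

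Next I would check that $\attachStar{\cconf{C}}{E}$ is a $k$-ary rainbow with $\attachStar{\cconf{C}}{E}\setminus X=\cconf{C}$. Surjectivity of $\nu_S$ shows the sets $\nu_S^I(R)$ cover $(V(\cconf{C})\dotcup X)^k$, and since $I$ is recovered from a tuple $\vec{x}=\nu_S^I(\vec{y})$ as the set of coordinates lying in fiber $X$, relations with distinct index sets are disjoint. The one genuinely combinatorial point is disjointness for a fixed $I$: if $\nu_S^I(\vec{y})=\nu_S^I(\vec{y}')$ with $\vec{y}\in R$ and $\vec{y}'\in R'$, then $y_iy_i'\in E$ for all $i\in I$, and because $E$ is a union of $2$-ary basis relations one can form, for each $i\in I$, auxiliary relations expressing ``being $E$-related in coordinates $i$ and $j$'' and use coherence of $\cconf{C}$ to show that, starting from any tuple of $R$, the coordinates indexed by $I$ can be replaced by $E$-equivalent vertices so as to land in $R'$; iterating over $i\in I$ gives $\nu_S^I(R)=\nu_S^I(R')$. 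Compatibility with equality types is read off from $x_i=x_j\iff(i,j\in I\wedge y_iy_j\in E)\vee(i,j\notin I\wedge y_i=y_j)$, closure under index permutations follows because permuting coordinates carries $\nu_S^I(R)$ to $\nu_S^{I'}(R^\sigma)$ for the correspondingly permuted index set $I'$, and $\nu_S^\emptyset(R)=R$ shows the rainbow extends $\cconf{C}$. Finally $S\coloneqq\{(x,y)\in X\times Y\colon x=\nu_S(y)\}$ is a basis relation of the $2$-skeleton, a disjoint union of stars between $X$ and $Y$, and satisfies $\Eq{S}=E$ by construction.

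Verifying coherence, i.e.\ that the intersection numbers of $\attachStar{\cconf{C}}{E}$ do not depend on the chosen representative, is the step I expect to be the \emph{main obstacle}. Fixing basis relations $\nu_S^I(R),\nu_S^{I_1}(T_1),\dots,\nu_S^{I_k}(T_k)$ and a representative $\vec{x}=\nu_S^I(\vec{y})$ with $\vec{y}\in R$, I would count the witnesses $z$ with $\vec{x}\frac{z}{i}\in\nu_S^{I_i}(T_i)$ for all $i$ by splitting on whether $z\in V(\cconf{C})$ or $z\in X$. For $z\in V(\cconf{C})$ one has $\vec{x}\frac{z}{i}=\nu_S^{I\setminus\{i\}}\bigl(\vec{y}\frac{z}{i}\bigr)$, so such $z$ can contribute only when $I_i=I\setminus\{i\}$ for every $i$, and then they are exactly the witnesses for $\intnum{R}{T_1[\sim_{I_1}],\dots,T_k[\sim_{I_k}]}$, where $T_i[\sim_{I_i}]$ abbreviates the union of all $T\in\cconf{C}$ with $\nu_S^{I_i}(T)=\nu_S^{I_i}(T_i)$. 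For $z=\nu_S(w)\in X$ with $w\in Y$ one has $\vec{x}\frac{z}{i}=\nu_S^{I\cup\{i\}}\bigl(\vec{y}\frac{w}{i}\bigr)$, so such $z$ can contribute only when $I_i=I\cup\{i\}$ for every $i$, and since whether $w$ works depends only on $\nu_S(w)$, counting $w\in Y$ rather than $z\in X$ overcounts by the common $E$-class size $\extnum{X}{S}$ (constant by coherence of $\cconf{C}$). This produces the closed form
\[
\intnum{\nu_S^I(R)}{\nu_S^{I_1}(T_1),\dots,\nu_S^{I_k}(T_k)}
=\begin{cases}
\intnum{R}{T_1[\sim_{I_1}],\dots,T_k[\sim_{I_k}]} & \text{if }I_i=I\setminus\{i\}\text{ for all }i,\\
\intnum{R}{T_1[\sim_{I_1}],\dots,T_k[\sim_{I_k}]}/\extnum{X}{S} & \text{if }I_i=I\cup\{i\}\text{ for all }i,\\
0 & \text{otherwise,}
\end{cases}
\]
whose right-hand side depends only on $R$, and $R$ may be chosen uniformly over all representatives of $\nu_S^I(R)$; hence the intersection number is well defined and $\attachStar{\cconf{C}}{E}$ is coherent.

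It remains to establish the two extra properties. Property~\ref{Property: Construction of C_X, compatibility with 2-skeletons} follows from the observation that the $[2]$-face of $\nu_S^I(R)$ is $\nu_S^{I\cap[2]}(R_{[2]})$, so the $2$-skeleton of $\attachStar{\cconf{C}}{E}$ is exactly $\{\nu_S^J(T)\colon T\in\cconf{C}|_2,\ J\subseteq[2],\ T_J\subseteq Y^{|J|}\}=\attachStar{(\cconf{C}|_2)}{E}$. Property~\ref{Property: Construction of C_X, inverse to removing X} is then immediate from uniqueness: if $\cconf{C}|_2[X,Y]$ contains a disjoint union of stars $S$, then by Lemma~\ref{Lemma: Definition of EqS} the relation $\Eq{S}$ is a union of $2$-ary basis relations, and it is by construction an equivalence relation on $Y$; since $\cconf{C}$ is an extension of $\cconf{C}\setminus X$ by a fiber carrying the disjoint union of stars $S$ with $\Eq{S}$ the prescribed relation, uniqueness gives $\cconf{C}\simeqq\attachStar{(\cconf{C}\setminus X)}{\Eq{S}}$.
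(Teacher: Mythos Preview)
Your proposal is correct and follows essentially the same route as the paper: the quotient $X=Y/E$, the forced partition $\{\nu_S^I(R)\}$ via Lemma~\ref{Lemma: Map defined by disjoint union of stars preserves basis relations}, the disjointness argument using coherence of $\cconf{C}$ to swap $E$-equivalent coordinates one at a time, and the explicit intersection-number formula all match the paper's proof. Your treatment of uniqueness (explicitly building the bijection $X\to X'$) is slightly more detailed than the paper's, but otherwise the arguments coincide.
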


\noindent Now, we are ready to prove the converse of Lemma~\ref{Lemma: Elimination of disjoint unions of stars I}.
\begin{lemma}\label{Lemma: Elimination of disjoint unions of stars II}
Let \(\cconf{C}\) be a \(k\)-ary coherent configuration,
\(X,Y\in \F{\cconf{C}}\) two distinct fibers,
and \(S\in\cconf{C}|_2[X,Y]\) a disjoint union of stars between \(X\) and \(Y\).
\begin{enumerate}
\item If \(\cconf{C}\) is separable, so is \(\cconf{C}\setminus X\).
\item If \(\cconf{C}\) is \(2\)-induced, so is \(\cconf{C}\setminus X\).
\end{enumerate}
\begin{proof}
We start with the first claim.
Assume that \(\cconf{C}\) is separable and let
\(f_0\colon\cconf{C}\setminus X\to\cconf{D}\) be an arbitrary algebraic isomorphism.
Then \(E\coloneqq f_0(\Eq{S})\) is a union of \(2\)-ary basis relations
in \(\cconf{D}[f_0(Y)]\) which forms an equivalence relation.
We want to construct an extension~\(f\colon\cconf{C}\to \attachStar{\cconf{D}}{E}\) of \(f\).
Then, by separability of \(\cconf{C}\), we would find a combinatorial
isomorphism~\(\phi\) that induces~\(f\)
and whose restriction to \(\cconf{C}\setminus X\)
induces \(f_0\).

For this, recall that the basis relations of \(\cconf{C}\) are all of the form
\(R^{(I)}=\nu_S^I(R)\) with \(R\in\cconf{C}\setminus X\) and \(I\) a subset of the \(Y\)-components of \(R\);
and the basis relations of \(\attachStar{\cconf{D}}{E}\) are of the form \(T^{(I)}=\nu_{f(S)}^I(T)\)
for \(T\in\cconf{D}\) and \(I\) a subset of the \(f(Y)\)-components of \(T\).
We define the extension \(f\) of \(f_0\) by
\[f\left(R^{(I)}\right)\coloneqq \left(f_0(R)\right)^{(I)}.\]
\begin{claim}
The map \(f\) is a well-defined extension of \(f_0\).
\begin{claimproof}
To see that this map is well-defined, note that the set \(I\) is precisely the set of \(X\)\nobreakdash-components
of \(R^{(I)}\) and is thus determined by the basis relation.
Thus, we only need to show that whenever \(R^{(I)}={R'}^{(I)}\), then also
\((f_0(R))^{(I)}=\left(f_0(R')\right)^{(I)}\).
The former is the case
if and only if \(R\) and \(R'\) can be transformed into one another
by replacing for all \(\vec{x}\in R\) the \(X\)-components of \(\vec{x}\) by
one or multiple \(\Eq{S}\)-equivalent components. As the analogous claim is true in \(\attachStar{\cconf{D}}{E}\)
and as \(f_0\) sends \(\Eq{S}\) to \(\Eq{f(S)}\) by construction, it follows that
\(f\) is well-defined.
Moreover, \(f\) extends \(f_0\), as for \(I=\emptyset\),
we get \(f(R)=f_0(R)\) for all \(R\in\cconf{C}\setminus X\).
\end{claimproof}
\end{claim}

\begin{claim}
The map \(f\) is an algebraic isomorphism.
\begin{claimproof}
The intersection numbers of both
\(\cconf{C}\simeqq\attachStar{(\cconf{C}\setminus X)}{\Eq{S}}\) and \(\attachStar{\cconf{D}}{E}\) are
determined in terms of those of \(\cconf{C}\setminus X\) and \(\cconf{D}\)
by Equation~\eqref{Equation: Intersection numbers of D_X in proof of elimination of stars}.
To see that \(f\) preserves the right-hand side of this formula,
recall that we defined
\[R/I=\bigcup_{\substack{R'\in\cconf{C}\setminus X\\ R^{(I)}=(R')^{(I)}}} R'.\]
It now remains to show
that \(f_0(R/I)=f_0(R)/I\) for all \(R\in\cconf{C}\setminus X\)
and every subset \(I\) of the \(Y\)-components of \(R\).

We again use the observation that \(R^{(I)}=(R')^{(I)}\) if and only if \(R\) can be obtained from \(R'\)
by replacing some of its \(Y\)-components by \(\Eq{S}\)-neighbors, and that the analogous statement is true in \(\attachStar{\cconf{D}}{E}\) with respect to \(\Eq{f(S)}\). Therefore, we have \(R^{(I)}=(R')^{(I)}\) if and only if
\(\left(f_0(R)\right)^{(I)}=\left(f_0(R')\right)^{(I)}\). This implies \(f(R/I)=f(R)/I\) and thus that
\(f\) preserves intersection numbers.
\end{claimproof}
\end{claim}

We have now successfully extended the algebraic isomorphism \(f_0\colon\cconf{C}\setminus X\to\cconf{D}\)
to an algebraic isomorphism \(f\colon \cconf{C}\to\attachStar{\cconf{D}}{E}\).
Because \(\cconf{C}\) is separable, \(f\) is induced by a combinatorial isomorphism \(\phi\)
which restricts to an isomorphism \(\phi_0\) inducing \(f_0\).

For the second claim, assume that \(\cconf{C}\) is \(2\)-induced.
To show that \(\cconf{C}\setminus X=\CC{k}((\cconf{C}\setminus X)|_2)\),
we first note that the left-hand side is a \(k\)-ary coherent configuration
on \(V(\cconf{C})\setminus X\) that is at least as fine as \((\cconf{C}\setminus X)|_2\)
and thus also at least as fine as \(\CC{k}((\cconf{C}\setminus X)|_2)\).
For the converse, consider the \(k\)-ary coherent configuration
\(\cconf{C}'\coloneqq\attachStar{\CC{k}((\cconf{C}\setminus X)|_2)}{\Eq{S}}\).
By choosing the obvious bijection between the fiber \(X\in\F{\cconf{C}}\) and the
additional fiber of \(\cconf{C}'\), we can assume that \(\cconf{C}'\) is a \(k\)-ary coherent configuration on \(V(\cconf{C})\), which is furthermore at most as fine as \(\cconf{C}\).
Now, note that
\[\mathcal{C'}|_2
=\attachStar{\Bigl(\CC{k}\bigl((\cconf{C}\setminus X)|_2\bigr)\big|_2\Bigr)}{\Eq{S}}
=\attachStar{\bigl((\cconf{C}\setminus X)|_2\bigr)}{\Eq{S}}
=\attachStar{(\cconf{C}\setminus X)}{\Eq{S}}\big|_2
=\cconf{C}|_2,\]
where the first and third equality follow from Property \ref{Property: Construction of C_X, compatibility with 2-skeletons} of Lemma~\ref{Lemma: Construction of C_X}, the second equality follows from the fact that \(\cconf{C}\setminus X\) is a \(k\)-ary coherent configuration,
and the last equality follows from Property~\ref{Property: Construction of C_X, inverse to removing X}
of Lemma~\ref{Lemma: Construction of C_X}.
As \(\cconf{C}\) is \(2\)-induced, this yields \(\cconf{C}'\feq\CC{k}(\cconf{C}|_2)=\cconf{C}\)
and thus \(\cconf{C}=\cconf{C}'\).
Finally,
\[\cconf{C}\setminus X
 =\cconf{C}'\setminus X
 =\attachStar{(\CC{k}((\cconf{C}\setminus X)|_2))}{E}\setminus X
=\CC{k}((\cconf{C}\setminus X)|_2)\]
shows that \(\cconf{C}\setminus X\) is \(2\)-induced.
\end{proof}
\end{lemma}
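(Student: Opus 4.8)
The plan is to prove both statements by exploiting the explicit reconstruction of \(\cconf{C}\) from \(\cconf{C}\setminus X\) provided by Lemma~\ref{Lemma: Construction of C_X}, namely \(\cconf{C}\simeqq\attachStar{(\cconf{C}\setminus X)}{\Eq{S}}\), together with the normal form for basis relations of \(\cconf{C}\) derived from Lemma~\ref{Lemma: Map defined by disjoint union of stars preserves basis relations}: every basis relation of \(\cconf{C}\) is of the shape \(R^{(I)}=\nu_S^I(R)\) for a basis relation \(R\in\cconf{C}\setminus X\) and a subset \(I\) of the \(Y\)-components of \(R\), and the intersection numbers are governed by formula~\eqref{Equation: Intersection numbers of D_X in proof of elimination of stars}.

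For the first claim, I would take an arbitrary algebraic isomorphism \(f_0\colon\cconf{C}\setminus X\to\cconf{D}\) and lift it to an algebraic isomorphism of \(\cconf{C}\). First I would observe that \(E\coloneqq f_0(\Eq{S})\) is a union of \(2\)-ary basis relations forming an equivalence relation on \(f_0(Y)\) (using Lemma~\ref{Lemma: Definition of EqS}), so Lemma~\ref{Lemma: Construction of C_X} supplies the extension \(\attachStar{\cconf{D}}{E}\) whose basis relations are the \(T^{(I)}=\nu_{f(S)}^I(T)\). Then I would define \(f\bigl(R^{(I)}\bigr)\coloneqq\bigl(f_0(R)\bigr)^{(I)}\) and check well-definedness: the index set \(I\) is recovered from \(R^{(I)}\) as its set of \(X\)-components, and \(R^{(I)}=(R')^{(I)}\) amounts to \(R\) and \(R'\) differing by replacing \(Y\)-components with \(\Eq{S}\)-equivalent ones, a condition transported by \(f_0\) since it maps \(\Eq{S}\) to \(\Eq{f(S)}\). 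To see that \(f\) is an algebraic isomorphism I would feed this into formula~\eqref{Equation: Intersection numbers of D_X in proof of elimination of stars}, which expresses the intersection numbers on both sides solely in terms of those of \(\cconf{C}\setminus X\), of \(\cconf{D}\), of the factor \(\extnum{X}{S}\), and of the fattening operation \(R\mapsto R/I\); since \(f_0\) preserves intersection numbers and commutes with \(R\mapsto R/I\), so does \(f\). Finally, separability of \(\cconf{C}\) yields a combinatorial isomorphism \(\phi\) inducing \(f\), whose restriction to \(V(\cconf{C})\setminus X\) induces \(f_0\), so \(\cconf{C}\setminus X\) is separable.

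For the second claim, assuming \(\cconf{C}=\CC{k}(\cconf{C}|_2)\), the direction \(\cconf{C}\setminus X\ceq\CC{k}((\cconf{C}\setminus X)|_2)\) is immediate, as \(\cconf{C}\setminus X\) is itself a \(k\)-ary coherent configuration refining \((\cconf{C}\setminus X)|_2\). For the converse I would set \(\cconf{C}'\coloneqq\attachStar{\CC{k}((\cconf{C}\setminus X)|_2)}{\Eq{S}}\), identify its new fiber with \(X\) so that \(\cconf{C}'\) is a \(k\)-ary coherent configuration on \(V(\cconf{C})\) which one checks to be at most as fine as \(\cconf{C}\), and then compute \(\cconf{C}'|_2\). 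Using that the \(\star\)-construction commutes with taking \(2\)-skeletons (Property~\ref{Property: Construction of C_X, compatibility with 2-skeletons} of Lemma~\ref{Lemma: Construction of C_X}), that \(\CC{k}(-)\) leaves the \(2\)-skeleton unchanged, and Property~\ref{Property: Construction of C_X, inverse to removing X}, I obtain \(\cconf{C}'|_2=\cconf{C}|_2\). Since \(\cconf{C}=\CC{k}(\cconf{C}|_2)\) is the coarsest \(k\)-ary coherent configuration refining \(\cconf{C}|_2\) while \(\cconf{C}'\) refines \(\cconf{C}'|_2=\cconf{C}|_2\), we get \(\cconf{C}'\ceq\cconf{C}\), which together with \(\cconf{C}'\feq\cconf{C}\) gives \(\cconf{C}'=\cconf{C}\); deleting \(X\) from both sides yields \(\cconf{C}\setminus X=\cconf{C}'\setminus X=\CC{k}((\cconf{C}\setminus X)|_2)\).

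I expect the real work to be in the first claim: verifying that the normal form \(R^{(I)}\) and the intersection-number formula~\eqref{Equation: Intersection numbers of D_X in proof of elimination of stars} behave well under algebraic isomorphisms, in particular tracking the overcounting factor \(\extnum{X}{S}\) and establishing the nonobvious identity \(f_0(R/I)=f_0(R)/I\). In the second claim, the one subtle point is getting \(\cconf{C}'|_2\) to coincide with \(\cconf{C}|_2\) exactly, which hinges on Property~\ref{Property: Construction of C_X, compatibility with 2-skeletons} of Lemma~\ref{Lemma: Construction of C_X}.
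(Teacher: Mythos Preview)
Your proposal is correct and follows essentially the same approach as the paper's proof: for the first claim you lift \(f_0\) to \(f\colon\cconf{C}\to\attachStar{\cconf{D}}{E}\) via \(f(R^{(I)})\coloneqq(f_0(R))^{(I)}\), verify well-definedness through \(\Eq{S}\mapsto\Eq{f(S)}\) and algebraicity through formula~\eqref{Equation: Intersection numbers of D_X in proof of elimination of stars} together with \(f_0(R/I)=f_0(R)/I\), then invoke separability of \(\cconf{C}\); for the second claim you form \(\cconf{C}'=\attachStar{\CC{k}((\cconf{C}\setminus X)|_2)}{\Eq{S}}\), compute \(\cconf{C}'|_2=\cconf{C}|_2\) via the two properties of Lemma~\ref{Lemma: Construction of C_X}, and sandwich \(\cconf{C}'\) between \(\cconf{C}\) and itself. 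The only cosmetic difference is that the paper packages the two verifications for \(f\) into separate claims, but the content is identical.
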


Lemma~\ref{Lemma: Elimination of disjoint unions of stars I} together with
Lemma~\ref{Lemma: Elimination of disjoint unions of stars II} allows us to remove fibers from coherent configurations containing
a disjoint union of stars without affecting the separability of the configuration.
Thus, we have reduced the separability problem
for arbitrary \(k\)-ary coherent configurations to the same problem for star-free configurations.
Furthermore by Lemma~\ref{Lemma: Elimination of disjoint unions of stars II}, this process preserves \(2\)-inducedness of a configuration. Thus, in order to solve the separability problem
for \(2\)-induced \(k\)-ary coherent configurations, it only remains to consider \(2\)-induced star-free configurations.
This simultaneously generalizes the elimination of interspaces containing a matching
and the elimination of fibers of size \(2\) from~\cite{WL2onCCS4}.

\subsection{Structure of \texorpdfstring{\(k\)}{k}-ary coherent configurations with 5-bounded fibers}

\begin{figure}
\newcommand*\KOne   {\vertex{A}{0,0};}
\newcommand*\KTwo   {\vertex{A}{0,0};\vertex{B}{0,1};                                \draw (A) -- (B);}
\newcommand*\KThree {\vertex{A}{0,0};\vertex{B}{1,0};\vertex{C}{0.5,1.732/2};        \draw (A) -- (B) -- (C) -- (A);}
\newcommand*\dCThree{\vertex{A}{0,0};\vertex{B}{1,0};\vertex{C}{0.5,1.732/2};        \draw[->] (A) edge (B) (B) edge (C) (C) edge (A);}
\newcommand*\KFour  {\vertex{A}{0,0};\vertex{B}{1,0};\vertex{C}{1,1};\vertex{D}{0,1};\draw (A)--(B)--(C)--(D)--(A)--(C) (B)--(D);}
\newcommand*\FFour  {\vertex{A}{0,0};\vertex{B}{1,0};\vertex{C}{1,1};\vertex{D}{0,1};\draw (A)--(B) (C)--(D);\draw[dotted](A)--(C) (B)--(D);\draw[dashed](A)--(D) (B)--(C);}
\newcommand*\CFour  {\vertex{A}{0,0};\vertex{B}{1,0};\vertex{C}{1,1};\vertex{D}{0,1};\draw (A)--(B)--(C)--(D)--(A);\draw[dotted] (A)--(C) (B)--(D);}
\newcommand*\dCFour {\vertex{A}{0,0};\vertex{B}{1,0};\vertex{C}{1,1};\vertex{D}{0,1};\draw[->] (A)edge(B)(B)edge(C)(C)edge(D)(D)edge(A);\draw[dotted] (A)--(C) (B)--(D);}
\newcommand*\KFive  {\vertex{A}{18:0.75};\vertex{B}{90:0.75};\vertex{C}{162:0.75};\vertex{D}{234:0.75};\vertex{E}{306:0.75};\draw (A)--(B)--(C)--(D)--(E)--(A)--(C)--(E)--(B)--(D)--(A);}
\newcommand*\CFive  {\vertex{A}{18:0.75};\vertex{B}{90:0.75};\vertex{C}{162:0.75};\vertex{D}{234:0.75};\vertex{E}{306:0.75};\draw (A)--(B)--(C)--(D)--(E)--(A);\draw[dotted] (A)--(C)--(E)--(B)--(D)--(A);}
\newcommand*\dCFive  {\vertex{A}{18:0.75};\vertex{B}{90:0.75};\vertex{C}{162:0.75};\vertex{D}{234:0.75};\vertex{E}{306:0.75};\draw[->] (A)edge(B)(B)edge(C)(C)edge(D)(D)edge(E)(E)edge(A);\draw[dotted, ->] (A)edge(C)(C)edge(E)(E)edge(B)(B)edge(D)(D)edge(A);}

\newcommand{\name}[2][0]{\node at (#1,-0.5) {#2};}

\centering
\tikz{\KOne    \name{\(K_1\)}}
\hfill
\tikz{\KTwo    \name{\(K_2\)}}
\hfill
\tikz{\KThree  \name[0.5]{\(K_3\)}}
\hfill
\tikz{\dCThree \name[0.5]{\(\overrightarrow{C}_3\)}}
\hfill
\tikz{\KFour   \name[0.5]{\(K_4\)}}
\hfill
\tikz{\FFour   \name[0.5]{\(F_4\)}}
\hfill
\tikz{\CFour   \name[0.5]{\(C_4\)}}
\hfill
\tikz{\dCFour  \name[0.5]{\(\overrightarrow{C}_4\)}}\\
\tikz{\KFive  \node at (0,-1) {\(K_5\)};}
\qquad
\tikz{\CFive  \node at (0,-1) {\(C_5\)};}
\qquad
\tikz{\dCFive  \node at (0,-1) {\(\overrightarrow{C}_5\)};}

\caption{The complete list of \(2\)-ary coherent configurations on a single fiber of order up to \(4\) from \cite{WL2onCCS4},
and the three \(2\)-ary coherent configurations on a single fiber of order \(5\) from \cite{FibersOrder5}.}
\label{Figure: two-dimensional association schemes of order up to 5}
\end{figure}

A list of all isomorphism types of \(2\)-ary coherent configurations on a single fiber of order at most \(5\) is known~\cite{WL2onCCS4, FibersOrder5},
see Figure~\ref{Figure: two-dimensional association schemes of order up to 5}.
Furthermore, the authors of \cite{WL2onCCS4} also give a complete list of possible interspaces
between fibers of order up to \(4\).
Here, it is always possible that the interspace \(\cconf{C}[X,Y]\) is \emph{uniform},
meaning that it consists of only a single basis relation \(X\times Y\).
Uniform interspaces are usually easy to handle and the other interspace types are the more interesting ones.
As we already eliminated interspaces containing a disjoint union of stars,
Figure~\ref{Figure: 2-interspaces between two fibers}
contains only the two remaining non-uniform and star-free interspaces.

In \cite[Lemma 5.2]{WL2onCCS4}, it was shown that both of these interspaces
enforce the existence of a matching in both incident fibers. These matchings
are denoted by dotted lines in Figure~\ref{Figure: 2-interspaces between two fibers}.

As there can be no non-uniform interspaces between fibers of coprime orders \cite[Lemma 3.1]{WL2onCCS4},
we are thus only missing an enumeration of the non-uniform star-free interspaces between two fibers
of order \(5\) each. However, as every such interspace must contain a relation which has degree \(2\) in both fibers,
the only candidate for such an interspace is the \(C_{10}\), the interspace containing a basis relation forming a \(10\)-cycle
between the two fibers. But this \(10\)-cycle induces a matching between opposite vertices.
Thus, in a star-free \(2\)-ary coherent configuration with \(5\)-bounded fibers, there can be no non-uniform interspaces
incident to any fibers of order \(5\).

\begin{figure}
\newcommand{\fiber}[4][0]{
	\draw[rounded corners=8pt, fill=gray!40!white, draw=gray!50!white, rotate around={#1:(#4)}]
		($(#4)+(-#3/2+0.2,-0.3)$) rectangle ($(#4)+(#3/2-0.2,+0.3)$) {};
	\foreach \i in {1,...,#3}{
		\vertex{#2\i}{$(#4)+{\i-#3/2-0.5}*({cos(#1)}, {sin(#1)})$};
	}
}
\centering
\begin{tikzpicture}
	\fiber{A}{4}{0,0}
	\fiber{B}{4}{0,1}
	\foreach \i\j in {1/1,1/2,2/1, 2/2, 3/3, 3/4, 4/3, 4/4}{
		\draw[->] (A\i) -- (B\j);
	}
	\draw[dotted] (A1) -- (A2) (A3) -- (A4);
	\draw[dotted] (B1) -- (B2) (B3) -- (B4);
	
	\node at (0,-0.75) {\(2K_{2,2}\)};
\end{tikzpicture}
\qquad\qquad
\begin{tikzpicture}
	\fiber{A}{4}{0,0}
	\fiber{B}{4}{0,1}
	\vertex{A0}{A4};
	\foreach \i[evaluate=\i as \imm using (\i-1)] in {1,...,4}{	
		\draw[->] (A\i) -- (B\i);
		\draw[->] (A\imm) -- (B\i);
	}
	
	\draw[dotted] (B1) to[bend left=15] (B3) (B2) to[bend left=15] (B4);
	\draw[dotted] (A1) to[bend right=15] (A3) (A2) to[bend right=15] (A4);
	
	\node at (0,-0.75) {\(C_8\)};
\end{tikzpicture}
\caption{All non-uniform and star-free interspace types between two fibers of order up to \(5\).
	In each case, one of the basis relations between the two fibers is missing
	and can be reconstructed as the complement of the one drawn.}
\label{Figure: 2-interspaces between two fibers}
\end{figure}

In the \(k\)-dimensional case, interspaces are generally not between two, but between \(k\)
fibers and their enumeration and analysis is more complicated.
But as we are only dealing with \(2\)-induced \(k\)-ary coherent configuration,
the two-dimensional interspaces suffice for our purposes. 

We need the following fact about star-free (and indeed even matching-free) configurations:
\begin{lemma}[{\cite[Lemma 6.2]{WL2onCCS4}}]\label{Lemma: C8-interspaces not incident}
In a star-free \(2\)-ary coherent configuration, no two distinct interspaces of type \(C_8\)
can be incident to a common fiber.
\end{lemma}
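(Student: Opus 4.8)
The plan is to argue by contradiction. Suppose some fiber \(Y\in\F{\cconf{C}}\) is incident to two distinct interspaces \(\cconf{C}[X_1,Y]\) and \(\cconf{C}[X_2,Y]\), both of type \(C_8\). Since an eight-cycle has exactly four vertices on each side, \(X_1\), \(X_2\) and \(Y\) all have order \(4\), and as the interspaces are distinct we have \(X_1\neq X_2\). For \(i\in\{1,2\}\), fix one of the two basis relations \(S_i\) of \(\cconf{C}[X_i,Y]\); up to complementation each \(S_i\) is (the edge set of) an eight-cycle between \(X_i\) and \(Y\), so for every \(y\in Y\) the set \(N_i(y)\coloneqq\{x\in X_i\colon (x,y)\in S_i\}\) has size \(2\). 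By coherence, the relation ``\(y,y'\in Y\) have a common \(S_i\)-neighbour in \(X_i\)'' is a union of \(2\)-ary basis relations contained in \(Y^2\); tracing around the eight-cycle shows that off the diagonal it is the edge set of a \(4\)-cycle \(\Gamma_i\) on the four vertices of \(Y\). One further checks that \(\Gamma_i\) is the same whichever of the two basis relations of \(\cconf{C}[X_i,Y]\) we picked, and conversely that \(\Gamma_i\) determines \(S_i\) up to relabelling of \(X_i\): for \(\Gamma_i\)-adjacent \(y,y'\) the sets \(N_i(y)\), \(N_i(y')\) meet in exactly one vertex, and for \(\Gamma_i\)-opposite \(y,y'\) they are disjoint, which pins down the \(N_i(y)\) inside the \(4\)-element set \(X_i\) uniquely up to renaming.

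The main tool is the map \(\phi\colon X_1\times X_2\to\{0,1,2\}\) given by \(\phi(x_1,x_2)\coloneqq\lvert\{y\in Y\colon (x_1,y)\in S_1\text{ and }(x_2,y)\in S_2\}\rvert\). For \((x_1,x_2)\) in a \(2\)-ary basis relation \(R\subseteq X_1\times X_2\) this value equals the intersection number \(\intnum{R}{S_1,S_2^{-1}}\), hence \(\phi\) is constant on every basis relation. Consequently each level set \(\phi^{-1}(j)\) is a union of \(2\)-ary basis relations of \(\cconf{C}[X_1,X_2]\), and in particular has constant out-degree over \(X_1\) and constant in-degree over \(X_2\). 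I will derive the contradiction from the level set \(\phi^{-1}(2)\), distinguishing two cases.

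Suppose first \(\Gamma_1=\Gamma_2\). Label \(Y=\{1,2,3,4\}\) so that \(\Gamma_1=\Gamma_2\) is the \(4\)-cycle \(1{-}2{-}3{-}4\), and (using the first paragraph) label \(X_i=\{a^i_1,\dots,a^i_4\}\) so that the \(S_i\)-neighbourhood of \(a^i_j\) in \(Y\) is \(\{j,j+1\}\), indices read modulo \(4\). Since the four sets \(\{j,j+1\}\) are pairwise distinct, \(\lvert\{j,j+1\}\cap\{j',j'+1\}\rvert=2\) exactly when \(j=j'\), so \(\phi^{-1}(2)=\{(a^1_j,a^2_j)\colon j\in[4]\}\) is a perfect matching between \(X_1\) and \(X_2\). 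As a union of basis relations whose out-degree over \(X_1\) is exactly \(1\), it must in fact be a single basis relation, and every vertex of \(X_2\) is incident to exactly one of its edges; thus \(\phi^{-1}(2)\) is a disjoint union of stars, contradicting that \(\cconf{C}\) is star-free. Now suppose \(\Gamma_1\neq\Gamma_2\). Up to symmetry \(\Gamma_2\) is one of the two \(4\)-cycles on \(\{1,2,3,4\}\) different from \(\Gamma_1=1{-}2{-}3{-}4\); choosing labellings of \(X_1\) and \(X_2\) as above and computing, in each of these two cases, the overlap of the \(S_1\)-neighbourhood of \(x_1\) in \(Y\) with the \(S_2\)-neighbourhood of \(x_2\) in \(Y\) shows that \(\phi^{-1}(2)\) is nonempty and has out-degree \(1\) at exactly two vertices of \(X_1\) and out-degree \(0\) at the other two. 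This contradicts the constancy of the out-degree of \(\phi^{-1}(2)\) over \(X_1\). In both cases we reach a contradiction, so no fiber of \(\cconf{C}\) is incident to two distinct \(C_8\)-interspaces.

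The step I expect to be the main obstacle is the first one: establishing exactly how rigid a single \(C_8\)-interspace is, namely that it is determined up to relabelling by the \(4\)-cycle \(\Gamma_i\) it induces on the shared fiber and is independent of the choice of basis relation, since this is what makes the composed relation \(\phi\) computable at all. Once that is in place, the rest is a finite case check. A minor but essential point is that the two cases need different conclusions: when \(\Gamma_1=\Gamma_2\) one uses that a perfect matching between two fibers is literally a disjoint union of stars, whereas when \(\Gamma_1\neq\Gamma_2\) — which forces the single-fiber configuration induced on \(Y\) to be \(F_4\) — one instead exploits that \(\phi^{-1}(2)\) would be a union of basis relations of non-constant out-degree.
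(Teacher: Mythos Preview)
The paper does not supply its own proof of this lemma; it is quoted verbatim from \cite[Lemma~6.2]{WL2onCCS4} and used as a black box. There is therefore no in-paper argument to compare against.

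Your proof is correct. The key steps all check out: the common-neighbour relation on \(Y\) coming from a \(C_8\)-interspace is indeed a union of basis relations (it is the set of pairs with intersection number \(p(\,\cdot\,;S_i^{-1},S_i)\ge 1\)) and is a \(4\)-cycle off the diagonal; the map \(\phi\) is an intersection number and hence constant on basis relations of \(\cconf{C}[X_1,X_2]\); and your case split is exhaustive. In the case \(\Gamma_1=\Gamma_2\) you correctly obtain a perfect matching between \(X_1\) and \(X_2\), which is a disjoint union of stars and is ruled out by star-freeness. In the case \(\Gamma_1\neq\Gamma_2\) your explicit computation of \(\phi^{-1}(2)\) (two pairs, with out-degree pattern \(1,0,1,0\) over \(X_1\)) genuinely contradicts regularity of unions of basis relations. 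One tiny notational slip: with the paper's convention \(\intnum{R}{R_1,R_2}=|\{z:(z,x_2)\in R_1,(x_1,z)\in R_2\}|\), the value \(\phi(x_1,x_2)\) is \(\intnum{R}{S_2^{-1},S_1}\) rather than \(\intnum{R}{S_1,S_2^{-1}}\); this does not affect the argument. Your parenthetical remark that \(\Gamma_1\neq\Gamma_2\) forces \(\cconf{C}[Y]\cong F_4\) is also correct (the other size-\(4\) fiber types admit at most one \(4\)-cycle as a union of basis relations), though you do not actually need it for the contradiction.
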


This allows us to prove the following generalization of \cite[Lemma 7.1]{WL2onCCS4}:
\begin{lemma}\label{Lemma: Every algebraic isomorphism is induced by combinatorial isomorphism on every fiber}
Let \(\cconf{C}\) be a star-free, \(2\)-ary coherent configuration with \(5\)-bounded fibers.
If \(f\colon\cconf{C}\to\mathcal{D}\) is an algebraic isomorphism, then
there exists a combinatorial isomorphism \(\phi\colon V(\cconf{C})\to V(\mathcal{D})\)
which satisfies \(\phi(X)=f(X)\) for every fiber \(X\in\F{\cconf{C}}\).
\begin{proof}
As every \(2\)-ary coherent configuration with at most \(8\) vertices is separable \cite{WL2onCCS4},
we can pick for every \(C_8\)-interspace \(\cconf{C}[X,Y]\)
a combinatorial isomorphism \(\phi_{X,Y}\colon X\cup Y\to f(X)\cup f(Y)\)
which induces \(f|_{\cconf{C}[X\cup Y]}\) and for every fiber \(Z\) not incident to a \(C_8\)-interspace
an isomorphism \(\phi_Z\colon Z\to f(Z)\) inducing \(f|_{\cconf{C}[Z]}\).

By Lemma~\ref{Lemma: C8-interspaces not incident},
the domains of these partial combinatorial isomorphisms form a partition
of \(V(\cconf{C})\). Thus, we can combine these bijections to get a total bijection
\(\phi\colon V(\cconf{C})\to V(\mathcal{D})\). It remains to show that \(\phi\)
is a combinatorial isomorphism on every interspace. 

On uniform interspaces, the claim is immediate and further it is true by construction on interspaces of type \(C_8\).
The only remaining interspace type is thus \(2K_{2,2}\).
But here, the interspace is uniquely determined by its isomorphism-type
and the matchings it induces in the incident fibers. On such interspaces \(\phi\) thus
either induces \(f\) or switches the two binary basis relations.
In both cases, \(\phi\) is a combinatorial isomorphism on \(\cconf{C}\).
\end{proof}
\end{lemma}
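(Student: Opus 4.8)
The plan is to assemble $\phi$ from small local pieces and then check compatibility on the short list of possible interspaces. The key input is the fact from~\cite{WL2onCCS4} that every $2$-ary coherent configuration with at most $8$ vertices is separable. First, for every interspace $\cconf{C}[X,Y]$ of type $C_8$ — which by the classification of star-free interspaces of color multiplicity $5$ forces $|X|=|Y|=4$ — the induced configuration on $X\cup Y$ has exactly $8$ vertices, so the restriction $f|_{\cconf{C}[X\cup Y]}$ is induced by a combinatorial isomorphism $\phi_{X,Y}\colon X\cup Y\to f(X)\cup f(Y)$. Second, for every fiber $Z$ not incident to any $C_8$-interspace, the induced configuration on $Z$ has at most $5\le 8$ vertices, so $f|_{\cconf{C}[Z]}$ is induced by some combinatorial isomorphism $\phi_Z\colon Z\to f(Z)$.

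Next I would glue these pieces. By Lemma~\ref{Lemma: C8-interspaces not incident}, no fiber lies on two distinct $C_8$-interspaces, so the sets $X\cup Y$ ranging over $C_8$-interspaces, together with the remaining fibers $Z$, partition $V(\cconf{C})$; hence the $\phi_{X,Y}$ and the $\phi_Z$ combine into a single bijection $\phi\colon V(\cconf{C})\to V(\mathcal{D})$, and by construction $\phi$ induces $f$ on every fiber. It remains to verify that $\phi$ sends every interspace basis relation of $\cconf{C}$ to a basis relation of $\mathcal{D}$. Since $f$ maps interspaces of $\cconf{C}$ to interspaces of $\mathcal{D}$ while preserving intersection numbers, and since $\cconf{C}$ is star-free of color multiplicity $5$, every interspace of $\cconf{C}$ is uniform, of type $2K_{2,2}$, or of type $C_8$, and the corresponding interspace of $\mathcal{D}$ has the same type between fibers of the same sizes. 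On a uniform interspace $X\times Y$ any fiber-respecting bijection is a combinatorial isomorphism, and on a $C_8$-interspace $\phi$ restricts to the chosen $\phi_{X,Y}$, which by construction induces $f$ there.

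The only interspace type requiring a real argument is $2K_{2,2}$. Here I would use that such an interspace is determined by its isomorphism type together with the two matchings it induces in the incident fibers $X$ and $Y$ (the dotted edges in Figure~\ref{Figure: 2-interspaces between two fibers}); these matchings are unions of $2$-ary basis relations, hence $f$-invariant, and $\phi$ induces $f$ on both $X$ and $Y$, so $\phi$ carries the matching of $X$ to the matching of $f(X)$ and likewise for $Y$. Consequently $\phi$ maps the pair of basis relations of $\cconf{C}[X,Y]$ onto the pair of basis relations of $\mathcal{D}[f(X),f(Y)]$ — either as $f$ prescribes or with the two swapped — and in both cases each basis relation goes to a basis relation, so $\phi$ is a combinatorial isomorphism on this interspace. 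Having exhausted all interspace types, $\phi$ is a combinatorial isomorphism inducing $f$ on every fiber, as claimed. I expect the main obstacle to be precisely this $2K_{2,2}$ case together with making sure the interspace classification is complete for color multiplicity $5$ — in particular that fibers of order $5$ carry only uniform interspaces and that $C_8$ and $2K_{2,2}$ exhaust the non-uniform star-free interspaces between fibers of order $4$ — since any further interspace type would demand an additional compatibility check after the gluing.
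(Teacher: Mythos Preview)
Your proposal is correct and follows essentially the same approach as the paper's proof: you use separability of configurations on at most $8$ vertices to build $\phi$ on $C_8$-interspaces and on the remaining fibers, glue via Lemma~\ref{Lemma: C8-interspaces not incident}, and then verify compatibility on the three possible interspace types, with the only nontrivial case being $2K_{2,2}$ handled via the induced matchings. Your write-up even supplies a bit more justification than the paper in places (e.g., why the matchings are carried correctly and why the interspace classification is exhaustive).
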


We call an algebraic automorphism \(f\) of a \(k\)-ary coherent configuration \(\cconf{C}\) \emph{strict}
if it fixes every fiber,
i.e. satisfies \(f(X)=X\) for every fiber \(X\in\F{\cconf{C}}\).
The strict algebraic automorphisms of \(\cconf{C}\) form a group, which we denote by~\(\salgaut{\cconf{C}}\).
Now, we get the following reformulation of separability:
\begin{lemma}\label{Lemma: reduction to strict algebraic automorphisms}
A star-free and \(2\)-induced \(k\)-ary coherent configuration \(\cconf{C}\)
with \(5\)-bounded fibers
is separable if and only if every strict algebraic automorphism is induced by
a combinatorial automorphism.
\begin{proof}
The forward implication is immediate.
For the converse implication,
let \(f\colon\cconf{C}\to\mathcal{D}\) be an algebraic isomorphism.
By Lemma~\ref{Lemma: Every algebraic isomorphism is induced by combinatorial isomorphism on every fiber}, we find a combinatorial isomorphism
\(\phi\colon V(\cconf{C})\to V(\mathcal{D})\)
which satisfies \(\phi(X)=f(X)\) for every fiber \(X\in\F{\cconf{C}}\).
Because \(\cconf{C}\) is \(2\)-induced, this map is also a combinatorial isomorphism
between~\(\cconf{C}\) and~\(\mathcal{D}\) and thus induces an algebraic isomorphism
\(g_\phi\colon\cconf{C}\to\mathcal{D}\) such that \(f(X)=g_\phi(X)\) for every
fiber \(X\in\cconf{C}\).
Now, \(g_\phi^{-1}\circ f\) is a strict algebraic automorphism of~\(\cconf{C}\)
and is thus induced by some combinatorial automorphism~\(\psi\).
But then \(f\) is induced by \(\phi\circ\psi\), which proves separability of~\(\cconf{C}\).
\end{proof}
\end{lemma}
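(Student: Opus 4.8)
The forward implication is immediate: if \(\cconf{C}\) is separable then \emph{every} algebraic isomorphism out of \(\cconf{C}\), in particular every strict algebraic automorphism, is induced by a combinatorial one. So the plan is to establish the converse: assuming every strict algebraic automorphism of \(\cconf{C}\) comes from a combinatorial automorphism, show that an arbitrary algebraic isomorphism \(f\colon\cconf{C}\to\cconf{D}\) is induced by a combinatorial isomorphism.

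The guiding idea is to split \(f\) into a ``fiber-permuting part'', which I will realise by a concrete combinatorial isomorphism, and a ``strict remainder'', which the hypothesis handles. First I would pass to the \(2\)-skeleton, where \(f\) restricts to an algebraic isomorphism \(f|_2\colon\cconf{C}|_2\to\cconf{D}|_2\). Since \(\cconf{C}\) is \(2\)-induced, Lemma~\ref{Lemma: algebraic isomorphism and coherent closures} gives that \(\cconf{D}\) is \(2\)-induced as well, and \(\cconf{C}|_2\) is a star-free \(2\)-ary coherent configuration of color multiplicity \(5\); hence Lemma~\ref{Lemma: Every algebraic isomorphism is induced by combinatorial isomorphism on every fiber} applies and produces a combinatorial isomorphism \(\phi\colon V(\cconf{C})\to V(\cconf{D})\) that induces \(f|_2\) on every fiber. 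In particular \(\phi\) maps each fiber \(X\) to the fiber \(f(X)\).

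Next I would lift \(\phi\) to the full \(k\)-ary configurations. As \(\phi\) is a combinatorial isomorphism between the \(2\)-skeletons and both \(\cconf{C}\) and \(\cconf{D}\) are the \(k\)-ary coherent closures of these skeletons, \(\phi\) is automatically a combinatorial isomorphism \(\cconf{C}\to\cconf{D}\); it therefore induces an algebraic isomorphism \(g_\phi\colon\cconf{C}\to\cconf{D}\) with \(g_\phi(X)=\phi(X)=f(X)\) for every fiber \(X\). Consequently \(g_\phi^{-1}\circ f\) is an algebraic automorphism of \(\cconf{C}\) fixing every fiber, i.e.\ a strict algebraic automorphism, and by hypothesis it is induced by a combinatorial automorphism \(\psi\). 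Then \(\phi\circ\psi\) is a combinatorial isomorphism \(\cconf{C}\to\cconf{D}\) inducing \(g_\phi\circ(g_\phi^{-1}\circ f)=f\). Since \(f\) was arbitrary, \(\cconf{C}\) is separable.

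I expect the only genuine subtlety to be the middle step: obtaining a combinatorial isomorphism that already matches \(f\) at the level of fibers. This is exactly the content of Lemma~\ref{Lemma: Every algebraic isomorphism is induced by combinatorial isomorphism on every fiber}, and it is there that the color-multiplicity-\(5\) and star-freeness assumptions do their work (ruling out interspaces with stars and controlling the remaining \(2K_{2,2}\)- and \(C_8\)-interspaces); the \(2\)-induced assumption is what makes a combinatorial isomorphism of \(2\)-skeletons automatically one of the full configurations, so that the decomposition \(f=g_\phi\circ(g_\phi^{-1}\circ f)\) makes sense as stated. Given those two inputs, the reduction to strict algebraic automorphisms is a routine composition argument.
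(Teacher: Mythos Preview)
Your proposal is correct and follows essentially the same route as the paper's proof: pass to the \(2\)-skeleton, invoke Lemma~\ref{Lemma: Every algebraic isomorphism is induced by combinatorial isomorphism on every fiber} to obtain a combinatorial isomorphism \(\phi\) matching \(f\) on fibers, lift \(\phi\) to the full \(k\)-ary configurations via \(2\)-inducedness, and then peel off the strict remainder \(g_\phi^{-1}\circ f\). Your explicit remark that \(\cconf{D}\) is also \(2\)-induced (via Lemma~\ref{Lemma: algebraic isomorphism and coherent closures}) is a nice addition that the paper leaves implicit.
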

\subsection{Strict algebraic automorphisms}\label{Section: strict algebraic automorphisms}
In this section, we give a polynomial time algorithm to decide
whether every strict algebraic automorphism of a
\(k\)-ary coherent configuration \(\cconf{C}\) is induced by a combinatorial automorphism.

We will heavily make use of the fact that the graph isomorphism problem
is polynomial-time solvable for graphs with bounded color classes:
\begin{lemma}[{\cite{GIBoundedCCSLasVegas,GIBoundedCCS}}]\label{Lemma: GI fpt in color multiplicity}
Isomorphism of \(k\)-ary relational structures of order \(n\) and \(c\)-bounded color classes can be decided
in time \(O_{k,c}(n^{O(k)})\). Moreover, a generating set of \(\Aut(\relstruc{A})\)
for of a \(k\)-ary relational structure \(\relstruc{A}\) of order \(n\) and \(c\)-bounded color classes can also be computed
in the same time bounds.
\end{lemma}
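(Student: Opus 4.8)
The plan is to reduce both required outputs --- the isomorphism decision and a generating set of the automorphism group --- to the corresponding outputs for vertex- and edge-colored directed graphs of bounded color multiplicity, for which \cite{GIBoundedCCSLasVegas,GIBoundedCCS} supply algorithms within the needed time. Two things must then be provided: a faithful encoding of a bounded-color-multiplicity $k$-ary structure as a bounded-color-multiplicity colored graph of size polynomial in $n^{k}$, and the running-time bookkeeping for applying the cited algorithms to that encoding.

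For the encoding, given $\relstruc{A}=(V,R_1,\dots,R_\ell)$ with coloring $\chi$ of color multiplicity $c$, I would build a colored directed graph $G_{\relstruc{A}}$ as follows. Unary relations are absorbed into $\chi$ and a nullary relation into a global flag, neither of which changes the color multiplicity. For every relation $R$ of arity $r\geq 2$ and every $\vec{a}=(a_1,\dots,a_{r-1})\in V^{r-1}$, add a fresh \emph{gadget vertex} $w^R_{\vec{a}}$ colored by the tuple $(R,\chi(a_1),\dots,\chi(a_{r-1}))$, and join $w^R_{\vec{a}}$ to each vertex $v$ by an edge whose color records the set $\{\,i\in[r-1]\colon a_i=v\,\}$ together with the bit indicating whether $(a_1,\dots,a_{r-1},v)\in R$ (omitting edges for which this data is empty). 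Since at most $c$ vertices of $\relstruc{A}$ share a color, at most $c^{r-1}\leq c^{k-1}$ gadget vertices share a color, so $G_{\relstruc{A}}$ has color multiplicity at most $c^{k-1}$, and it has $n+\sum_i n^{r_i-1}=O_k(\ell\, n^{k-1})$ vertices.

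The crucial claim is faithfulness: $G_{\relstruc{A}}\simeqq G_{\relstruc{B}}$ if and only if $\relstruc{A}\simeqq\relstruc{B}$, and restriction to $V$ induces an isomorphism of groups $\Aut(G_{\relstruc{A}})\to\Aut(\relstruc{A})$. To see this, note that any combinatorial isomorphism $\psi$ of the graphs must map original vertices to original vertices, as these are exactly the vertices carrying colors of $\relstruc{A}$, and must send $w^R_{\vec{a}}$ to $w^R_{\psi(\vec{a})}$, since that is the unique gadget vertex of the matching color whose edge to $\psi(a_i)$ records $i$ for every $i$; reading off the membership bits then forces $\psi|_V$ to preserve $R$, and conversely every isomorphism of the structures lifts --- uniquely, since $\psi$ is determined by $\psi|_V$ --- to one of the graphs.

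Finally I would run the algorithms of \cite{GIBoundedCCSLasVegas,GIBoundedCCS} on $G_{\relstruc{A}}$ and $G_{\relstruc{B}}$: the isomorphism test decides $\relstruc{A}\simeqq\relstruc{B}$, and a generating set of $\Aut(G_{\relstruc{A}})$ restricts to a generating set of $\Aut(\relstruc{A})$. The one step requiring genuine care is the running-time accounting: $G_{\relstruc{A}}$ has $\Theta(n^{k-1})$ vertices but color multiplicity bounded purely in terms of $k$ and $c$, so to obtain the claimed $O_{k,c}(n^{O(k)})$ one must invoke the cited bound in the form where the exponent of the vertex count is an absolute constant and the color-multiplicity dependence lives in the multiplicative factor. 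Everything else --- checking that the gadget is faithful --- is routine.
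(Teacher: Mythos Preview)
The paper does not prove this lemma; it is stated with citations to \cite{GIBoundedCCSLasVegas,GIBoundedCCS} and used as a black box. Those references, however, are about \emph{graphs} of bounded color multiplicity, so a reduction like yours is exactly what is needed to justify the $k$-ary statement as written, and your reduction is correct. The gadget is faithful (your argument that any graph isomorphism must send $w^R_{\vec a}$ to $w^R_{\psi(\vec a)}$, and hence is determined by and determines an isomorphism of the underlying structures, goes through), the color-multiplicity bound $c^{k-1}$ on $G_{\relstruc{A}}$ is right, and you correctly isolate the one nontrivial point in the running-time accounting: one needs the cited algorithms to run in time $N^{O(1)}\cdot f(b)$ on $N$-vertex graphs of color multiplicity $b$ with an \emph{absolute} exponent, which they do.

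The only loose end is the dependence on the number $\ell$ of relations. Your vertex count is $O_k(\ell\,n^{k-1})$, so the stated bound $O_{k,c}(n^{O(k)})$ tacitly assumes $\ell\le n^{O(k)}$. This holds in every application the paper makes of the lemma---the structures in question are colored variants of $k$-ary coherent configurations, for which $\ell\le n^k$---but it is worth saying explicitly, since the lemma as stated does not mention $\ell$ at all.
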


\noindent By encoding the algebraic structure of a \(k\)-ary coherent configuration into a graph \(G_{\cconf{C}}\) such that
the strict algebraic automorphisms of \(\cconf{C}\) become automorphisms of \(G_{\cconf{C}}\), we obtain the following: 
\begin{lemma}\label{Lemma: computing generating set of A(C)}
There is an algorithm running in time \(O_{k,c}(n^{O(k)})\) that,
given a \(k\)-ary coherent configuration \(\cconf{C}\) of order \(n\) and \(c\)-bounded fibers,
computes a generating set of \(\salgaut{\cconf{C}}\).
\begin{proof}
We construct a graph \(G_{\cconf{C}}\) with bounded color classes and order \(O(n^k)\)
whose automorphism group is isomorphic to \(\salgaut{\cconf{C}}\).
Lemma~\ref{Lemma: GI fpt in color multiplicity} then yields the claim.

For this, we call a \(k\)-tuple \(\vec{R}=(R_1,\dots,R_k)\) of basis relations of \(\cconf{C}\) \emph{compatible} if for some basis relation \(R\in\cconf{C}\) we have
\(\intnum{R}{R_1,\dots,R_k}>0\). Let \(\cconf{C}^{(k)}\) be the collection of all such tuples.
Because for every compatible tuple \(\vec{R}\) of basis relations there must exist some
\(\vec{x}\in V(\cconf{C})^k\) and \(x'\in V(\cconf{C})\) such that
\(\vec{x}\frac{x'}{i}\in R_i\) for all \(i\in[k]\), there are at most \(n^{k+1}\) compatible tuples,
and these can be found in time \(n^{O(k)}\) by enumerating all such pairs \((\vec{x},x')\)
noting which compatible tuple of basis relations they correspond to.

Now, we define the graph \(G_{\cconf{C}}\)
on the vertex set \(V(G_{\cconf{C}})\coloneqq\cconf{C}\dotcup\cconf{C}^{(k)}\)
by putting in the following labeled edges:
\begin{itemize}
\item connect \(R\) and \(\vec{R}\) with an edge labeled \(\intnum{R}{\vec{R}}\),
\item connect \(R\) and \(\vec{R}\) with an edge labeled \(\{i\in[k]\colon R_i=R\}\),
\end{itemize}
Finally, we define a vertex coloring on \(G_{\cconf{C}}\):
\begin{itemize}
\item color each basis relation \(R\) using the tuple \((R_{\{i\}})_{i\in[k]}\) of fibers of its components,
\item color each compatible tuple of basis relations \(\vec{R}\) using the tuple of colors we assigned to its components. 
\end{itemize}
This graph can clearly be constructed in time \(n^{O(k)}\) and has order \(n^{O(k)}\).

Because the fibers of \(\cconf{C}\) have order at most \(c\), there can be at most \(c^k\) basis relations
sharing the same color. Thus, the number of compatible tuples of basis relations sharing a color is bounded
by \((c^k)^k=c^{k^2}\in O_{k,c}(1)\). 
Moreover, its automorphism group is as required: each automorphism is determined by how it permutes
the basis relations, the edge-colors ensure that this automorphism is an algebraic automorphism,
and the vertex-colors ensure strictness.
\end{proof}
\end{lemma}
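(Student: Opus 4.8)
The plan is to reduce the computation of $\salgaut{\cconf{C}}$ to the computation of an automorphism group of a suitably small graph of bounded color multiplicity, and then to invoke Lemma~\ref{Lemma: GI fpt in color multiplicity}. So I would construct a vertex- and edge-colored graph $G_{\cconf{C}}$ of order polynomial in $n^{k}$ and of color multiplicity bounded in terms of $k$ and $c$ alone, such that $\Aut(G_{\cconf{C}})\cong\salgaut{\cconf{C}}$. The design difficulty is that, by definition, a strict algebraic automorphism is a permutation of the \emph{set} of basis relations of $\cconf{C}$ that fixes every fiber and preserves the value of the intersection number $\intnum{R}{R_1,\dots,R_k}$ for every choice of $R,R_1,\dots,R_k$ (property~(A3), which by the remark after its definition also entails (A1) and (A2)); since an ordinary graph cannot directly express a $(k+1)$-ary relation, $G_{\cconf{C}}$ needs an auxiliary layer of gadget vertices, one for each $k$-tuple of basis relations that actually occurs.

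Concretely, I would let the vertices of $G_{\cconf{C}}$ be the basis relations of $\cconf{C}$ together with all \emph{compatible} $k$-tuples $\vec{R}=(R_1,\dots,R_k)$, meaning those with $\intnum{R}{R_1,\dots,R_k}>0$ for some basis relation $R$. There are at most $n^{k+1}$ compatible tuples, and they can be enumerated in time $n^{O(k)}$ by running over all pairs $(\vec{x},x')\in V(\cconf{C})^{k}\times V(\cconf{C})$ and recording the tuple $(R_1,\dots,R_k)$ of basis relations with $\vec{x}\frac{x'}{i}\in R_i$ for every $i\in[k]$. Between a basis relation $R$ and a compatible tuple $\vec{R}$ I would put one edge labeled by $\intnum{R}{R_1,\dots,R_k}$ and a second edge labeled by the set $\{i\in[k]\colon R_i=R\}$; the second label lets one recover $\vec{R}$ from its labeled neighborhood among the basis-relation vertices, so every graph automorphism is already determined by its action on the basis relations and no spurious automorphisms arise from the gadget layer. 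Finally I would color each basis relation $R$ by the tuple of fibers $(R_{\{1\}},\dots,R_{\{k\}})$ of its one-element faces, which records the equality type and forces fibers to be fixed, and color each gadget vertex by the tuple of colors of its components.

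It then remains to verify the quantitative bounds and the claimed isomorphism. For the color multiplicity: since every fiber has at most $c$ vertices, a fixed tuple of fibers is the color of at most $c^{k}$ basis relations, so at most $c^{k}$ basis-relation vertices and at most $(c^{k})^{k}=c^{k^{2}}$ gadget vertices share a color, both $O_{k,c}(1)$; moreover $|V(G_{\cconf{C}})|=O(n^{k+1})$ and the whole construction runs in time $n^{O(k)}$. For the isomorphism: a graph automorphism permutes the basis relations, the vertex colors force it to fix every fiber and to preserve equality types, and the edge labels force it to preserve $\intnum{R}{R_1,\dots,R_k}$ for every choice of $R,R_1,\dots,R_k$ --- including the vanishing values, since those are exactly the tuples that are not compatible or carry label~$0$ --- so it is precisely a strict algebraic automorphism; conversely, every strict algebraic automorphism preserves all of this data together with compatibility of tuples, hence induces a graph automorphism. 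Lemma~\ref{Lemma: GI fpt in color multiplicity} applied to $G_{\cconf{C}}$ then outputs a generating set of $\Aut(G_{\cconf{C}})\cong\salgaut{\cconf{C}}$ within the required time bound. I expect the main obstacle to be calibrating the gadget layer precisely, so that the graph automorphisms are \emph{neither} a proper subgroup \emph{nor} a proper overgroup of $\salgaut{\cconf{C}}$: one needs rigidity of the gadget vertices relative to the basis relations, a faithful recording of all intersection-number constraints including the zero ones, and strictness, all while keeping the color multiplicity bounded --- which is exactly where the hypothesis that fibers have size at most $c$ enters.
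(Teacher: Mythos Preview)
Your proposal is correct and essentially identical to the paper's proof: both construct the same graph $G_{\cconf{C}}$ on basis relations together with compatible $k$-tuples, with the same two edge labels (intersection number and index set $\{i:R_i=R\}$) and the same fiber-tuple vertex coloring, and both derive the same color-multiplicity bounds $c^{k}$ and $c^{k^{2}}$ before invoking Lemma~\ref{Lemma: GI fpt in color multiplicity}. Your write-up is in fact slightly more explicit about why the gadget layer is rigid and why the zero intersection numbers are implicitly encoded, but the argument is the same.
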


As the collection of those strict algebraic automorphisms which are induced by combinatorial ones
forms a subgroup of \(\salgaut{\cconf{C}}\), it now suffices to check whether
each of the polynomially many elements of the generating set of \(\salgaut{\cconf{C}}\)
is induced by a combinatorial automorphism. This can also be checked in polynomial time:
\begin{lemma}\label{Lemma: Deciding whether algebraic iso is combinatorial in polynomial time}
There is an algorithm running in time \(O_{k,c}(n^{O(k)})\) that,
given a \(k\)-ary coherent configuration \(\cconf{C}\) with \(c\)-bounded fibers,
and a strict algebraic automorphism \(f\in\salgaut{\cconf{C}}\),
outputs whether \(f\) is induced by a combinatorial automorphism.
\begin{proof}
Let \(\relstruc{C}\) be an arbitrary colored variant of \(\cconf{C}\) and \(\relstruc{C}^f\)
another colored variant such that~\(f\) becomes a color-preserving map
between the \(k\)-ary color classes of these two structures.
Combinatorial automorphisms inducing \(f\) now naturally correspond to isomorphisms between~\(\relstruc{C}\)
and~\(\relstruc{C}^f\), meaning that \(f\) is induced by a combinatorial automorphism if and only if
\(\relstruc{C}\simeqq\relstruc{C}^f\). As \(\cconf{C}\) has bounded fibers,
so does \(\relstruc{C}\).
Thus, we can decide the latter in the required time by Lemma~\ref{Lemma: GI fpt in color multiplicity}.
\end{proof}
\end{lemma}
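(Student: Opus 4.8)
The plan is to reduce the question ``is $f$ induced by a combinatorial automorphism?'' to a single instance of the isomorphism problem for relational structures of bounded color multiplicity, and then invoke Lemma~\ref{Lemma: GI fpt in color multiplicity}. This mirrors the strategy already used in the surrounding subsection, where one generator of \(\salgaut{\cconf{C}}\) is tested at a time.

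First I would fix a colored variant \(\relstruc{C}\) of \(\cconf{C}\): choose an ordering \(R_1,\dots,R_\ell\) of all basis relations of \(\cconf{C}\) and take the vertex-coloring \(\chi\) sending each vertex to its fiber, so that \(\relstruc{C}=(V(\cconf{C}),R_1,\dots,R_\ell,\chi)\) is a \(k\)-ary relational structure of order \(n\). Since \(f\) is a strict algebraic automorphism, properties~(A1) and~(A2) guarantee that \(f\) preserves equality types (hence arities) and fixes every fiber, so \(\relstruc{C}^f\coloneqq(V(\cconf{C}),f(R_1),\dots,f(R_\ell),\chi)\) is again a colored variant of the same configuration \(\cconf{C}\), and \(f\) acts as a color-preserving bijection between the \(k\)-ary color classes of \(\relstruc{C}\) and those of \(\relstruc{C}^f\).

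The key observation is then that a bijection \(\phi\colon V(\cconf{C})\to V(\cconf{C})\) is a combinatorial automorphism of \(\cconf{C}\) inducing \(f\) if and only if \(\phi\) is an isomorphism \(\relstruc{C}\to\relstruc{C}^f\). Indeed, an isomorphism of these two structures sends the relation named \(R_i\) in \(\relstruc{C}\), namely \(R_i\) itself, to the relation named \(R_i\) in \(\relstruc{C}^f\), namely \(f(R_i)\); hence \(R_i^\phi=f(R_i)\) for every \(i\), which is precisely the statement that \(\phi\) permutes the basis relations according to \(f\), i.e.\ is a combinatorial automorphism inducing \(f\). The converse direction is immediate by unwinding the same definitions. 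Therefore \(f\) is induced by a combinatorial automorphism if and only if \(\relstruc{C}\simeqq\relstruc{C}^f\).

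Finally, both \(\relstruc{C}\) and \(\relstruc{C}^f\) are \(k\)-ary relational structures on \(n\) vertices whose color classes are exactly the fibers of \(\cconf{C}\), hence of size at most \(c\); so Lemma~\ref{Lemma: GI fpt in color multiplicity} decides \(\relstruc{C}\simeqq\relstruc{C}^f\) in time \(O_{k,c}(n^{O(k)})\), which proves the lemma. There is no serious combinatorial obstacle here; the only point demanding a bit of care is the bookkeeping that turns \(\cconf{C}\) and its image under \(f\) into honest colored variants with \(f\) acting color-preservingly on \(k\)-ary color classes, which is exactly what (A1) and (A2) provide once the orderings of basis relations are chosen consistently.
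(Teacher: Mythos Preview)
Your proposal is correct and follows exactly the same approach as the paper: build two colored variants \(\relstruc{C}\) and \(\relstruc{C}^f\) so that combinatorial automorphisms inducing \(f\) correspond to isomorphisms \(\relstruc{C}\to\relstruc{C}^f\), then invoke Lemma~\ref{Lemma: GI fpt in color multiplicity}. You have simply been more explicit than the paper about the construction of the colored variants and the verification of the correspondence, which is fine.
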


\begin{corollary}\label{Corollary: quasiseparability in polynomial time}
There is an algorithm running in time \(O_{k,c}(n^{O(k)})\) that,
given a \(k\)-ary coherent configuration \(\cconf{C}\) with \(c\)-bounded fibers,
either
\begin{enumerate}
\item outputs a \(\phi\in\salgaut{\cconf{C}}\) which not induced by a combinatorial automorphism, or
\item correctly returns that no such automorphisms exist.
\end{enumerate}
\end{corollary}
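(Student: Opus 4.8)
The plan is to derive this directly from the two preceding lemmas together with one elementary group-theoretic observation. First I would invoke Lemma~\ref{Lemma: computing generating set of A(C)} to compute, in time \(O_{k,c}(n^{O(k)})\), a generating set \(\{f_1,\dots,f_m\}\) of the group \(\salgaut{\cconf{C}}\) of strict algebraic automorphisms of \(\cconf{C}\). Since this generating set is obtained (via Lemma~\ref{Lemma: GI fpt in color multiplicity}) as part of the output of an algorithm running in time \(O_{k,c}(n^{O(k)})\), its cardinality \(m\) is in particular polynomially bounded in \(n^{O(k)}\).

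Next, for each generator \(f_i\) in turn, I would run the algorithm of Lemma~\ref{Lemma: Deciding whether algebraic iso is combinatorial in polynomial time} to decide whether \(f_i\) is induced by a combinatorial automorphism; each such test costs \(O_{k,c}(n^{O(k)})\), so the whole loop still runs within the claimed time bound. If some \(f_i\) fails the test, the algorithm outputs \(\phi\coloneqq f_i\): a strict algebraic automorphism not induced by any combinatorial automorphism, which is exactly case~(1). If every \(f_i\) passes the test, the algorithm returns that no such automorphism exists, which is case~(2).

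It remains to argue correctness of the second branch. For this, observe that the set \(H\) of those strict algebraic automorphisms of \(\cconf{C}\) that are induced by a combinatorial automorphism forms a subgroup of \(\salgaut{\cconf{C}}\): the identity algebraic automorphism is induced by the identity map; if \(f\) and \(g\) are induced by combinatorial automorphisms \(\phi\) and \(\psi\), then \(\phi\circ\psi\) is again a combinatorial automorphism and it induces \(f\circ g\); and \(\phi^{-1}\) induces \(f^{-1}\). Since \(H\) contains the generating set \(\{f_1,\dots,f_m\}\) of \(\salgaut{\cconf{C}}\), we conclude \(H=\salgaut{\cconf{C}}\), so in the second branch indeed every strict algebraic automorphism is induced by a combinatorial one.

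The only point requiring any care — and hence the ``main obstacle'', such as it is — is the subgroup claim: one must check that composing (respectively inverting) the witnessing combinatorial automorphisms yields a combinatorial automorphism that induces the composite (respectively inverse) algebraic automorphism. This is routine and of the same flavour as the behaviour of induced algebraic isomorphisms exploited in Lemma~\ref{Lemma: algebraic isomorphism and coherent closures}. Everything else is bookkeeping about running times, using that a sum of polynomially many terms each bounded by \(O_{k,c}(n^{O(k)})\) is again \(O_{k,c}(n^{O(k)})\).
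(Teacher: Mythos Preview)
Your proposal is correct and follows essentially the same approach as the paper: compute a generating set of \(\salgaut{\cconf{C}}\) via Lemma~\ref{Lemma: computing generating set of A(C)}, test each generator with Lemma~\ref{Lemma: Deciding whether algebraic iso is combinatorial in polynomial time}, and conclude via the observation (stated explicitly just before the corollary) that the strict algebraic automorphisms induced by combinatorial ones form a subgroup. Your write-up is in fact more detailed than the paper's, which leaves the corollary without a separate proof.
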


\noindent This allows us to finally prove our first main theorem:
\WLkIDccsFive*
\begin{proof}
	In a first step, we run \(\WL[k]\) on \(G\) to get the \(2\)-induced configuration \(\cconf{C}\coloneqq\CC{k}(G)\).
	By Lemma~\ref{Lemma: identification and separability}, it remains to decide whether \(\cconf{C}\) is separable.
	Now, we eliminate disjoint unions of stars using Lemma~\ref{Lemma: Elimination of disjoint unions of stars I},
	and Lemma~\ref{Lemma: Elimination of disjoint unions of stars II}, while maintaining \(2\)-inducedness of \(\cconf{C}\).
	By Lemma~\ref{Lemma: reduction to strict algebraic automorphisms}, it remains to decide
	whether every strict algebraic automorphism is induced by a combinatorial one.
	This can be achieved using Corollary~\ref{Corollary: quasiseparability in polynomial time}.
	
	If this is the case, the input structure is identified by \WL[k].
	Otherwise, we obtain a strict algebraic automorphism \(f\) which is not induced by
	a combinatorial automorphism. By adding back all interspaces containing a disjoint union of stars,
	we can extend \(f\) to an algebraic isomorphism \(\widehat{f}\colon\CC{k}(G)\to\cconf{D}\)
	which is not induced by a combinatorial isomorphism.
	But then, we can obtain a witnessing graph \(H\) from \(G\) by replacing its edge set by its \(\widehat{f}|_2\)-image
	and similarly translating vertex- and edge-colors along~\(\widehat{f}\).
\end{proof}

\section{Identification for structures with bounded abelian color classes}
The approach we used in Section~\ref{Section: deciding identification for graphs of color multiplicity 5}
to decide the \WL[k]-identification problem for graphs with \(5\)-bounded color classes
does not easily generalize to graphs with larger color classes or to relational structures of higher arity.
In particular, Lemma~\ref{Lemma: reduction to strict algebraic automorphisms} was crucial in the reduction
of \WL[k]-identification to a statement on strict algebraic automorphisms which could be handled
using group-theoretic techniques.
The proof of the lemma was based on an explicit case distinction on the possible
isomorphism types of interspaces, and fails for graphs with larger color classes.
In this section, we show that Lemma~\ref{Lemma: reduction to strict algebraic automorphisms}
remains true in the special case of relational structures with bounded abelian color classes, i.e.,
structures for which the automorphism group of the structure induced on each color class is abelian.
Such structures were already considered in the context of descriptive complexity theory~\cite{abelian_color_classes}
and include both CFI-graphs \cite{CFI} and multipedes \cite{multipedesI, multipedesII} on ordered base graphs.

\subsection{Coherent configurations with abelian fibers}
To start, we translate the concept of abelian color classes to the corresponding concept of abelian fibers
for \(k\)-ary coherent configurations.

A combinatorial automorphism $\phi$ of a \(k\)-ary coherent configuration \(\cconf{D}\)
is \emph{color-preserving} if $\phi$ fixes every basis relation of $\cconf{D}$.
This is equivalent to $\phi$ being an automorphism of every colored variant of $\cconf{D}$
or to the algebraic automorphism induced by $\phi$ being the identity
(recall that combinatorial automorphisms are not required to fix every basis relation, but only the partition of \(V(\cconf{D})^k\) into basis relations).
We say that a coherent configuration~\(\cconf{C}\) has \emph{abelian fibers} if, for each fiber \(X\in\F{\cconf{C}}\),
the group of color-preserving combinatorial automorphisms of \(\cconf{C}[X]\) is abelian.
\begin{lemma}\label{lem:abelian_colors:abelian_colors_implies_abelian_fibers}
Let \(\relstruc{A}\) be a relational structure of arity at most \(k\).
If \(\relstruc{A}\) has abelian color classes, then \(\CC{k}(\relstruc{A})\) has abelian fibers.
\end{lemma}

Recall that an algebraic automorphism of a \(k\)-ary coherent configuration \(\cconf{C}\)
is strict if it fixes every basis relation inside a fiber.
Similarly, we call a combinatorial automorphism \(\phi\) \emph{strict} if it fixes every basis relation inside a fiber,
i.e., if it fixes every fiber and its restriction to every fiber is color-preserving.
This is equivalent to \(\phi\) inducing a strict algebraic automorphism.

Towards understanding the structure of abelian fibers, we start with one simple group-theoretic observation.
\begin{lemma}\label{lem:abelian_colors:abelian+transitive->regular}
Let \(\Gamma\subseteq\Sym(\Omega)\) be an abelian group acting transitively on a set
\(\Omega\). Then the group action is regular.
\begin{proof}
Because \(\Gamma\) acts transitively, the stabilizer subgroup of all elements in \(\Omega\) are pairwise conjugated.
But because \(\Gamma\) is abelian, this implies that they are equal. Thus, every permutation
that stabilizes some element already stabilizes all, which is only true for the identity.
\end{proof}
\end{lemma}

In order to apply this observation to abelian fibers, we need one more definition.
For a fiber \(X\in\F{\cconf{C}}\), a binary basis relation \(S\in\cconf{C}|_2[X]\) is called \emph{thin}
if every vertex in \(X\) is incident to exactly one ingoing and exactly one outgoing \(S\)-edge, that is, if \(S\) is either a matching
or a union of directed cycles.
The fiber \(X\) is called \emph{thin} if all basis relations \(R\in\cconf{C}|_2[X]\) are thin and if this is true for all fibers
of \(\cconf{C}\), we say that \(\cconf{C}\) has \emph{thin fibers}.

\begin{corollary}\label{lem:abelian_colors:small_abelian_fibers_are_thin}
Let \(\cconf{C}\) be a \(k\)-ary coherent configuration.
Then every abelian fiber of order at most \(k\) is thin.
\begin{proof}
Let \(X\in\F{\cconf{C}}\) be an abelian fiber of order at most \(k\).
Then \(\cconf{C}|_2[X]\) is the partition of \(X^2\) into orbits
under the natural action of the group of color-preserving automorphisms.

Pick some \(x\in X\) and assume that some binary basis relation \(S\in\cconf{C}|_2[X]\)
contains two pairs \(xy\) and \(xy'\) for \(y,y'\in X\).
But this implies that there is a color-preserving automorphism \(\phi\) of \(\cconf{C}[X]\) that maps
\(xy\) to \(xy'\). But as the group of color-preserving automorphism of \(\cconf{C}[X]\) is abelian and acts transitvely,
\(\phi(x)=x\) implies \(y'=\phi(y)=y\). Thus, the basis relation \(S\) is thin.
\end{proof}
\end{corollary}

Finally, we need one well-known lemma on the structure of thin fibers,
which essentially states that thin fibers correspond to Cayley graphs of
their automorphism groups.
\begin{lemma}[{\cite[Section 2.1.4]{CC}}]\label{lem:abelian_colors:structure_of_thin_fibers}
Let \(\cconf{C}\) be a \(2\)-ary coherent configuration on a single thin fiber.
Then the basis relations of \(\cconf{C}\) are precisely those of the form
\(S_\phi\coloneqq\{x\phi(x)\colon x\in V(\cconf{C})\}\) for color-preserving combinatorial automorphisms
\(\phi\) of \(\cconf{C}\).
\end{lemma}

\subsection{Separability of configurations with bounded thin fibers}
Let \(\cconf{C}\) be a \(k\)-ary coherent configuration.
Recall that in Section~\ref{sec:stars}, we defined for every functional basis relation,
that is, a basis relation with out-degree at most \(1\) at every vertex of \(\cconf{C}\),
a map
\begin{align*}
\nu_S\colon V(\cconf{C}) &\to V(\cconf{C}),\\
v&\mapsto
	\begin{cases}
		w &\text{if } vw\notin S,\\
		v &\text{if no such } w \text{ exists.}
	\end{cases}
\end{align*}
and, for  \(I\subseteq [k]\), further the maps \(\nu_S^I\colon V(\cconf{C})^k\to V(\cconf{C})^k\)
which act as \(\nu_S\) on all components in \(I\) and as the identity on all components not in \(I\).

Because every thin basis relation \(S\) lies within a single fiber, every vertex of that fiber
also has in-degree exactly \(1\) with respect to \(S\).
This means that not only is every thin basis relation functional, but the maps \(\nu_S\)
are bijective in this case.
Thus, the following lemma is an immediate consequence of Lemma~\ref{lem:stars:definable_maps_are_algebraic}.
\begin{lemma}\label{lem:abelian_colors:thin_relations_define_algebraic_map}
For every \(k\)-ary coherent configuration \(\cconf{C}\), every thin basis relation \(S\in\cconf{C}|_2[X]\),
and every \(I\subseteq [k]\), the map \(\nu_S^{I}\) induces an algebraic automorphism of \(\cconf{C}\).
Furthermore, for every algebraic isomorphism
\(f\colon\cconf{C}\to\cconf{D}\), we get \(f\circ\nu_S^{I}=\nu_{f(S)}^{I}\circ f\).
\end{lemma}

Next, we show that \(k\)-ary coherent configurations with few, thin fibers are separable:
\begin{lemma}\label{lem:abelian_colors:one_pebble_suffices}
Let \(\cconf{C}\) be a \(k\)-ary coherent configuration with at most \(k\) fibers.
If \(\cconf{C}\) has thin fibers, then \(\cconf{C}\) is separable.
\begin{proof}
Let \(\relstruc{C}\) be a colored variant of \(\cconf{C}\),
where we additionally also add all thin basis relations within each fiber
as binary basis relations inside the color classes.
Then \(\cconf{C}=\CC{k}(\relstruc{C})\), which, using Lemma~\ref{Lemma: identification and separability},
implies that \(\cconf{C}\) is separable if and only if \(\relstruc{C}\) is identified by \WL[k].

Thus, assume \(\relstruc{C}\equiv\relstruc{D}\). We argue using the bijective \((k+1)\)-pebble game
that \(\relstruc{C}\simeqq\relstruc{D}\).
We start by considering a winning position \(x\mapsto y\) for Duplicator
in the bijective \((k+1)\)-pebble game between \(\relstruc{C}\) and \(\relstruc{D}\)
with only a single placed pebble pair. Assume Spoiler picks up a second pebble pair
and let \(f\colon V(\relstruc{C})\to V(\relstruc{D})\) be the bijection
that Duplicator provides according to their winning strategy.
Note that for every thin basis relation \(S\) in the color class of \(x\),
the map \(f\) must map the unique \(S\)-neighbor of \(x\) to the unique \(S\)-neighbor of~\(y\),
which completely determines the map~\(f\) on the color class of~\(x\).
If this is not the case, Spoiler can place pebbles on this vertex pair and wins in the next round.
Further note that for every different vertex \(x'\) in the color class of \(x'\),
this same bijection on the color class must also be winning in position
\(xx'\mapsto yf(x')\) and thus also in position \(x'\mapsto f(x')\).
Thus, as long as Spoiler never picks up all pebbles from this color class,
the bijections provided by Duplicator stay fixed on this color class.

Now, assume Spoiler places a pebble in every color class of \(\relstruc{C}\)
and thus reaches a position \(\vec{x}\mapsto \vec{y}\), which is still
winning for Duplicator. Again, let \(f\) be the bijection
that Duplicator provides when Spoiler picks the last remaining pebble pair.

We argue that \(f\) is an isomorphism, by showing that
for every \(k\)-tuple of vertices \(\vec{z}\in V(\relstruc{C}^k)\),
Spoiler can force the game to reach the position \(\vec{z}\mapsto f(\vec{z})\).
Because Duplicator has a winning strategy, these positions
must be partial isomorphisms, which then implies that \(f\) is indeed an isomorphism.

By our previous remarks, it suffices to observe that Spoiler can clearly pebble
all vertices in \(\vec{z}\) without ever removing all pebbles
from some color class that contains some vertex from~\(\vec{z}\).
This way, the bijections provided by Duplicator must always agree with \(f\)
on all color classes containing a vertex from \(\vec{z}\), which proves the claim.
\end{proof}
\end{lemma}

Finally, we are ready to once again reduce the question of separability to only strict algebraic
automorphisms, which we can again deal with using Corollary~\ref{Corollary: quasiseparability in polynomial time}.
\begin{lemma}\label{lem:abelian_colors:autoseparable}
Let \(\cconf{C}\) be a \(k\)-ary coherent configurations with thin fibers.
Then \(\cconf{C}\) is separable if and only if every strict algebraic automorphism of \(\cconf{C}\)
is induced by a combinatorial automorphism.
\begin{proof}
The forward implication is immediate, so it suffices to show the backward implication.
So assume that every strict algebraic automorphism of \(\cconf{C}\) is induced by a combinatorial automorphism
and let \(f\colon\cconf{C}\to\cconf{D}\) be an algebraic isomorphism. We need to show that \(f\) is induced by a combinatorial isomorphism.

By Lemma~\ref{lem:abelian_colors:one_pebble_suffices},
\(f\) is induced by a combinatorial isomorphism \(\psi_{\vec{X}}\colon\cconf{C}[\vec{X}]\to\cconf{D}[f(\vec{X})]\) on every union \(\vec{X}=X_1\cup\dots\cup X_k\) of \(k\) fibers
and thus in particular by a combinatorial isomorphism \(\phi_X\colon\cconf{C}[X]\to\cconf{D}[f(X)]\)
on every single fiber.
We define a bijection \(\phi\colon V(\cconf{C})\to V(\cconf{D})\) by setting \(\phi|_X\coloneqq\phi_X\)
for every fiber \(X\in\F{\cconf{C}}\) and claim that \(\phi\) is a combinatorial isomorphism that
induces \(f\) on every fiber.

Clearly, \(\phi|_X\) is a combinatorial isomorphism inducing \(f\) for every fiber \(X\in\F{\cconf{C}}\),
hence it remains to show that \(\phi\) is also a combinatorial isomorphism on the whole configuration~\(\cconf{C}\).
Because every basis relation \(R\in\cconf{C}\) is contained
in a subconfiguration \(\cconf{C}[\vec{X}]\) induced on the union of \(k\) fibers
\(\vec{X}=X_1\cup\dots \cup X_k\), it suffices to show that \(\phi|_\vec{X}\colon\vec{X}\to f(\vec{X})\)
is a combinatorial isomorphism from \(\cconf{C}[\vec{X}]\) to \(\cconf{D}[f(\vec{X})]\) for every
such \(\vec{X}\).

For this, we first note that the map \(\psi_\vec{X}\) is such a combinatorial isomorphism,
which implies that for every combinatorial automorphism \(\theta\) of \(\cconf{C}[\vec{X}]\),
the map \(\psi_\vec{X}\circ\theta\) is such a combinatorial isomorphism as well.
Thus, it would suffice to prove that the permutation \(\psi_\vec{X}^{-1}\circ\phi|_\vec{X}\)
is a combinatorial automorphism of \(\cconf{C}[X]\).
For this, note since both \(\psi_\vec{X}\) and \(\phi|_\vec{X}\) induce \(f\) on every fiber,
the composition \(\psi_\vec{X}^{-1}\circ\phi|_\vec{X}\) is a color-preserving automorphism
restricted to every fiber \(X_i\subseteq \vec{X}\).

By Lemma~\ref{lem:abelian_colors:structure_of_thin_fibers}, this implies
that for every fiber \(X_i\subseteq\vec{X}\), there is a thin basis relation
\(S_i\in\cconf{C}|_2[X_i]\) such that \(\psi_\vec{X}^{-1}\circ\phi|_{X_i}=\nu_{S_i}|_{X_i}\).
Thus, we can write
\[\psi_\vec{X}^{-1}\circ\phi|_{\vec{X}}=\prod_{i=1}^k \nu_{S_i},\]
which is a composition of combinatorial automorphisms and thus also a combinatorial automorphism.
Thus, we get that
\(\phi|_\vec{X}=\psi_\vec{X}\circ\left(\psi_\vec{X}^{-1}\circ\phi|_\vec{X}\right)\)
is indeed a combinatorial isomorphism for every union of \(k\) fibers \(\vec{X}\), which implies
that \(\phi\) is a combinatorial isomorphism which induces \(f\) on every fiber.

Finally, it follows that \(\phi^{-1}\circ f\) is a strict algebraic automorphism
which, by assumption, is induced by a combinatorial automorphism \(\theta\).
But then, \(\phi\circ\theta\) induces \(f\).
\end{proof}
\end{lemma}

\IdentificationAbelianColors*
\begin{proof}
Let \(\relstruc{A}\) be a relational structure of arity \(r\).
Then \(\relstruc{A}\) is identified by \(\WL[k]\) if and only if
\(\CC{k}(\relstruc{A})\) is separable.
Because \(\CC{k}(\relstruc{A})\) has \(c\)-bounded thin fibers by Lemma~\ref{lem:abelian_colors:abelian_colors_implies_abelian_fibers} and Lemma~\ref{lem:abelian_colors:small_abelian_fibers_are_thin},
Lemma~\ref{lem:abelian_colors:autoseparable} implies
that separability of \(\CC{k}(\relstruc{A})\) is equivalent to every strict
algebraic automorphism of \(\CC{k}(\relstruc{A})\) being induced by a combinatorial automorphism.
This can be checked in the given time using Corollary~\ref{Corollary: quasiseparability in polynomial time},
and in case of a negative answer, we can construct a non-isomorphic but non-distinguished
structure from the strict algebraic automorphism not induced by a combinatorial one
as in Theorem~\ref{Theorem: WL-identification on ccs 5}.
\end{proof}

Note that the restriction to relational structures of arity at most \(k\) is insubstantial,
because the standard variant of the Weisfeiler-Leman algorithm given in Section~\ref{sec:preliminaries}
does not identify any relational structure of arity larger than \(k\),
simply because it does not consider tuples of length larger than \(k\) and thus
cannot even detect whether a relation of arity larger than \(k\) is empty.
While there are variants of \WL[k] which identify some \((k+1)\)-ary relational structures,
these variants can be treated similarly to decide identification by those algorithms.

\section{Hardness}
In this section, we prove hardness results that complement the positive results in the previous two sections.
We start by showing that, when the dimension \(k\) is considered part of the input,
the \WL[k]-equivalence problem and the \WL[k]-identification problem are \coNP-hard
and \NP-hard, respectively.
This is achieved via reductions from \textsc{Tree-width}, which is \NP-hard even
over cubic graphs~\cite{twNPhard}. The reduction is based on the CFI-construction, see Section~\ref{Section: CFI}.

\WLdimNPhard*
\begin{proof}
\textsc{Tree-width} is the problem to decide
whether the tree-width of a given graph \(G\) is at most a given number \(k\).
This problem is \NP-hard even over cubic graphs~\cite{twNPhard}.
By Lemma~\ref{Lemma: CFI vs. tw}, we have \(\tw(G)=\WLdim(\CFI(G,0))\). Thus, computing the tree-width of a cubic graph~\(G\)
reduces to computing the Weisfeiler-Leman dimension of its CFI-graph. Because the CFI-graphs of cubic graphs
have \(4\)-bounded color classes, and CFI-graphs of cubic graphs can be efficiently computed, the hardness
result for graphs with \(4\)-bounded color classes follows.

The claim for the class of simple graphs follows from the observation that we can encode colors into gadgets
we attach to every vertex, and that these gadgets can be chosen such that they do not affect the Weisfeiler-Leman dimension.
\end{proof}
Because in the above proof, we could explicitly construct a non-isomorphic but equivalent graph,
this also yields \coNP-hardness of the \WL[k]-equivalence problem,
which was also independently observed in \cite{seppelt_WLEquivcoNPhard}.
\begin{theorem}
The problem of deciding, for a given pair of graph \(G\) and \(H\) and a natural number \(k\geq 1\),
whether \(G\equiv_{\WL[k]} H\), is \coNP-hard.
\end{theorem}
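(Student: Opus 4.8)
The plan is to reuse the CFI-based reduction from the proof of Theorem~\ref{Theorem: Computing WL-dimension NP-hard}, except that now I would reduce from the \emph{complement} of \textsc{Tree-width} on cubic graphs. Since \textsc{Tree-width} restricted to cubic graphs is \NP-hard~\cite{twNPhard}, the problem of deciding, given a cubic graph \(G\) and \(k \in \N\), whether \(\tw(G) > k\), is \coNP-hard; so it suffices to give a polynomial-time many-one reduction of this problem to the \WL[k]-equivalence problem.

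Given such an instance \((G,k)\), I would proceed as follows. A cubic graph contains a cycle, so \(\tw(G) \geq 2\); hence if \(k \leq 1\) the answer is always ``yes'', and the reduction can output a fixed pair of \WL[1]-equivalent graphs (for instance two copies of the same fixed graph). For \(k \geq 2\), output the pair \((\CFI(G,0), \CFI(G,1))\) together with \(k\), which is computable in polynomial time. For correctness, the key observation is that by Claim~\ref{clm:wl-identifies-cfi} the algorithm \WL[2]---and therefore \WL[k] for every \(k \geq 2\)---distinguishes \(\CFI(G,0)\) from every graph not isomorphic to \(\CFI(G,0)\) or \(\CFI(G,1)\). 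Thus, for \(k \geq 2\), \WL[k] identifies \(\CFI(G,0)\) if and only if it distinguishes \(\CFI(G,0)\) from \(\CFI(G,1)\), and since \WL[k]-identification is monotone in \(k\), this in turn holds exactly when \(k \geq \WLdim(\CFI(G,0))\); by Lemma~\ref{Lemma: CFI vs. tw} (using \(\tw(G) \geq 2\)) we have \(\WLdim(\CFI(G,0)) = \tw(G)\). Consequently \(\CFI(G,0) \equiv_{\WL[k]} \CFI(G,1)\) if and only if \(k < \tw(G)\), i.e.\ exactly when \((G,k)\) is a ``yes''-instance of the complement of \textsc{Tree-width}. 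As in the proof of Theorem~\ref{Theorem: Computing WL-dimension NP-hard}, one can additionally attach rigid color-encoding gadgets to the vertices to transfer the statement to uncolored simple graphs, though the theorem as stated does not require this.

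I do not expect a genuine obstacle: essentially all the work is already encapsulated in Lemma~\ref{Lemma: CFI vs. tw} and Claim~\ref{clm:wl-identifies-cfi}. The only points that need care are getting the direction of the reduction right---reducing from the complement of \textsc{Tree-width} is precisely why the reduction must output the \emph{pair} \(\CFI(G,0), \CFI(G,1)\) rather than a single graph, mirroring how that pair already served as the ``explicit non-isomorphic but equivalent graph'' in the previous proof---and the small-\(k\) special case, which is needed because the equivalence ``\WL[k] identifies \(\CFI(G,0)\) iff \(k \geq \tw(G)\)'' is only justified once \(\tw(G) \geq 2\) and \(k \geq 2\).
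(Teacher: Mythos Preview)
Your proposal is correct and follows essentially the same route as the paper: the paper's proof is the one-line remark that the reduction in Theorem~\ref{Theorem: Computing WL-dimension NP-hard} already produces the explicit pair \(\CFI(G,0),\CFI(G,1)\), which is \WL[k]-equivalent precisely when \(k<\tw(G)\). Your write-up simply unpacks this, adding the harmless treatment of the edge case \(k\leq 1\).
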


Now, we once again turn to the \WL[k]-identification problem for a fixed dimension \(k\geq 2\),
and show that both over uncolored simple graphs, and over
simple graphs with \(4\)-bounded color classes, the problem is \Ptime-hard under logspace-uniform \ACZ-reductions.

We reduce from the \Ptime-hard monotone circuit value problem MCVP~\cite{MCVP}.
Our construction of a graph from a monotone circuit closely resembles
the reductions of Grohe~\cite{WLEquivalencePHard} to show \Ptime-hardness of the \WL[k]-equivalence problem.
A similar reduction was also used to prove \Ptime-hardness of the identification problem
for the color refinement algorithm (\WL[1])~\cite{CRIdentificationPHard}.

The reduction based one so-called \emph{one-way switches},
which were introduced by Grohe~\cite{WLEquivalencePHard}.
These graph gadgets allow color information
computed by the Weisfeiler-Leman algorithm to pass in one direction,
but block it from passing in the other.
And while Grohe provides one-way switches for every dimension of the Weisfeiler-Leman algorithm,
his gadgets have large color classes and are difficult to analyze.
Instead, we give a new construction of such gadgets with \(4\)-bounded color classes
based on the CFI-construction. We then use these one-way switches
to construct a graph from an instance of the monotone circuit value problem from
the identification of which we can read off the answer to the initial MCVP-query.

\subsection{One-way switches}
\label{Section: One-way switches}
In the following sections, we fix a dimension \(k\geq 2\) of the Weisfeiler-Leman algorithm.
A \(k\)-one-way switch is a graph gadget with a pair of \emph{input vertices} \(\{y_1,y_2\}\),
and a pair of output vertices \(\{x_1,x_2\}\), which each form a color class of size \(2\).
We say that \emph{a pair of vertices is split} if the two vertices are colored differently.
We say that \WL[k] splits a pair if the coloring computed by \WL[k] splits the pair.

The fundamental property of the \(k\)-one-way switch is the following:
whenever the input pair \(\{y_1,y_2\}\) of the one-way switch is split, \WL[k] also splits the output pair,
but not the other way around. One-way switches thus only allow one-way flow of \WL[k]-color information.
In contrast to Grohe's gadgets, our one-way switches are based on the CFI-construction, see Section~\ref{Section: CFI}.

We start by defining our base graph. Consider a wall graph consisting of \(k-1\) rows of \(k\) bricks each.
Then, we attach a new vertex \(v\) to the two upper corner vertices of the first row. The resulting graph
\(B_k\) is depicted in Figure~\ref{Figure: wall graph with input vertex}.
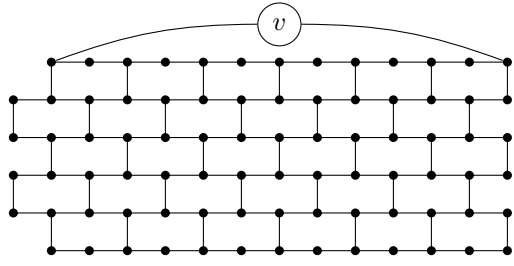
\begin{figure}
\centering
\def\height{6}
\def\width{14}
\begin{tikzpicture}[scale=0.5]
\foreach \y [evaluate=\y as \ypp using \y+1] in {1,...,\height}{
	\foreach \x [evaluate=\x as \s using \x+\y] in {1,...,\width}{
		\ifthenelse{\y>1 \AND \y<\height \OR \x>1}{
			\node[shape=circle,fill=black,inner sep=1.2pt] (\x;\y) at (\x,\y) {};
		}{}
	}
	\ifthenelse{\y=1 \OR \y=\height}{
			\draw (2,\y) -- (\width,\y);
		}{
			\draw (1,\y) -- (\width,\y);		
		}
	
}
\node (1;1) at (-1,-1) {};
\foreach \y [evaluate=\y as \ymm using \y-1] in {2,...,\height}{
	\foreach \x [evaluate=\x as \s using \x+\y] in {1,...,\width}{
		\ifodd\s
		\else
			\draw (\x,\ymm) -- (\x,\y);
		\fi
	}
}

\node[shape=circle, draw=black] (input) at (\width/2+1,\height+1) {\(v\)};
\draw (2;\height) to[bend left=10] (input);
\draw (input) to[bend left=10] (\width;\height);

\end{tikzpicture}
\caption{The base graph \(B_6\) of the CFI-graphs underlying our one-way switches.}
\label{Figure: wall graph with input vertex}
\end{figure}
\begin{lemma}\label{Lemma: Base graph of ows has tree-width k+1}
The graph \(B_k\) has tree-width \(k+1\), while \(B_k-v\) has tree-width \(k\).
\begin{proof}
Because the tree-width of a graph is invariant under subdividing edges,
we can delete~\(v\) and directly connect the two upper corner vertices of \(B_k\).
Let~\(B_k'\) be the resulting graph.

For every positive integer $\ell$, let \(G_{k,\ell}'\) be the graph obtained from a \(k\times\ell\)-grid graph by directly connecting two corners
that lie on a common side of length \(\ell\).
Then \(G_{k,k+1}'\) is a minor of \(B_k'\) and thus \(\tw(B_k')\geq \tw(G_{k,k+1}')\).
Similarly, up to a subdivision of some edges, which does not affect the tree-width,
\(B_k'\) is a subgraph of \(G_{k,2k+1}'\). Thus,
\[\tw(G_{k,k+1}')\leq\tw(B_k')\leq\tw(G_{k,2k+1}').\]
It thus suffices to argue that \(\tw(G_{k,\ell}')=k+1\) for all \(\ell>k\).
Because adding a single vertex or edge to a graph increases the tree-width by at most \(1\),
and the \(k\times\ell\)-grid graph has tree-width \(\min(k,\ell)\), we have
\(\tw(B_k-v)=k\) and \(\tw(G_{k,\ell}')\leq k+1\).

For the matching lower bound, we use the theory of brambles \cite{brambles}.
Recall that a bramble~\(\mathcal{B}\) of a finite graph~\(H\) is a collection of connected subgraphs
of~\(H\) such that every two subgraphs in~\(\mathcal{B}\) either overlap, or are connected by an edge.
The order of a bramble is the size of a minimal hitting set, i.e., the smallest \(k\in\N\)
such that there exist \(k\) vertices \(v_1,\dots,v_k\) such that every
set \(B\in\mathcal{B}\) contains one of the vertices \(v_i\).
It turns out that the maximal order of a bramble in \(H\) is precisely one more
than the tree-width of \(H\)~\cite{brambles}.

For the matching lower bound, it thus suffices to construct a bramble in \(G_{k,\ell}'\) of order at least \(k+2\).
Let the vertex set of \(G_{k,\ell}'\) be \([k]\times[\ell]\),
with the extra edge connecting \((1,1)\) to \((1,\ell)\).
For each \(i\in[k]\setminus\{k\}\) and \(j\in[\ell]\setminus\{\ell\}\),
we define a set
\[U_{i,j}\coloneqq\{(a,b)\in([k]\setminus\{k\})\times([\ell]\setminus\{\ell\})\colon a=i \text{ or } b=j\}\]
to be the union of the \(i\)-th row and \(j\)-th column, while excluding
the vertices from the last row or column.
Further, we take the two sets
\[R_k\coloneqq\{k\}\times[\ell]\quad\text{and}\quad C_\ell\coloneqq ([k]\setminus\{1,k\})\times\{\ell\}.\]
Finally, we define our bramble to be
\begin{align*}
&\bigl\{R_k,C_\ell\bigr\}\\
&{}\cup\bigl\{U_{i,j}\colon i\in[k]\setminus\{1,k\}, j\in[\ell]\setminus\{\ell\}\bigr\}\\
&{}\cup\Bigl\{\bigl(U_{1,j}\cup\{(1,\ell)\}\bigr)\setminus\{(1,i)\}\colon j\in[\ell]\setminus\{\ell\},i\in[\ell]\setminus\{j\}\bigr\}.
\end{align*}

To argue that this bramble has order at least \(k+2\), we need to show that it does not admit a hitting set of order at most \(k+1\),
that is, we must argue that for every choice of \(k+1\) vertices there exists some set in the bramble which contains none of the vertices.
Assume for contradiction that \(H\subseteq V(G_{k,\ell}')\) was such a hitting set.
Because the sets \(R_k\) and \(C_\ell\) are disjoint from all other sets, \(H\) must contain at least one vertex in each of these sets.
Hence, only \(k-1\) vertices remain to hit all other sets in the bramble. These \(k-1\) vertices must miss one of the first \(\ell-1\) columns,
say column \(j\). If they also miss one of the first \(k-1\) rows, say row \(i\), then they miss the set \(U_{i,j}\).
Otherwise, there is precisely one vertex per row. Let \((1,i)\) be the vertex in the first row.
But then the set misses \(\bigl(U_{1,j}\cup\{(1,\ell)\}\bigr)\setminus\{(1,i)\}\).
\end{proof}
\end{lemma}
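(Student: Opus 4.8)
The plan is to determine both tree-widths by comparing \(B_k\) to suitably modified grid graphs, for which the tree-width is well understood, and then to pin down the tree-width of those modified grids exactly using brambles.

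Since tree-width is unchanged by subdividing edges or suppressing degree-two vertices, I would first replace the length-two path through \(v\) by a single edge joining the two upper corner vertices of the wall, obtaining a graph \(B_k'\) with \(\tw(B_k')=\tw(B_k)\) (deleting \(v\) instead leaves the plain \((k-1)\)-row, \(k\)-brick wall). For a positive integer \(\ell\), let \(G_{k,\ell}'\) be the \(k\times\ell\) grid with one extra edge joining two corners that lie on a common side of length \(\ell\). Drawing \(B_k'\) appropriately shows that, up to subdivisions, \(G_{k,k+1}'\) is a minor of \(B_k'\) while \(B_k'\) is a subgraph of \(G_{k,2k+1}'\), so \(\tw(G_{k,k+1}')\le\tw(B_k')\le\tw(G_{k,2k+1}')\). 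The same sandwich without the extra corner edge places \(B_k-v\) (up to subdivisions) between a \(k\times(k+1)\) grid as a minor and a \(k\times(2k+1)\) grid as a subgraph, and since a \(k\times m\) grid has tree-width \(\min(k,m)\), this already gives \(\tw(B_k-v)=k\). It thus remains to show \(\tw(G_{k,\ell}')=k+1\) for all \(\ell>k\).

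The upper bound is easy: \(G_{k,\ell}'\) arises from the \(k\times\ell\) grid (tree-width \(\min(k,\ell)=k\)) by adding a single edge, and adding an edge raises tree-width by at most one. For the matching lower bound I would invoke bramble--tree-width duality: a graph has tree-width at least \(t\) if and only if it carries a bramble of order at least \(t+1\), where a bramble is a family of connected subgraphs that pairwise \emph{touch} (share a vertex or are joined by an edge) and its order is the minimum size of a set of vertices meeting every member. Hence it suffices to build a bramble of order at least \(k+2\) in \(G_{k,\ell}'\).

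Constructing that bramble, and in particular arguing that it admits no hitting set of size \(k+1\), is the crux of the proof and the step I expect to be the main obstacle: the construction must genuinely exploit the extra corner edge, since without it the tree-width is only \(k\). I would identify the vertex set with \([k]\times[\ell]\), with the extra edge joining \((1,1)\) and \((1,\ell)\), and take as bramble members the full last row \(R_k=\{k\}\times[\ell]\); the partial last column \(C_\ell=([k]\setminus\{1,k\})\times\{\ell\}\); for \(i\in[k]\setminus\{1,k\}\) and \(j\in[\ell-1]\) the \emph{cross} \(U_{i,j}\) consisting of row \(i\) and column \(j\) truncated to the first \(k-1\) rows and \(\ell-1\) columns; and for \(j\in[\ell-1]\) and \(i\in[\ell]\setminus\{j\}\) the modified first-row crosses \((U_{1,j}\cup\{(1,\ell)\})\setminus\{(1,i)\}\), which stay connected after deleting \((1,i)\) precisely because of the extra edge. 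Then I would check that every member is connected; that any two members touch (the crosses share grid vertices or adjacent ones, and \(R_k,C_\ell\) meet every truncated row and column because \(k\) and \(\ell\) are large enough); and finally that no \(k+1\) vertices form a hitting set: \(R_k\) and \(C_\ell\) are disjoint from each other and from all other members, forcing one vertex into each and leaving only \(k-1\) vertices, which must miss some column \(j\le\ell-1\); if they also miss some row \(i\le k-1\) they miss \(U_{i,j}\), and otherwise they occupy exactly one vertex per row, say \((1,i)\) in the first row, whereupon they miss \((U_{1,j}\cup\{(1,\ell)\})\setminus\{(1,i)\}\). This yields a bramble of order at least \(k+2\), hence \(\tw(G_{k,\ell}')\ge k+1\) and therefore \(\tw(B_k)=k+1\).
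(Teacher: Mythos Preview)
Your proposal is correct and follows essentially the same approach as the paper: the same reduction to the modified grids \(G_{k,\ell}'\), the same sandwich bounds, and the identical bramble construction (the sets \(R_k\), \(C_\ell\), the crosses \(U_{i,j}\), and the modified first-row crosses exploiting the extra edge) together with the same hitting-set counting argument. If anything, you are slightly more explicit than the paper in noting that connectedness and the touching property of the bramble members must be verified.
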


\noindent Now, we are ready to construct our one-way switches.
\newcounter{enumcounter}
\begin{lemma}[{compare \cite[Lemma 14]{WLEquivalencePHard}}]\label{Lemma: Existence of one-way switches}
For every \(k\geq 2\), there is a colored graph \(O^k\) with \(4\)-bounded color classes,
called \emph{\(k\)-one-way switch}, with an input pair \(\{y_1,y_2\}\) and an output pair \(\{x_1,x_2\}\)
satisfying the following properties:
\begin{enumerate}
\item \label{Property: split ows identified}
	The graph \(O^k_{\spl}\) obtained by splitting the input pair \(\{y_1,y_2\}\) is identified by \WL[k].
\item \label{Property: split ows splits output pair}
	\WL[k] splits the output pair \(\{x_1,x_2\}\) of \(O^k_{\spl}\).
\item \label{Property: output vertices in different orbits}
	There is no automorphism of \(O^k\) exchanging the output vertices \(x_1\) and \(x_2\).
	\setcounter{enumcounter}{\value{enumi}}
\end{enumerate}
Furthermore, there are sets of positions in the bijective \((k+1)\)-pebble game between \(O^k\)
and itself, called \emph{trapped} and \emph{twisted} such that
\begin{enumerate}
	\setcounter{enumi}{\value{enumcounter}}
\item \label{Property: Trapped and twisted partial isos}
	every trapped or twisted position is a partial isomorphism,
\item \label{Property: Trapped and twisted winning}
	Duplicator can avoid non-trapped positions from trapped ones and non-twisted positions from twisted ones,
\item \label{Property: Trapped compatible with output pair}
	for every trapped position \(\vec{a}\mapsto\vec{b}\), the position \(\vec{a}x_1\mapsto\vec{b}x_1\)
	is also trapped,\footnote{ If the position \(\vec{a}x_1\mapsto\vec{b}x_1\) contains more than \(k+1\) pebbles, this means that every subposition on at most \(k+1\) pebbles is trapped}
\item \label{Property: Twisted compatible with output pair}
	for every twisted position \(\vec{a}\mapsto\vec{b}\), the position \(\vec{a}x_1\mapsto\vec{b}x_2\) is also twisted,
\item \label{Property: Switching input is trapped and twisted}
	the positions \(y_1y_2\mapsto y_1y_2\) and \(y_1y_2\mapsto y_2y_1\) are both trapped and twisted,
\item \label{Property: Subpositions of trapped and twisted}
	every subposition of a trapped position is trapped, and every subposition of a twisted position is twisted
\end{enumerate}
\begin{proof}
Let \(O^k\) be the (untwisted) CFI-graph of \(B_k\), but with a CFI-gadget of degree \(3\) added for the vertex \(v\)
instead of a gadget of degree \(2\).
This leaves one outer pair of this gadget free
which we use as our output pair \(\{x_1,x_2\}\).
Furthermore, we use one of the other two outer pairs of this same CFI-gadget
as the input pair \(\{y_1,y_2\}\).

Now, if we fix the output pair \(\{x_1,x_2\}\) by individualizing one of the two vertices,
the resulting graph corresponds to the usual CFI-graph of \(H\),
while switching the pair \(\{x_1,x_2\}\) corresponds to the twisted CFI-graph of \(H\). In particular,
as these graphs are not isomorphic, there is no automorphism of \(O^k\) switching the pair \(\{x_1,x_2\}\),
which proves Property~\ref{Property: output vertices in different orbits}.

Moreover, splitting the input pair \(\{y_1,y_2\}\) has the same effect to the power of \WL[k]
as removing one of the two edges incident to \(v\) in the base graph \(B_k\) has.
When removing this edge in the base graph, the resulting graph is essentially equivalent
to the CFI-graph of the \(k\times(k+1)\)-wall graph with one corner vertex replaced by a CFI-gadget
of degree~\(3\) instead of~\(2\). Because exchanging the two vertices of the free outer pair
of this degree-3 gadget interchanges the twisted and untwisted CFI-graphs over the base graph,
and \WL[k] can distinguish CFI-graphs from all other graphs, the resulting graph is identified
by \WL[k]. This proves Property~\ref{Property: split ows identified}.

To show Property~\ref{Property: split ows splits output pair},
we start the bijective \((k+1)\)-pebble game in position \(x_1\mapsto x_2\).
Then, Spoiler uses the usual strategy of pebbling a wall which they then
move from one side of the wall graph to the other. But because the game started in position
\(x\mapsto x'\), the two graphs the game is played on differ in a twist which will finally force
Duplicator to lose.

Now, consider again the original graph \(O^k\) without splitting the input pair.
On this graph, we can extend every winning position for Duplicator in the bijective \(k\)-pebble game
between the untwisted CFI-graph \(\CFI(B_k,0)\) and the twisted CFI-graph \(\CFI(B_k,1)\) to a position in the bijective \(k\)-pebble game
between \(O^k\) and itself which is compatible with \(x_1\mapsto x_2\).
Similarly, we can extend every winning position for Duplicator
in the bijective \(k\)-pebble game between the untwisted CFI-graph \(\CFI(B_k,0)\)
and itself to a position between \(O^k\) and itself which is compatible with \(x_1\mapsto x_1\).

We call the former positions \emph{twisted} and the latter positions \emph{trapped}.
Properties~\ref{Property: Trapped and twisted partial isos},
\ref{Property: Trapped compatible with output pair}, \ref{Property: Twisted compatible with output pair}
and \ref{Property: Subpositions of trapped and twisted}
are then immediate,
and Property~\ref{Property: Trapped and twisted winning} follows from
Lemma~\ref{Lemma: CFI vs. tw} together with Lemma~\ref{Lemma: Base graph of ows has tree-width k+1}.

Because \(v\) lies on a cycle in \(B_k\), there exists an automorphism of \(\CFI(B_k)\)
which twists both outer pairs of the gadget corresponding to \(v\).
Lifting this automorphism to \(O^k\) yields an automorphism switching \(y_1\) and \(y_2\)
whilst fixing \(x_1\) and \(x_2\).
This proves Property~\ref{Property: Switching input is trapped and twisted}.
\end{proof}
\end{lemma}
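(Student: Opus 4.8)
The plan is to build $O^k$ as the CFI-graph over the base graph $B_k$ of Lemma~\ref{Lemma: Base graph of ows has tree-width k+1}, with the single change that the degree-$2$ gadget at the extra vertex $v$ is replaced by a gadget of degree~$3$. That enlarged gadget then has three outer pairs: two of them are matched, along the two edges of $B_k$ at $v$, to the gadgets of the upper wall corners, while the third is left unattached and serves as the output pair $\{x_1,x_2\}$; one of the two matched outer pairs is designated as the input pair $\{y_1,y_2\}$. Since $B_k$ is subcubic and every CFI-gadget involved has degree at most $3$, each colour class has order at most $\max(2^{3-1},2)=4$, so $O^k$ has colour multiplicity~$4$. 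All the claimed properties will be read off from the behaviour of \WL[k] and of the bijective $(k+1)$-pebble game on CFI-graphs, using the two tree-width values $\tw(B_k)=k+1$ and $\tw(B_k-v)=k$ from Lemma~\ref{Lemma: Base graph of ows has tree-width k+1} together with Lemma~\ref{Lemma: CFI vs. tw} and the claims inside its proof.

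I would first settle the static properties~\ref{Property: output vertices in different orbits}, \ref{Property: split ows identified} and~\ref{Property: split ows splits output pair}. Individualising one vertex of the output pair turns $O^k$ into a graph equivalent to the ordinary CFI-graph over $B_k$, and exchanging $x_1$ with $x_2$ passes between the untwisted and the twisted CFI-graph over that base graph; since the twisted and untwisted CFI-graphs over a connected base graph are never isomorphic, no automorphism of $O^k$ can exchange $x_1$ and $x_2$, which is Property~\ref{Property: output vertices in different orbits}. Splitting the input pair has the same effect on \WL[k] as deleting one of the two edges at $v$ from $B_k$, and the resulting base graph is equivalent to the $k\times(k+1)$-wall with one corner carrying a degree-$3$ gadget; this graph has tree-width $k\geq 2$, so by Lemma~\ref{Lemma: CFI vs. tw} its CFI-graphs are identified by \WL[k], and since \WL[k] distinguishes CFI-graphs from every other graph by Claim~\ref{clm:wl-identifies-cfi}, the graph $O^k_{\spl}$ is identified by \WL[k], which is Property~\ref{Property: split ows identified}. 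For Property~\ref{Property: split ows splits output pair} I play the bijective $(k+1)$-pebble game on $O^k_{\spl}$ against itself starting from the position $x_1\mapsto x_2$: pinning the output pair in this way makes the two copies differ by a twist in the underlying wall, and since that wall has tree-width $k$, Spoiler wins by the usual strategy of pebbling a separating row and pushing it across the wall (via Claims~\ref{clm:wl-identifies-cfi} and~\ref{clm:cfi-inner-iff-inner-and-outer} and Lemma~\ref{lem:wl-bijective-pebble-game}), so \WL[k] splits $\{x_1,x_2\}$ in $O^k_{\spl}$.

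For the remaining properties I define \emph{trapped} and \emph{twisted} positions of the bijective $(k+1)$-pebble game between $O^k$ and itself. Because $\tw(B_k)=k+1>k$, Lemma~\ref{Lemma: CFI vs. tw} gives that Duplicator wins the bijective $(k+1)$-pebble game on $\CFI(B_k,0)$ against $\CFI(B_k,1)$, and (trivially) on $\CFI(B_k,0)$ against itself. A position of the game between $O^k$ and itself is called \emph{twisted} if, after adjoining the pins that individualise $x_1$ on the left copy and $x_2$ on the right copy, it is a Duplicator-winning position of the $\CFI(B_k,0)$-against-$\CFI(B_k,1)$ game, and \emph{trapped} if, after individualising $x_1$ on both copies, it is a Duplicator-winning position of the $\CFI(B_k,0)$-against-itself game. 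With these definitions, Properties~\ref{Property: Trapped and twisted partial isos}, \ref{Property: Trapped compatible with output pair}, \ref{Property: Twisted compatible with output pair} and~\ref{Property: Subpositions of trapped and twisted} are immediate --- Duplicator-winning positions are partial isomorphisms, adjoining a consistent pin on $x_1$ keeps a position Duplicator-winning, and restrictions of Duplicator-winning positions are Duplicator-winning --- and Property~\ref{Property: Trapped and twisted winning} is exactly the assertion that Duplicator can maintain these two subgames forever. Finally, for Property~\ref{Property: Switching input is trapped and twisted} I use that $v$ lies on a cycle of $B_k$ (it joins two corners of the wall): composing the local CFI-gadget flips along such a cycle yields an automorphism of $\CFI(B_k)$ that swaps both outer pairs of the $v$-gadget that lie on edges of $B_k$ while fixing the free outer pair; lifted to $O^k$ it is an automorphism swapping $y_1$ with $y_2$ and fixing $x_1,x_2$, so it carries the trivially trapped-and-twisted position $y_1y_2\mapsto y_1y_2$ to $y_1y_2\mapsto y_2y_1$, making the latter trapped and twisted as well.

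The step I expect to be the main obstacle is making the recurring phrase ``equivalent to the CFI-graph of $\ldots$'' precise: enlarging a gadget from degree $2$ to degree $3$, splitting an outer pair, and deleting an edge of the base graph all yield graphs that are only \WL[k]- and pebble-game-equivalent to honest CFI-graphs rather than literally equal to them. I would therefore isolate a small transfer lemma, in the spirit of Claims~\ref{clm:wl-identifies-cfi}--\ref{clm:cfi-inner-iff-inner-and-outer}, that carries Spoiler's and Duplicator's strategies and \WL[k]-distinguishability between these variants and the standard construction, and then track carefully which outer pair plays the role of input and which of output under each translation. A secondary but essential point is that Property~\ref{Property: split ows splits output pair} needs the $(k+1)$-pebble game aligned with tree-width $\leq k$ (not $\leq k+1$), so the argument genuinely relies on the edge deletion lowering the tree-width from $k+1$ to $k$, which is precisely the content of Lemma~\ref{Lemma: Base graph of ows has tree-width k+1}.
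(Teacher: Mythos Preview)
Your proposal is correct and follows essentially the same approach as the paper: the construction of $O^k$, the choice of input and output pairs, the arguments for Properties~\ref{Property: split ows identified}--\ref{Property: output vertices in different orbits}, the definition of trapped and twisted positions via Duplicator-winning positions in the $\CFI(B_k,0)$-vs-$\CFI(B_k,0)$ and $\CFI(B_k,0)$-vs-$\CFI(B_k,1)$ games after pinning the output pair, and the cycle-automorphism argument for Property~\ref{Property: Switching input is trapped and twisted} all match the paper's proof. Your closing remark about needing a transfer lemma for the ``equivalent to the CFI-graph'' identifications is apt; the paper leaves these identifications at the same informal level you describe.
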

In the bijective \((k+1)\)-pebble game on \(O^k\),
we say that Duplicator follows a \emph{trapped}
or \emph{twisted strategy}  if Spoiler can never reach a non-trapped or non-twisted position respectively.
Note that Properties~\ref{Property: Trapped and twisted partial isos} and~\ref{Property: Trapped and twisted winning}
together imply that trapped and twisted strategies are winning strategies.
\newcommand{\AND}{\mathrm{AND}}
\newcommand{\OR}{\mathrm{OR}}

\subsection{From monotone circuits to graphs}
We now reduce the monotone circuit value problem MCVP to the \WL[k]-identification problem.
A monotone circuit \(M\) is a circuit consisting of input nodes, each of which has value either \(\True\) or \(\False\),
a distinguished output node, and inner nodes, which are either \(\AND\)- or \(\OR\)-nodes with two inputs each.
We write \(V(M)\) for the set of nodes of \(M\). With the monotone circuit \(M\), we can associate
the evaluation function \(\val_M\colon V(M)\to\{\True,\False\}\),
which is defined in the obvious way.
The monotone circuit value problem MCVP is the following problem:
given a monotone circuit \(M\) with output node \(c\), decide whether \(\val_M(c)=\True\).
This problem is known to be hard for polynomial time~\cite{MCVP}.

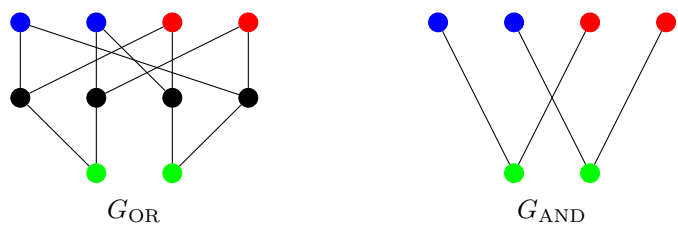
\begin{figure}
\centering
\begin{tikzpicture}
\vertex[green]{x} {1,0};
\vertex[green]{x'}{2,0}; 

\vertex{k0}{0,1};
\vertex{k1}{1,1};
\vertex{k2}{2,1};
\vertex{k3}{3,1};

\vertex[red]{i} {2,2};
\vertex[red]{i'}{3,2};

\vertex[blue]{j} {0,2};
\vertex[blue]{j'}{1,2};

\draw (x) --  (k0) (x) --  (k1);
\draw (x') -- (k2) (x') -- (k3);

\draw (i) --  (k0) (i) --  (k2);
\draw (i') -- (k1) (i') -- (k3);

\draw (j) --  (k0) (j) --  (k3);
\draw (j') -- (k1) (j') -- (k2);

\node at (1.5,-0.5) {\(G_{\OR}\)};
\end{tikzpicture}
\hspace{2cm}
\begin{tikzpicture}
\vertex[red]{i} {2,2};
\vertex[red]{i'}{3,2};

\vertex[blue]{j} {0,2};
\vertex[blue]{j'}{1,2};

\vertex[green]{y} {1,0};
\vertex[green]{y'}{2,0}; 

\draw (i)  -- (y)  -- (j);
\draw (i') -- (y') -- (j');

\node at (1.5,-0.5) {\(G_{\AND}\)};
\end{tikzpicture}
\caption{The gadgets \(G_{\OR}\) and \(G_{\AND}\) encoding \(\OR\)- and \(\AND\)-gates.
	We call the two pairs at the top their \emph{input pairs} and their bottom pair
	their \emph{output pair}.}
\label{Figure: logic gates}
\end{figure}

Now, let \(M\) be such a monotone circuit. We construct a colored graph \(G_M\)
such that for every node \(a\in V(M)\), there is a vertex pair \(\{a_1,a_2\}\) in \(G_M\) which will be split by \(\WL[k]\) if and only if \(\val_M(a)=\False\).
We use the two graphs \(G_{\OR}\) and \(G_{\AND}\) in Figure \ref{Figure: logic gates}
as gadgets to replace the logic gates in our construction of \(G_M\).
These gadgets both have two input pairs and one output pairs such that
exchanging the two output vertices by an automorphism
requires the two vertices of one (for \(G_{\OR}\)) or both (for \(G_{\AND}\)) input pairs
to also be exchanged.

\begin{figure}
\centering
\newcommand{\splitpair}[5]{
	\begin{scope}[rotate around={#5:(#4)}]
		\vertex[#1]{#3}  {$(#4)+(-0.25,0)$};
		\vertex[#2]{#3'} {$(#4)+( 0.25,0)$};
	\end{scope}
}
\newcommand{\pair}[4][black]{\splitpair{#1}{#1}{#2}{#3}{#4}}

\newcommand{\connectPair}[3][]{
	\draw (#2) edge[#1] (#3) (#2') edge[#1] (#3');
}

\newcommand{\Rectangle}[5][]{
	\begin{scope}[rotate around={#3:(#2)}]
		\draw ($(#2)-(#4/2,#5/2)$) rectangle ($(#2)+(#4/2,#5/2)$) {};
		\node at (#2) {#1};
	\end{scope}
}

\newcommand{\Gate}[4][black]{
	\coordinate (#3i1)  at ($(#2)+(-0.8,0.4)$);
	\coordinate (#3i1') at ($(#2)+(-0.3,0.4)$);
	\coordinate (#3i2)  at ($(#2)+( 0.3,0.4)$);
	\coordinate (#3i2') at ($(#2)+( 0.8,0.4)$);
		
	\draw ($(#2)-(0.9,0.4)$) rectangle ($(#2)+(0.9,0.4)$) {};
	\node at (#2) (a) {#4};

	\pair[#1]{#3}{$(#2)+(0,-0.4)$}{0}
}

\newcommand{\OWS}[4][0]{
	\begin{scope}[shift={(#2)}, rotate around={#1:(0,0)}]
	\coordinate (ly)  at (-0.25, 0.7);
	\coordinate (ly') at ( 0.25, 0.7);
	\coordinate (lx)  at (-0.25,-0.7);
	\coordinate (lx') at ( 0.25,-0.7);
		
	\draw (-0.35,-0.7) rectangle (0.35,0.7) {};
	\node[rotate=-90+#1] at (0,0) (a) {\(\longrightarrow\)};

	\connectPair{#3}{lx}
	\connectPair{#4}{ly}
	\end{scope}
}

\begin{tikzpicture}[scale=0.8]
\vertex[red]{i1}{0,0};
\vertex[blue]{i2}{1.5,0};
\vertex[green]{i3}{3,0};

\node[above=0.25] at (i1) {\(\False\)};
\node[above=0.25] at (i2) {\(\False\)};
\node[above=0.25] at (i3) {\(\True\)};

\node[shape=circle, draw=cyan, inner sep=2pt] (AND) at (0.75,-2.9)  {\(\land\)};
\node[shape=circle, draw=magenta, inner sep=2pt] (OR)  at (1.5,-5.8) {\(\lor\)};

\draw (i1) edge[->] (AND) (i2) edge[->] (AND) (AND) edge[->] (OR) (i3) edge[->] (OR);

\node at (1.5,-7.5) {\(M\)};

\begin{scope}[xshift=6cm]
\splitpair{red,draw=black}{orange}{i1}{0,0}{0}
\splitpair{blue,draw=black}{red!50!blue}{i2}{1.5,0}{0}
\pair[green]{i3}{3,0}{0}

\Gate[cyan]{0.75,-2.5}{and}{\(G_{\AND}\)}
\Gate[magenta]{1.5,-5.4}  {or} {\(G_{\OR}\)}

\OWS{0,  -1}{andi1}{i1}
\OWS{1.5,-1}{andi2}{i2}

\OWS{0.75,-3.9}{ori1}{and}
\OWS{2.25,-3.9}{ori2}{i3}

\begin{scope}[shift={(2.75,-6.5)}, rotate around={90:(0,0)}]
	\coordinate (ly)  at (-0.25, 0.7);
	\coordinate (ly') at ( 0.25, 0.7);
	\coordinate (lx)  at (-0.25,-0.7);
	\coordinate (lx') at ( 0.25,-0.7);
		
	\draw (-0.35,-0.7) rectangle (0.35,0.7) {};
	\node at (0,0) (a) {\(\longrightarrow\)};
\end{scope}
\connectPair{ly}{or}
\connectPair[out=40, in=-45]{lx}{i3}

\node at (1.5,-7.5) {\(G_M\)};
\end{scope}
\end{tikzpicture}

\caption{A simple monotone circuit \(M\) and the graph \(G_M\) obtained from it.
	The colors in \(M\) are just for illustration purposes and not part of the actual circuit.}
\label{Figure: Graph obtained from monotone circuit}
\end{figure}

We now start with the formal construction of \(G_M\), which is depicted in Figure~\ref{Figure: Graph obtained from monotone circuit}.
For every node \(a\) of \(M\), we add a pair of vertices \(\{a_1,a_2\}\) forming a color class to \(G_M\).
To encode the input valuation of the circuit,
we split every pair \(\{a_1,a_2\}\) corresponding to an input node \(a\)
of value \False.

For every \(\AND\)-node \(a\in V(M)\) with input nodes \(b\) and \(b'\),
we add a freshly colored copy of the gadget \(G_{\AND}\) from
Figure~\ref{Figure: logic gates}
and identify its output pair with the pair \(\{a_1,a_2\}\).
Next, we connect its two input pairs via freshly colored one-way switches
\(O_{ba}^k\) and \(O_{b'a}^k\)
to the pairs \(\{b_1,b_2\}\) and \(\{b_1',b_2'\}\) respectively.
More precisely, we identify the input pairs of these one-way switches with \(\{b_1,b_2\}\)
or \(\{b_1',b_2'\}\) respectively and identify their output pair with the
respective input pair of the copy of \(G_{\AND}\).
Analogously, we add a copy of \(G_{\OR}\) for every \(\OR\)-node \(a\in V(M)\)
and connect it to its input via one-way switches as before.

This concludes the translation of the circuit itself, but for our reduction to the
identification problem we need one more step:
we connect all input pairs \(\{a_1,a_2\}\) with \(\val_M(a)=\True\) to the output pair
\(\{c_1,c_2\}\) via additional one-way switches \(O^k_{ca}\), whose input pair we identify with \(\{c_1,c_2\}\)
and whose output pair we identify with \(\{a_1,a_2\}\).

Let \(G_M\) be the resulting graph. Because we color different gadgets using distinct colors,
and every gadget has \(4\)-bounded color classes, the resulting graph \(G_M\) also has \(4\)-bounded color classes,
which could be made abelian by introducing colored edges within the gadgets.
Indeed, note that every pair of vertices in a color class of order \(4\) has precisely one common neighbor.
If we connect each such pair by an edge whose color encodes the color of the common neighbor,
the resulting color class is abelian. Note further that this coloring is computed by \WL[2],
which means that adding it does not affect the power of the Weisfeiler-Leman algorithm on these graphs.

Moreover, the graph \(G_M\) can be constructed in \(\ACZ\) because we only
need to replace every node and edge of \(M\) by a gadget of bounded size that only
depends on whether the node is an input node, an \(\AND\)-node or an \(\OR\)-node.

Recall that \(G_M\) contains, for every node \(a\) of \(M\), a vertex pair \(\{a_1,a_2\}\).
We first prove that the vertex pairs of all nodes evaluating to \False are indeed split by \WL[k].
\begin{lemma}[compare {\cite[Section 5.4]{WLEquivalencePHard}, \cite[Proof of Theorem 7.11]{CRIdentificationPHard}}]\label{Lemma: G_M splits false pairs}
For every monotone circuit \(M\) with output node \(c\), and every node \(a\) of \(M\), if either \(\val_M(a)=\False\) or \(\val_M(c)=\False\), then
\((G_M,a_1)\not\equiv_{\WL[k]}(G_M,a_2)\)
\begin{proof}
We split the proof into two claims.
\begin{claim}
If \(\val_M(a)=\False\), then \((G_M,a_1)\not\equiv_{\WL[k]}(G_M,a_2)\).
\begin{claimproof}
We argue by induction on the depth of~\(a\) in the circuit~\(M\).
If~\(a\) is an input node, then~\(a_1\) and~\(a_2\) have different colors and are thus distinguished
by construction.
Thus, assume that \(a\) is an inner node with parents \(b\) and \(b'\)
and assume by induction that the claim is true for both \(b\) and \(b'\).

If \(a\) is an \(\AND\)-node, assume w.l.o.g. that \(\val_M(b)=\False\).
By the induction hypothesis, \((G_M,b_1)\not\equiv_{\WL[k]}(G_M,b_2)\).
But this means that the input pair of the one-way switch \(O^k_{ba}\)
is split, which implies by Property~\ref{Property: split ows splits output pair}
of the one-way switch that also the output pair is split.
But this output pair is one of the input pairs of the \(\AND\)-gadget \(G_{\AND}\),
whose output pair is \(\{a_1,a_2\}\). Thus, \(\WL[k]\) also splits the pair \(\{a_1,a_2\}\).

If \(a\) is an \(\OR\)-node, then \(\val_M(b)=\val_M(b')=\False\). Thus,
by the induction hypothesis, both of the pairs \(\{b_1,b_2\}\) and \(\{b_1',b_2'\}\) are split
by \(\WL[k]\). Using Property~\ref{Property: split ows splits output pair} of the one-way switches~\(O^k_{ba}\) and~\(O^k_{b'a}\), this means
that \(\WL[k]\) also splits both input pairs of the \(\OR\)-gadget \(G_\OR\)
whose output pair is \(\{a_1,a_2\}\). However, all inner vertices of the gadget \(G_{\OR}\)
are uniquely determined by their neighborhood among the input pairs, and the two vertices of
the output pair have distinct neighborhoods among these inner vertices. Thus,
\(\WL[k]\) also splits the pair \(\{a_1,a_2\}\).
\end{claimproof}
\end{claim}

\begin{claim}
If \(\val_M(c)=\False\), then for all nodes \(a\) of \(M\) it holds that \((G_M,a_1)\not\equiv_{\WL[k]}(G_M,a_2)\).
\begin{claimproof}
Using the previous claim, we get that \WL[k] splits the pair \(\{c_1,c_2\}\).
Using the one-way switches \(O^k_{ca}\) connecting the pair \(\{c_1,c_2\}\)
to all pairs \(\{a_1,a_2\}\) corresponding to input nodes of value \True,
this means that \WL[k] now splits all pairs corresponding to input nodes
by Property~\ref{Property: split ows splits output pair} of these one-way switches.
Thus, the graph now simulates the circuit with all input nodes set to \False.
But as the circuit is monotone, this means that also all inner nodes now evaluate to \False.
By the previous claim, \WL[k] thus also splits all pairs \(\{a_1,a_2\}\) corresponding to inner nodes. 
\end{claimproof}
\end{claim}
Combining the two claims finishes the proof.
\end{proof}
\end{lemma}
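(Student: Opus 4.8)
The plan is to establish the two implications in the statement separately, following the order of the hypothesis, and to carry the first one by induction on the structure of the circuit.

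First I would prove: \emph{if \(\val_M(a)=\False\), then \(\WL[k]\) splits \(\{a_1,a_2\}\)}, by induction on the depth of \(a\) in \(M\). If \(a\) is an input node, then by construction \(a_1\) and \(a_2\) already carry distinct colours, so there is nothing to do. For an inner node \(a\) with parents \(b,b'\), the engine of the step is the behaviour of the one-way switches: whenever the input pair \(\{b_1,b_2\}\) of a copy \(O^k_{ba}\) is split by \(\WL[k]\) in \(G_M\), the restriction of the \(\WL[k]\)-colouring of \(G_M\) to that copy is at least as fine as the one \(\WL[k]\) computes on \(O^k_{\spl}\); since \(\WL[k]\) splits the output pair of \(O^k_{\spl}\) by Property~\ref{Property: split ows splits output pair} of Lemma~\ref{Lemma: Existence of one-way switches}, it splits the output pair of \(O^k_{ba}\) inside \(G_M\) as well. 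If \(a\) is an \(\AND\)-node, then \(\val_M(a)=\False\) forces some parent, say \(b\), to evaluate to \False; by induction \(\{b_1,b_2\}\) is split, hence so is the output pair of \(O^k_{ba}\), hence one input pair of the copy of \(G_{\AND}\) carrying \(\{a_1,a_2\}\) is split; and in \(G_{\AND}\) the two output vertices acquire distinct coloured neighbourhoods as soon as one input pair is split, so \(\WL[k]\) splits \(\{a_1,a_2\}\). If \(a\) is an \(\OR\)-node, then \(\val_M(a)=\False\) forces \emph{both} parents to be \False, so by induction both \(\{b_1,b_2\}\) and \(\{b'_1,b'_2\}\) are split, hence both input pairs of the copy of \(G_{\OR}\) are split; since every inner vertex of \(G_{\OR}\) is pinned down by its neighbourhood among these four split vertices, and the two output vertices then see pairwise distinct inner colours, \(\WL[k]\) splits \(\{a_1,a_2\}\).

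Next I would prove: \emph{if \(\val_M(c)=\False\), then \(\WL[k]\) splits \(\{a_1,a_2\}\) for every node \(a\)}. Applying the first part to \(c\) shows that \(\WL[k]\) splits \(\{c_1,c_2\}\). The construction wires \(\{c_1,c_2\}\) into every input pair whose node evaluates to \True, through a feedback switch \(O^k_{ca}\), so the same propagation argument makes all of these input pairs split; together with the inputs that evaluate to \False, which are split by construction, this makes \emph{every} input pair of \(G_M\) split. I would then rerun the induction of the first part, but now with every input pair already split from the outset: every \(\AND\)-node then has an input pair split (in fact both) and every \(\OR\)-node has both input pairs split, so the same step yields a split at every node. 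Combining the two parts gives the lemma.

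I expect the main obstacle to be making the propagation step fully rigorous, that is, justifying that a colour split which \(\WL[k]\) produces on an isolated split-input gadget survives when that gadget is glued into \(G_M\). This rests on the monotonicity of Weisfeiler-Leman refinement: enlarging a graph, or imposing an extra colour distinction on a boundary pair, can only refine the stable colouring; phrased through the bijective \((k+1)\)-pebble game, a Spoiler strategy confined to a single gadget still wins in the larger graph, so any distinction available on \(O^k_{\spl}\) (or on \(G_{\AND}\) and \(G_{\OR}\) with split inputs) is available in \(G_M\). Once this locality principle is fixed, each step above is a direct consequence of it together with the properties recorded for the one-way switch in Lemma~\ref{Lemma: Existence of one-way switches} and the elementary adjacency structure of \(G_{\AND}\) and \(G_{\OR}\).
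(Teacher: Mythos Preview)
Your proposal is correct and follows essentially the same approach as the paper: the same two-claim decomposition, the same induction on depth for the first claim with the same case split on \(\AND\)- versus \(\OR\)-nodes, and the same feedback-through-\(O^k_{ca}\) argument for the second claim. Your added paragraph on why splits computed on an isolated gadget persist inside \(G_M\) (via monotonicity of refinement, equivalently a confined Spoiler strategy) makes explicit a point the paper leaves implicit, but otherwise the arguments coincide.
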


\noindent Next, we prove the converse: Nodes of value \True get interpreted as vertex pairs which are not split by \(\WL[k]\).
\begin{lemma}[compare {\cite[Section 5.4]{WLEquivalencePHard}, \cite[Proof of Theorem 7.11]{CRIdentificationPHard}}]
\label{Lemma: G_M does not split true pairs}
For every monotone circuit \(M\) with output node \(c\), and every node \(a\in V(M)\), if \(\val_M(a)=\True\) and \(\val_M(c)=\True\), then
\((G_M,a_1)\equiv_{\WL[k]}(G_M,a_2)\). 
\begin{proof}
To prove the desired equivalences, we provide winning strategies for Duplicator
in the bijective \((k+1)\)-pebble game on \(G_M\).
We call a vertex pair \(\{v_1,v_2\}\) in \(G_M\) \emph{true} if it either
corresponds to a node \(v\) of \(M\) with \(\val_M(v)=\True\),
or is the output pair of a one-way switch whose input pair corresponds to a node of value \(\True\).
Otherwise, we call \(\{v_1,v_2\}\) \emph{false}.
We call a one-way switch in \(G_M\) \emph{true} if its input pair is a true pair,
and \emph{false} if its input pair is a false pair.

Now, we call a position \(P\) in the bijective \(k+1\)-pebble game on \(G_M\) \emph{safe},
if there is an automorphism \(\phi\) of the subgraph induced by all
vertices of \(G_M\) besides the inner vertices of one-way switches satisfying the following conditions:
\begin{enumerate}
\item\label{Condition: P cup phi is partial iso}
	The position \(P\cup\phi\), that is, the position obtained from \(P\)
	by adding all pebble pairs \(x\mapsto\phi(x)\) for \(x\in\dom(\phi)\),
	is a partial isomorphism.
\item\label{Condition: identity on false one-way switches} For every false one-way switch \(O\),
	\(\phi\) fixes both input and both output vertices of \(O\),
	and the restriction \(P|_O\) of the position \(P\) to vertices of \(O\) is the identity map.
\item\label{Condition: winning on every one-way switch} One of the following is true:
	\begin{enumerate}
	\item\label{Condition: all pebbles in one one-way switch} There is a single true one-way switch \(O\) with output pair \(\{x_1,x_2\}\)
		that contains all \(k+1\) pebble pairs,
		and either \(P\) is trapped and \(\phi\) fixes both \(x_1\) and \(x_2\),
		or \(P\) is twisted and \(\phi\) exchanges \(x_1\) and \(x_2\).
	\item\label{Condition: not all pebbles in one one-way switch}
		Every true one-way switch contains at most \(k\) pebble pairs, and
		for every true one-way switch \(O\) with input pair \(\{y_1,y_2\}\) and output pair \(\{x_1,x_2\}\),
		either \(P|_O\cup \{y_1\mapsto\phi(y_1)\}\) is trapped and \(\phi\) fixes both \(x_1\) and \(x_2\),
		or \(P|_O\cup \{y_1\mapsto\phi(y_1)\}\) is twisted and \(\phi\) exchanges \(x_1\) and \(x_2\).
	\end{enumerate}
\end{enumerate}
We call a position \emph{dangerous} if it is not safe.

\begin{claim}\label{Claim: Exchanging vertices for true node is safe}
For every node \(a\in V(M)\) with \(\val_M(a)=\True\), the position \(a_1\mapsto a_2\) is safe.
Moreover, the witnessing automorphism \(\phi\) can be chosen such that it is the identity
on every gadget but the gadget corresponding to the node \(a\).
\begin{claimproof}
Let $a$ be an arbitrary node of $C$.
If \(a\) is an input node, let \(\phi\) be the map that exchanges~\(a_1\) and~\(a_2\)
and fixes every other vertex. This clearly satisfies
Conditions \ref{Condition: P cup phi is partial iso} and \ref{Condition: identity on false one-way switches}.
Condition~\ref{Condition: winning on every one-way switch} is trivially true, because every color-preserving permutation of the input- and output vertices
of a one-way switch is either twisted or trapped
by Properties~\ref{Property: Trapped compatible with output pair},~\ref{Property: Twisted compatible with output pair} and~\ref{Property: Switching input is trapped and twisted} of the one-way switches.

If \(a\) is an \(\AND\)-node, pick \(\phi\) to be the unique non-trivial automorphism
on the corresponding \(\AND\)-gadget \(G_{\AND}\), and the identity everywhere else.
Again, all three conditions are clearly satisfied in this case
because both parents of \(a\) also have value \True.
If \(a\) is an \(\OR\)-node, let \(b\) be a parent node of \(a\) with \(\val_M(b)=\True\),
and let \(\phi\) be the unique automorphism that exchanges \(a_1\) and \(a_2\),
as well as \(b_1\) and \(b_2\) on the corresponding \(\OR\)-gadget \(G_{\OR}\),
and that is the identity everywhere else. Again, all three conditions are clear by construction.
\end{claimproof}
\end{claim}

\begin{claim}
Every subposition of a safe position \(P\) is safe.
\begin{claimproof}
Assume \(P\) is a safe position and let \(\phi\) be the automorphism from the definition of safeness for \(P\).

If \(P\) satisfies Condition~\ref{Condition: not all pebbles in one one-way switch},
then \(\phi\) also witnesses that every subposition of \(P\) is safe.
Indeed, because subpositions of partial isomorphisms are again partial isomorphisms, Condition~\ref{Condition: P cup phi is partial iso} is clear.
Similarly, Condition~\ref{Condition: identity on false one-way switches} is true because restrictions
of the identity are again the identity on the restricted domain.
Condition~\ref{Condition: not all pebbles in one one-way switch} follows from the observation that
every subposition of a trapped position is trapped, and every subposition of a twisted position is twisted
by Property~\ref{Property: Subpositions of trapped and twisted} of the one-way switches.

If \(P\) satisfies Condition~\ref{Condition: all pebbles in one one-way switch} instead,
then no proper subposition of \(P\) does also satisfy Condition~\ref{Condition: all pebbles in one one-way switch}.
Let \(O\) be the unique true one-way switch that contains all \(k+1\) pebble pairs,
and let \(P_0\subsetneq P\) be a proper subposition.
By Condition~\ref{Condition: all pebbles in one one-way switch} of \(P\),
the position \(P\) and thus also \(P_0\) is either trapped or twisted.
In order to prove Condition~\ref{Condition: not all pebbles in one one-way switch}
for the position \(P_0\), we need to show that we can extend this trapped or twisted position
by the vertex \(y_1\). For this, consider the bijective \((k+1)\)-pebble game in position \(P_0\)
on \(O\). Assume Spoiler picks up an unused pebble pair, and let \(\psi\colon V(O)\to V(O)\)
be the bijection that Duplicator responds with when following a trapped or twisted strategy.
Then the position \(P_0\cup\{y_1\mapsto\psi(y_1)\}\) is again trapped or twisted.
Thus, in order to ensure Condition~\ref{Condition: not all pebbles in one one-way switch} for \(P_0\),
it remains to construct the automorphism \(\phi_0\) for \(P_0\) such that \(\phi_0(y_1)=\psi(y_1)\).

To construct this automorphism, we start with the automorphism \(\phi\) that witnessed safeness
of \(P\). The position \(P_0\) together with \(\phi\) satisfies Condition~\ref{Condition: P cup phi is partial iso}.
If \(\phi(y_1)=\psi(y_1)\), we are thus done. Hence, assume from now on that \(\phi(y_1)\neq\psi(y_1)\).
Because the input pair \(\{y_1,y_2\}\) corresponds to a node of value \True,
Claim~\ref{Claim: Exchanging vertices for true node is safe} gives us a second automorphism \(\theta_y\) which exchanges
\(y_1\) and \(y_2\), and is the identity on every gadget besides the input-, \(\AND\)- or \(\OR\)-gadget containing
\(y_1\) and \(y_2\).
We set \(\phi_0\coloneqq\theta_y\circ\phi\). Then \(\phi_0\) is an automorphism satisfying
\(\phi_0(y_1)=\psi(y_1)\), which furthermore agrees with \(\phi\) everywhere but on the gadget containing \(y_1\) and \(y_2\).
Moreover, by Condition~\ref{Condition: identity on false one-way switches} applied to the position \(y_1\mapsto y_2\) and \(\theta\),
the automorphism~\(\phi_0\) disagrees with \(\phi\) only on \(y_1\) and \(y_2\), where \(P_0\cup\phi_0\) is a partial isomorphism by construction,
and possibly on output pairs of one-way switches whose input pairs correspond to a node of value \True.
But for these one-way switches, we can independently exchange or fix the two vertices of the input and the output pair
while keeping the resulting position either trapped or twisted using Properties~\ref{Property: Trapped compatible with output pair},~\ref{Property: Twisted compatible with output pair} and~\ref{Property: Switching input is trapped and twisted} of the one-way switches.
This proves Conditions~\ref{Condition: P cup phi is partial iso}
and \ref{Condition: identity on false one-way switches}.
\end{claimproof}
\end{claim}

\begin{claim}
Duplicator can avoid dangerous positions from safes ones in the bijective \((k+1)\)-pebble game
on \(G_M\).
\begin{claimproof}
Let \(P\) be a safe position and \(\phi\) the witnessing automorphism.
As subpositions of safe positions are again safe,
it suffices to assume that \(P\) contains at most \(k\) of the possible \(k+1\) pebble pairs.
Now, if a single true one-way switch \(O\) contains all \(k\) pebbles,
then the position \(P=P|_O\) is trapped or twisted,
and Duplicator picks a bijection \(\psi|_O\) according to a trapped or twisted strategy.

If \(\psi|_O\) and \(\phi\) agree on the input pair \(\{y_1,y_2\}\) of \(O\), then \(\psi|_O\) is compatible with \(\phi\).
Otherwise, we again pick the automorphism \(\theta_y\) that Claim~\ref{Claim: Exchanging vertices for true node is safe} yields
for the position \(y_1\mapsto y_2\) and replace \(\phi\) by \(\theta_y\circ\phi\), which is now compatible with \(\psi|_O\).

As all other one-way switches \(O'\) do not contain any pebble pairs on inner vertices,
and the position \(P\) is safe, the restrictions \(\phi|_{O'}\) are the identity on all
false one-way switches, and either trapped or twisted on all true one-way switches
by Properties~\ref{Property: Trapped compatible with output pair},~\ref{Property: Twisted compatible with output pair} and~\ref{Property: Switching input is trapped and twisted}.
We choose \(\psi|_{O'}\) according to a trapped, twisted or identity strategy
on these one-way switches.
Combining the maps \(\psi|_{O'}\) for all one-way switches with \(\phi\)
yields a total bijection \(\psi\), which Duplicator can choose.

Now, if Spoiler places the free pebbles on vertices \(x\) and \(\psi(x)\),
we need to argue that the resulting position is still safe.
However, Condition~\ref{Condition: P cup phi is partial iso} is true because \(\psi\)
was chosen according to a winning strategy on each gadget, and these strategies
are compatible at the intersections of the gadgets.
Conditions~\ref{Condition: identity on false one-way switches} and \ref{Condition: winning on every one-way switch} are true by construction.
Thus, the position \(P\cup\{x\mapsto\psi(x)\}\) is again safe.

Assume now that every true one-way switch of \(G_M\) contains at most \(k-1\) pebble pairs of \(P\),
and let \(\phi\) be the map which witnesses that \(P\) is safe.
Then, Condition \ref{Condition: not all pebbles in one one-way switch} guarantees that for all one-way switches \(O\) with input pair \(\{y_1,y_2\}\) and output pair \(\{x_1,x_2\}\), we can pick the bijection \(\psi|_O\)
according to a trapped, twisted or identity strategy on \(P|_O\cup\{x_1\mapsto\phi(x_1),y_1\mapsto\phi(y_1)\}\).
This ensures that \(\psi|_O\) is compatible with \(\phi\), which means that we get a total bijection \(\psi\), which Duplicator chooses.
Again, each of the positions \(P\cup\{x\mapsto\psi(x)\}\) is safe by construction.
\end{claimproof}
\end{claim}

Thus, for every node \(a\) with \(\val_M(a)=\True\),
Duplicator has a winning strategy in position \(a_1\mapsto a_2\)
by always staying in safe positions. This proves \((G_M,a_1)\equiv_{\WL[k]}(G_M,a_2)\).
\end{proof}
\end{lemma}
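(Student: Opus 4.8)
Since the conclusion is a $\WL[k]$-equivalence of two individualised vertices, by Lemma~\ref{lem:wl-bijective-pebble-game} it suffices to give Duplicator a winning strategy in the bijective $(k+1)$-pebble game on $G_M$ starting from the position $a_1\mapsto a_2$, for each node $a$ with $\val_M(a)=\True$; the hypothesis $\val_M(c)=\True$ enters only so that every ``feedback'' one-way switch $O^k_{ca}$ is attached to a non-split pair. The plan is to maintain a structural invariant on positions, which I will call \emph{safe}: a position $P$ is safe if there is an automorphism $\phi$ of the graph obtained from $G_M$ by deleting the \emph{inner} vertices of all one-way switches, such that $P\cup\phi$ is a partial isomorphism, $\phi$ is the identity (and $P$ sits identically) on every one-way switch whose input pair corresponds to a $\False$ node, and on every one-way switch whose input pair corresponds to a $\True$ node $\phi$ fixes or swaps the output pair in a way \emph{compatible with a trapped respectively twisted position} of the pebbles inside that switch, in the sense of Lemma~\ref{Lemma: Existence of one-way switches}.

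First I would check that the starting position $a_1\mapsto a_2$ is safe for each true node $a$, by induction along the circuit: if $a$ is an input node, the single transposition $a_1\leftrightarrow a_2$ works; if $a$ is an $\AND$-node, the unique non-trivial automorphism of $G_{\AND}$ swaps the output pair together with \emph{both} input pairs, which correspond to true nodes; if $a$ is an $\OR$-node, there is an automorphism of $G_{\OR}$ swapping the output pair together with the input pair of one true parent. The point is that each such gadget automorphism is supported on a single gadget, so they can be pasted together and propagated ``downwards'' through true nodes and through the feedback switches $O^k_{ca}$; for those switches, a twisted position in the sense of Lemma~\ref{Lemma: Existence of one-way switches} is exactly the move that reconciles swapping $\{c_1,c_2\}$ with swapping a true input pair $\{a_1,a_2\}$.

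The core of the argument is then two closure claims: every sub-position of a safe position is safe, and from a safe position with at most $k$ pebbles Duplicator can answer any pebble pick-up by Spoiler so as to return to a safe position. For the latter, Duplicator assembles the response bijection switch by switch: inside each one-way switch they follow the trapped, twisted, or identity strategy guaranteed by Lemma~\ref{Lemma: Existence of one-way switches} (Properties~\ref{Property: Trapped and twisted winning},~\ref{Property: Trapped compatible with output pair} and~\ref{Property: Twisted compatible with output pair}) according to the current mode; elsewhere they follow $\phi$; and whenever $\phi$ and a switch strategy clash on a shared \emph{input} pair $\{y_1,y_2\}$ of a true switch, they repair $\phi$ by composing with the local transposition automorphism of the gadget containing $\{y_1,y_2\}$ from the first step — this alters $\phi$ only on that one gadget and on output pairs of true switches, which is harmless since those can independently be fixed or swapped while keeping the relevant positions trapped or twisted. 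Because safe positions are in particular partial isomorphisms, Duplicator never gets stuck.

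The step I expect to be the main obstacle is the sub-position closure in the degenerate case where all $k+1$ pebble pairs currently sit inside a single true one-way switch $O$: passing to a $k$-pebble sub-position, $\phi$ no longer has a free pebble with which to certify $O$ through its input pair, so one must show that the trapped (resp.\ twisted) position on $O$ can be extended by a pebble on the input vertex $y_1$ while staying trapped (resp.\ twisted), and that the certifying automorphism $\phi_0$ for the sub-position can then be chosen with $\phi_0(y_1)$ matching Duplicator's response there, differing from the old $\phi$ only on a single gadget. This re-entry relies on the flexibility of the trapped/twisted structure at the input, namely Property~\ref{Property: Switching input is trapped and twisted} of Lemma~\ref{Lemma: Existence of one-way switches}, that both $y_1y_2\mapsto y_1y_2$ and $y_1y_2\mapsto y_2y_1$ are simultaneously trapped and twisted, together with Property~\ref{Property: Subpositions of trapped and twisted}; threading this so that all conditions of the invariant are restored at once is the delicate part of the proof.
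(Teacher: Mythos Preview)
Your proposal is correct and follows essentially the same approach as the paper: the same notion of \emph{safe} position (an automorphism $\phi$ of $G_M$ minus the inner vertices of one-way switches, with the identity/trapped/twisted compatibility conditions on each switch), the same three claims (the starting position $a_1\mapsto a_2$ is safe via a single-gadget automorphism, subpositions of safe positions are safe, and Duplicator can maintain safeness), and you correctly pinpoint the delicate step as the subposition closure when all $k+1$ pebbles sit in one true switch, repaired by composing $\phi$ with the local gadget automorphism $\theta_y$ from the first claim. One small remark: your phrase ``by induction along the circuit'' for the starting position is slightly misleading---the paper does not build $\phi$ inductively down the circuit but rather takes $\phi$ supported on the \emph{single} gadget at $a$ (input, $G_{\AND}$, or $G_{\OR}$), which is exactly what your next sentence describes; no propagation is needed because the one-way switches absorb the swap at their input pair via Property~\ref{Property: Switching input is trapped and twisted}.
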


\noindent Similar to~\cite{WLEquivalencePHard}, combining Lemma~\ref{Lemma: G_M splits false pairs} with Lemma~\ref{Lemma: G_M does not split true pairs} yields:
\begin{corollary}\label{Corollary: WL-equivalence is P-hard}
The \WL[k]-equivalence problem for vertices is \Ptime-hard under uniform \ACZ-reductions,
both over simple graphs with \(4\)-bounded abelian color classes, and over uncolored simple graphs.
\begin{proof}
The claim for colored simple graphs with \(4\)-bounded abelian color classes
follows from \Ptime-hardness of the monotone circuit value problem,
the observation that the graph \(G_M\) can be constructed in \(\ACZ\)
together with Lemma~\ref{Lemma: G_M splits false pairs} and Lemma~\ref{Lemma: G_M splits false pairs}.

To show \(\Ptime\)-hardness of the \WL[k]-equivalence problem over uncolored simple graphs,
we need to eliminate the vertex-colors without affecting the \WL[k]-color partition
on the original graph.
To do this let \(\chi\colon G_M\to C\) be the vertex-coloring of \(G_M\),
where \(C\) is a set of colors of polynomial size. Assume w.l.o.g. that \(\chi\) is surjective,
i.e., all colors in \(C\) are used.

We construct an uncolored simple graph \(G_M^C\) from \(G_M\) as follows:
we start by adding to~\(G_M\) a fresh universal vertex \(u\).
Then, we add a path \(P\) to \(G_M\) with vertex set \(C\), and further add a fresh vertex
\(v_0\) to one end of the path. Finally, we connect each vertex~\(x\in V(G_M)\) to the vertex
\(\chi(x)\in C\). This finishes the construction of \(G_M^C\). This construction can be carried out in \(\ACZ\):
the path \(P\) is independent of the input and can hence just be hardcoded into the circuit,
and the additional edges only depend on the vertex-color.

Next, we argue that replacing the colored graph \(G_M\) by the uncolored graph \(G_M^C\)
does not affect the power of the Weisfeiler-Leman algorithm.
For this, note that the universal vertex \(u\) is the unique vertex in \(G_M^C\)
of maximal degree, and hence distinguished even by color refinement. Thus, its neighborhood,
which is the vertex set of \(G_M\subseteq G_M^C\) is also distinguished from its non-neighborhood,
which is the path \(P\). The path \(P\) contains a single vertex of degree \(1\), namely
the vertex \(v_0\) we added to one end, which means that color refinement completely discretizes
\(P\). But because the color classes of \(G_M\subseteq G_M^C\) are precisely the neighborhoods
of the vertices on \(P\) in \(G_M\), color refinement recomputes the color partition induced
on \(G_M\) by \(\chi\), while discretizing every vertex we added. The claim now follows from
the observation that adding discrete vertices which are homogeneously connected
to every color class does not affect the power of the Weisfeiler-Leman algorithm.

Thus, Lemma~\ref{Lemma: G_M splits false pairs} and Lemma~\ref{Lemma: G_M does not split true pairs}
are also true for the graph \(G_M^C\), which proves \Ptime-hardness of the \WL[k]-identification
problem for uncolored simple graphs.
\end{proof}
\end{corollary}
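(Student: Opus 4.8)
The plan is a reduction from the monotone circuit value problem MCVP, which is \Ptime-hard under logspace-uniform \ACZ-reductions~\cite{MCVP}. Given a monotone circuit $M$ with output node $c$, I would take the colored graph $G_M$ constructed above. As already observed, $G_M$ has color multiplicity at most $4$, and it can be produced from $M$ by an \ACZ-circuit: each node of $M$ is replaced by a bounded-size gadget determined only by its type (input, $\AND$, $\OR$), and each wire by a one-way switch, so the adjacency of $G_M$ is a local function of that of $M$. The reduction then outputs the instance $(G_M, c_1, c_2)$ of the \WL[k]-equivalence problem for vertices, i.e.\ it asks whether $(G_M, c_1) \equiv_{\WL[k]} (G_M, c_2)$.

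Correctness is immediate from the two preceding lemmas, both applied to the output node $a = c$. If $\val_M(c) = \True$, then Lemma~\ref{Lemma: G_M does not split true pairs} (with $a = c$) gives $(G_M, c_1) \equiv_{\WL[k]} (G_M, c_2)$. If instead $\val_M(c) = \False$, then Lemma~\ref{Lemma: G_M splits false pairs} (again with $a = c$, using the hypothesis $\val_M(c) = \False$) gives $(G_M, c_1) \not\equiv_{\WL[k]} (G_M, c_2)$. Hence $\val_M(c) = \True$ if and only if the two vertices $c_1, c_2$ are \WL[k]-equivalent, which establishes \Ptime-hardness of the \WL[k]-equivalence problem for vertices over colored simple graphs of color multiplicity~$4$.

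To lift this to uncolored simple graphs, I would encode the coloring $\chi\colon V(G_M) \to C$ into an attached rigid gadget, obtaining an uncolored graph $G_M^C$: add a fresh universal vertex $u$; add a path $P$ whose vertex set is $C$, together with one extra pendant vertex $v_0$ attached to one end of $P$ to break its symmetry; and connect every $x \in V(G_M)$ to the vertex $\chi(x) \in C$. This is still \ACZ-computable, since $P$ has fixed, input-independent shape and the new edges depend only on vertex colors. The key claim is that color refinement (hence \WL[k] for every $k \geq 1$) on $G_M^C$ completely discretizes $u$, $v_0$ and all of $P$, while on $V(G_M) \subseteq V(G_M^C)$ it reconstructs exactly the partition induced by $\chi$: the vertex $u$ is the unique vertex of maximum degree, so $V(G_M) = N(u)$ is separated from $P$; the degree-$1$ vertex $v_0$ at the end of $P$ forces color refinement to discretize $P$; and then each $x \in V(G_M)$ is recolored according to the unique vertex of $P$ it is adjacent to, i.e.\ according to $\chi(x)$. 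Since the added vertices become singleton color classes that are homogeneously connected to each color class of $G_M$, they do not increase the distinguishing power of \WL[k] on $G_M$, so Lemmas~\ref{Lemma: G_M splits false pairs} and~\ref{Lemma: G_M does not split true pairs} hold verbatim for $G_M^C$. Running the same reduction with $(G_M^C, c_1, c_2)$ then yields \Ptime-hardness over uncolored simple graphs.

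The reduction bookkeeping and the application of the two lemmas are routine; the step that I expect to require the most care is the invariance argument in the last paragraph, namely that attaching the discretization gadget neither destroys the original coloring nor grants \WL[k] additional distinguishing power. I would prove this by a standard stability argument: individualizing vertices that color refinement already isolates, and adding discrete vertices that are adjacent to every vertex of a given color class or to none of them, cannot alter the stable coloring restricted to the remaining vertices; one must here be precise about "homogeneously connected" and about the fact that \WL[k] refines at least as much as color refinement, so that whatever color refinement achieves on $G_M^C$ is inherited by \WL[k].
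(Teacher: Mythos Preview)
Your proposal is correct and follows essentially the same route as the paper: reduce from MCVP via the graph $G_M$ and the output pair $(c_1,c_2)$, invoking Lemmas~\ref{Lemma: G_M splits false pairs} and~\ref{Lemma: G_M does not split true pairs}, and then pass to uncolored graphs by attaching the same gadget (universal vertex $u$, path $P$ on the color set with a pendant $v_0$) and arguing that color refinement discretizes the gadget and recovers the original coloring. Your write-up is in fact slightly more careful than the paper's in applying both lemmas correctly and in flagging the invariance step as the point requiring attention.
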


\noindent In order to further reduce to the identification problem, we need to consider
identification of the graph \(G_M\).
\begin{lemma}\label{Lemma: G_M is identified}
Let \(M\) be a monotone circuit with output node \(c\) which evaluates to \False.
Then \(G_M\) is identified by \WL[k].
\begin{proof}
By Lemma~\ref{Lemma: G_M splits false pairs},
all pairs \(\{a_1,a_2\}\) corresponding to a node \(a\in V(M)\) are split by \WL[k].
Translating these splits through the one-way switches, this implies that
everything in~\(G_M\) besides the one-way switches is discretized by \WL[k].
But as this includes all input and output pairs of the one-way switches,
the fact that the graphs \(O^k_{\spl}\) are identified by Property~\ref{Property: split ows identified},
implies that all one-way switches are identified as subgraphs of \(G_M\).

All individual parts of the graph \(G_M\) are thus identified as subgraphs of \(G_M\).
As moreover all edges are contained in one fo the individual parts,
and different parts only overlap in input and output pairs, all of which
are discretized by \WL[k], it follows that the whole graph~\(G_M\) is identified by \WL[k].
\end{proof}
\end{lemma}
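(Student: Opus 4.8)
The plan is to show that the stable $\WL[k]$-coloring of $G_M$ is fine enough to force any $\WL[k]$-equivalent graph to be isomorphic to $G_M$. First I would apply Lemma~\ref{Lemma: G_M splits false pairs}: since $\val_M(c)=\False$, every pair $\{a_1,a_2\}$ attached to a node $a\in V(M)$ is split by $\WL[k]$, and because each such pair is a color class of size $2$, being split means that both of its vertices form singleton color classes, i.e.\ are discretized. These node pairs are exactly the input pairs of all one-way switches (each $O^k_{ba}$ has input pair $\{b_1,b_2\}$, each $O^k_{ca}$ has input pair $\{c_1,c_2\}$), so every one-way switch in $G_M$ has a split input pair. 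By Property~\ref{Property: split ows identified} of Lemma~\ref{Lemma: Existence of one-way switches} the graph $O^k_{\spl}$ is then identified by $\WL[k]$, and by Property~\ref{Property: split ows splits output pair} $\WL[k]$ additionally splits the output pair of each one-way switch; since these output pairs are precisely the input pairs of the copies of $G_{\AND}$ and $G_{\OR}$, those input pairs are discretized as well.

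Next I would propagate the discretization into the gate gadgets. In a copy of $G_{\AND}$, the two inner vertices are adjacent to disjoint sets of the now-discretized input vertices and hence receive distinct colors; in a copy of $G_{\OR}$, which is a CFI-gadget of degree $3$, once its three outer pairs are discretized each of the four inner vertices is determined by which outer vertex it is adjacent to in each pair, so all inner vertices get distinct colors. Consequently $\WL[k]$ discretizes every vertex of $G_M$ except possibly the inner vertices of the one-way switches, and each one-way switch, viewed as a colored subgraph of $G_M$ in which its four boundary vertices are individualized (they are discretized), is identified by $\WL[k]$, since individualizing the output pair as well only refines the coloring of the already identified $O^k_{\spl}$.

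Finally I would assemble these observations. Let $H$ be a graph with $H\equiv_{\WL[k]}G_M$. Since the stable coloring of $G_M$ discretizes everything outside the one-way-switch interiors and $\WL[k]$-equivalence preserves all color-class sizes, the discrete part of $G_M$ has a unique color-preserving counterpart in $H$, yielding a canonical bijection between the non-interior vertices of $G_M$ and the corresponding vertices of $H$ that preserves every edge not lying strictly inside a one-way switch. For each one-way switch $O$ of $G_M$, the corresponding piece of $H$ is $\WL[k]$-equivalent to $O^k_{\spl}$ with its boundary individualized and hence, by identification of that colored switch, isomorphic to it via an isomorphism that is the identity on the discretized boundary vertices. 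Gluing the canonical bijection on the discrete skeleton with these boundary-fixing isomorphisms of the switch interiors is consistent because distinct parts of $G_M$ overlap only in discretized boundary pairs and every edge of $G_M$ lies inside a single part; the result is an isomorphism $G_M\cong H$, so $G_M$ is identified by $\WL[k]$. I expect the main obstacle to be exactly this last gluing step: one has to argue carefully that $\WL[k]$-equivalence of $G_M$ and $H$ descends to $\WL[k]$-equivalence of each one-way-switch sub-instance together with its individualized boundary, and that the part-isomorphisms can be chosen to agree on the shared discretized boundaries — the rigidity of the fully discretized skeleton is what makes both of these work.
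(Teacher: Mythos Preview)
Your proposal is correct and follows essentially the same approach as the paper's proof: both use Lemma~\ref{Lemma: G_M splits false pairs} to discretize all node pairs, propagate this through the one-way switches and gate gadgets to discretize everything outside the switch interiors, invoke Property~\ref{Property: split ows identified} to identify each switch as a subgraph, and then glue the pieces along the discretized boundaries. Your write-up is simply more explicit than the paper's terse version---in particular you spell out the discretization of the inner vertices of $G_{\AND}$ and $G_{\OR}$ and the final gluing argument, which the paper compresses into two sentences.
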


\noindent Consider now the modified graph \(G^*_M\) we get by adding another freshly colored one-way switch \(O^k_*\) to \(G_M\)
whose input pair is \(\{c_1,c_2\}\), i.e., the vertex pair corresponding to the output node of the circuit \(M\).
Additionally, we color one of the output vertices of \(O^k_*\) in another color.
\begin{lemma}\label{Lemma: Equivalence of identification and equivalence}
The following are equivalent:
{
\renewcommand{\labelenumi}{(\roman{enumi})}
\setlength{\leftmargini}{7mm}
\begin{enumerate}
\item \label{itm:eval-false} \(\val_M(c)=\False\),
\item \label{itm:wl-distinguished} \((G_M,c_1)\not\equiv_{\WL[k]}(G_M,c_2)\) and
\item \label{itm:wl-identified} \(G^*_M\) is identified by \WL[k].
\end{enumerate}
}
\begin{proof}
The equivalence of Conditions~\ref{itm:eval-false} and~\ref{itm:wl-distinguished} follows from Lemma~\ref{Lemma: G_M splits false pairs} and Lemma~\ref{Lemma: G_M does not split true pairs}.
We show that Condition~\ref{itm:wl-distinguished} implies  Condition~\ref{itm:wl-identified}.
Assume that \((G_M,c_1)\not\equiv_{\WL[k]}(G_M,c_2)\). This splits
the input pair of \(O^k_*\). By Property~\ref{Property: split ows identified} of the one-way switches,
\(O^k_*\) is now identified as a subgraph of \(G^*_M\),
and because \(G_M\) is also identified by Lemma~\ref{Lemma: G_M is identified} and both parts
interact only at the split pair \(\{c_1,c_2\}\), the whole graph \(G^*_M\) is identified.

We finally show that Condition~\ref{itm:wl-identified} implies Condition~\ref{itm:wl-distinguished} by contraposition.
Assume \((G_M,c_1)\equiv_{\WL[k]}(G_M,c_2)\), and consider the graph \((G^*_M)'\) we obtain by
following the same construction as for \(G^*_M\), but then coloring the other output vertex
of \(O^k_*\). These two graphs are non-isomorphic, as the two output vertices of our one-way switches
do not lie in the same orbit. They are, however, equivalent under \(\WL[k]\): Duplicator can follow
a twisted strategy to win between the two one-way switches and use an arbitrary winning strategy
on \(G_M\). If none of the two parts contains all pebble pairs, these two
strategies can be made compatible at the pair \(\{c_1,c_2\}\) by enforcing which of the two
maps \(c_1\mapsto c_1\) or \(c_1\mapsto c_2\) both strategies should be compatible with.
If all pebble pairs lie in one of the two parts, Duplicator can choose their strategy on this part first
and then extend it to a winning strategy on the whole graph by using that \((G_M,c_1)\equiv_{\WL[k]}(G_M,c_2)\).
This proves that \(G^*_M\equiv_{\WL[k]} (G^*_M)'\) and thus that \(G^*_M\) is not identified by \WL[k].
\end{proof}
\end{lemma}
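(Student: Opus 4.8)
The plan is to establish $(\ref{itm:eval-false})\Leftrightarrow(\ref{itm:wl-distinguished})$ directly from the two preceding lemmas and then close the loop via $(\ref{itm:wl-distinguished})\Rightarrow(\ref{itm:wl-identified})\Rightarrow(\ref{itm:wl-distinguished})$. For the first equivalence I would simply instantiate Lemma~\ref{Lemma: G_M splits false pairs} and Lemma~\ref{Lemma: G_M does not split true pairs} at the node $a=c$: the former gives that $\val_M(c)=\False$ implies $(G_M,c_1)\not\equiv_{\WL[k]}(G_M,c_2)$, and the latter, whose hypothesis ``$\val_M(a)=\True$ and $\val_M(c)=\True$'' collapses to ``$\val_M(c)=\True$'' when $a=c$, supplies the contrapositive. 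Hence $(\ref{itm:eval-false})$ and $(\ref{itm:wl-distinguished})$ are equivalent with no further work.

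For $(\ref{itm:wl-distinguished})\Rightarrow(\ref{itm:wl-identified})$ the key observation is that the input pair of the extra switch $O^k_*$ is exactly $\{c_1,c_2\}$. So if $\WL[k]$ splits this pair, then $O^k_*$ is in its split state, Property~\ref{Property: split ows splits output pair} of Lemma~\ref{Lemma: Existence of one-way switches} splits its output pair as well, and consequently $O^k_*$, viewed as an induced subgraph of $G^*_M$, is (up to the single additional colour on one output vertex) a copy of $O^k_{\spl}$ and is identified by Property~\ref{Property: split ows identified}. At the same time $\val_M(c)=\False$ by the first equivalence, so $G_M\subseteq G^*_M$ is identified by Lemma~\ref{Lemma: G_M is identified}. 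Since $G_M$ and $O^k_*$ overlap only in the pair $\{c_1,c_2\}$, which is discretized by $\WL[k]$, and every edge of $G^*_M$ lies in one of the two parts, the same ``glue two identified parts along a discretized separator'' argument used in the proof of Lemma~\ref{Lemma: G_M is identified} shows that $G^*_M$ is identified.

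The crux is $(\ref{itm:wl-identified})\Rightarrow(\ref{itm:wl-distinguished})$, which I would prove by contraposition. Assuming $(G_M,c_1)\equiv_{\WL[k]}(G_M,c_2)$, let $(G^*_M)'$ be built exactly like $G^*_M$ but colouring the \emph{other} output vertex of $O^k_*$. By Property~\ref{Property: output vertices in different orbits} of Lemma~\ref{Lemma: Existence of one-way switches} the two output vertices of a one-way switch lie in distinct orbits, so $G^*_M\not\simeqq(G^*_M)'$; it then remains to give Duplicator a winning strategy in the bijective $(k+1)$-pebble game between them. Duplicator plays a twisted strategy inside $O^k_*$ (legitimate by Properties~\ref{Property: Trapped and twisted partial isos}, \ref{Property: Trapped and twisted winning}, \ref{Property: Twisted compatible with output pair}, \ref{Property: Switching input is trapped and twisted} and \ref{Property: Subpositions of trapped and twisted}) together with a winning strategy inside $G_M$ witnessing $(G_M,c_1)\equiv_{\WL[k]}(G_M,c_2)$, and reconciles the two at the shared pair $\{c_1,c_2\}$: when neither part carries all $k+1$ pebbles, Duplicator decides in advance, separately for each part, whether it should be compatible with $c_1\mapsto c_1$ or with $c_1\mapsto c_2$ and then forces that alignment (using that both orientations of the input pair of $O^k_*$ are twisted, Property~\ref{Property: Switching input is trapped and twisted}, and the symmetric freedom on $G_M$ coming from the hypothesis); when one part carries all the pebbles, Duplicator commits to the strategy on that part first and extends afterwards. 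This yields $G^*_M\equiv_{\WL[k]}(G^*_M)'$ while $G^*_M\not\simeqq(G^*_M)'$, so $G^*_M$ is not identified by $\WL[k]$.

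I expect the bookkeeping in this last reconciliation --- in particular guaranteeing that the glued bijection restricts to a partial isomorphism at the interface $\{c_1,c_2\}$ in the degenerate case where all $k+1$ pebbles sit inside $O^k_*$ or all inside $G_M$ --- to be the only genuinely delicate point; everything else is assembling properties of the gadgets that are already available.
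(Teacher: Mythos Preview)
Your proposal is correct and follows essentially the same approach as the paper's proof: the equivalence $(\ref{itm:eval-false})\Leftrightarrow(\ref{itm:wl-distinguished})$ via the two preceding lemmas, the implication $(\ref{itm:wl-distinguished})\Rightarrow(\ref{itm:wl-identified})$ by splitting the input of $O^k_*$ and gluing two identified parts along the discretized pair $\{c_1,c_2\}$, and the contrapositive $(\ref{itm:wl-identified})\Rightarrow(\ref{itm:wl-distinguished})$ via the non-isomorphic but \WL[k]-equivalent graph $(G^*_M)'$ obtained by colouring the other output vertex, with Duplicator combining a twisted strategy on $O^k_*$ and a winning strategy on $G_M$ reconciled at $\{c_1,c_2\}$. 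Your write-up is in fact slightly more explicit than the paper's in tracking which properties of Lemma~\ref{Lemma: Existence of one-way switches} are invoked at each step.
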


\noindent This finally yields our hardness result.
\IdentificationPHard*
\begin{proof}
\Ptime-hardness of the identification problem for graphs with \(4\)-bounded abelian color classes
follows from Lemma~\ref{Lemma: Equivalence of identification and equivalence}
and the hardness of the monotone circuit value problem.
Hardness for the class of simple graphs follows from the procedure
for the elimination of vertex colors which we already used in the proof of Corollary~\ref{Corollary: WL-equivalence is P-hard}.
\end{proof}

\section{Conclusion}
We have shown on the one hand that when the dimension \(k\) is part of the input,
the \WL[k]-equivalence problem and the \WL[k]-identification problem
are \coNP-hard and \NP-hard, respectively.

On the other hand, when the dimension \(k\) is fixed, the equivalence problem
is trivially solvable in polynomial time, and we have shown
that the identification problem is solvable in polynomial time
over graphs with \(5\)-bounded color classes and on relational structures with \(k\)-bounded abelian color classes.
Still, the identification problem is \Ptime-hard in both cases.
As an immediate corollary, we obtain the same polynomial-time solvability and hardness results
for definability and equivalence in the bounded-variable logic with counting \(\C^k\).

It would be interesting to know whether the \WL[k]-identification problem
can be solved in polynomial time for larger color classes or indeed on general graphs when \(k\) is fixed.
Indeed, our \NP-hardness reduction was based on whether the tree-width of a given graph is at most \(k\),
which can be solved in linear time for every fixed \(k\)~\cite{tw_linear_time},
and thus does not even yield a super-linear lower bound when \(k\) is fixed.
Still, we would expect that neither the identification nor the equivalence problem can be solved in time \(n^{o(k)}\).
It might be fruitful to study these problems from the lens of parameterized complexity
or provide lower complexity bounds based on the (strong) exponential time hypothesis.

\bibliography{references_full.bib}

\begin{thebibliography}{10}

\bibitem{dejavu}
Markus Anders and Pascal Schweitzer.
\newblock Parallel computation of combinatorial symmetries.
\newblock In {\em 29th Annual European Symposium on Algorithms, {ESA} 2021,
  September 6-8, 2021, Lisbon, Portugal (Virtual Conference)}, volume 204 of
  {\em LIPIcs}, pages 6:1--6:18. Schloss Dagstuhl - Leibniz-Zentrum f{\"{u}}r
  Informatik, 2021.
\newblock \href {https://doi.org/10.4230/LIPICS.ESA.2021.6}
  {\path{doi:10.4230/LIPICS.ESA.2021.6}}.

\bibitem{CRIdentificationPHard}
Vikraman Arvind, Johannes Köbler, Gaurav Rattan, and Oleg Verbitsky.
\newblock Graph isomorphism, color refinement, and compactness.
\newblock {\em Comput. Complex.}, 26(3):627--685, 2017.
\newblock \href {https://doi.org/10.1007/S00037-016-0147-6}
  {\path{doi:10.1007/S00037-016-0147-6}}.

\bibitem{sherali_adams1}
Albert Atserias and Elitza~N. Maneva.
\newblock {Sherali-Adams} relaxations and indistinguishability in counting
  logics.
\newblock {\em {SIAM} J. Comput.}, 42(1):112--137, 2013.
\newblock \href {https://doi.org/10.1137/120867834}
  {\path{doi:10.1137/120867834}}.

\bibitem{Quasipolynomial}
L{\'{a}}szl{\'{o}} Babai.
\newblock Graph isomorphism in quasipolynomial time [extended abstract].
\newblock In {\em Proceedings of the 48th Annual {ACM} {SIGACT} Symposium on
  Theory of Computing, {STOC} 2016, Cambridge, MA, USA, June 18-21, 2016},
  pages 684--697. {ACM}, 2016.
\newblock \href {https://doi.org/10.1145/2897518.2897542}
  {\path{doi:10.1145/2897518.2897542}}.

\bibitem{CRIdentifiesAlmostAllGraphs}
L{\'{a}}szl{\'{o}} Babai and Ludek Kucera.
\newblock Canonical labelling of graphs in linear average time.
\newblock In {\em 20th Annual Symposium on Foundations of Computer Science, San
  Juan, Puerto Rico, 29-31 October 1979}, pages 39--46. {IEEE} Computer
  Society, 1979.
\newblock \href {https://doi.org/10.1109/SFCS.1979.8}
  {\path{doi:10.1109/SFCS.1979.8}}.

\bibitem{GIBoundedCCSLasVegas}
László Babai.
\newblock {Monte}-{Carlo} algorithms in graph isomorphism testing.
\newblock Technical Report 79-10, Université de Montréal, 1979.

\bibitem{tw_linear_time}
Hans~L. Bodlaender.
\newblock A linear-time algorithm for finding tree-decompositions of small
  treewidth.
\newblock {\em {SIAM} J. Comput.}, 25(6):1305--1317, 1996.
\newblock \href {https://doi.org/10.1137/S0097539793251219}
  {\path{doi:10.1137/S0097539793251219}}.

\bibitem{bodlaender_treewidth}
Hans~L. Bodlaender.
\newblock A partial \emph{k}-arboretum of graphs with bounded treewidth.
\newblock {\em Theor. Comput. Sci.}, 209(1-2):1--45, 1998.
\newblock \href {https://doi.org/10.1016/S0304-3975(97)00228-4}
  {\path{doi:10.1016/S0304-3975(97)00228-4}}.

\bibitem{twNPhard}
Hans~L. Bodlaender, \'{E}douard Bonnet, Lars Jaffke, Du\v{s}an Knop, Paloma~T.
  Lima, Martin Milani\v{c}, Sebastian Ordyniak, Sukanya Pandey, and Ond\v{r}ej
  Such\'{y}.
\newblock Treewidth is {NP}-complete on cubic graphs.
\newblock In {\em 18th International Symposium on Parameterized and Exact
  Computation, {IPEC} 2023, September 6-8, 2023, Amsterdam, The Netherlands},
  volume 285 of {\em LIPIcs}, pages 7:1--7:13. Schloss Dagstuhl -
  Leibniz-Zentrum f{\"{u}}r Informatik, 2023.
\newblock \href {https://doi.org/10.4230/LIPICS.IPEC.2023.7}
  {\path{doi:10.4230/LIPICS.IPEC.2023.7}}.

\bibitem{2WLIdentifiesAlmostAllRegularGraphs}
B{\'e}la Bollob{\'a}s.
\newblock Distinguishing vertices of random graphs.
\newblock {\em North-holland Mathematics Studies}, 62:33--49, 1982.
\newblock \href {https://doi.org/10.1016/S0304-0208(08)73545-X}
  {\path{doi:10.1016/S0304-0208(08)73545-X}}.

\bibitem{CFI}
Jin{-}yi Cai, Martin Fürer, and Neil Immerman.
\newblock An optimal lower bound on the number of variables for graph
  identification.
\newblock {\em Comb.}, 12(4):389--410, 1992.
\newblock \href {https://doi.org/10.1007/BF01305232}
  {\path{doi:10.1007/BF01305232}}.

\bibitem{CC}
G.~Chen and I.~Ponomarenko.
\newblock {\em Lectures on Coherent Configurations}.
\newblock Central China Normal University Press, 2019.
\newblock A draft is available at \url{https://www.pdmi.ras.ru/~inp/}.

\bibitem{CFIvstw}
Anuj Dawar and David Richerby.
\newblock The power of counting logics on restricted classes of finite
  structures.
\newblock In {\em Computer Science Logic, 21st International Workshop, {CSL}
  2007, 16th Annual Conference of the EACSL, Lausanne, Switzerland, September
  11-15, 2007, Proceedings}, volume 4646 of {\em Lecture Notes in Computer
  Science}, pages 84--98. Springer, 2007.
\newblock \href {https://doi.org/10.1007/978-3-540-74915-8_10}
  {\path{doi:10.1007/978-3-540-74915-8_10}}.

\bibitem{wl_vs_homomorphisms}
Zdenek Dvor{\'{a}}k.
\newblock On recognizing graphs by numbers of homomorphisms.
\newblock {\em J. Graph Theory}, 64(4):330--342, 2010.
\newblock \href {https://doi.org/10.1002/JGT.20461}
  {\path{doi:10.1002/JGT.20461}}.

\bibitem{WL2onCCS4}
Frank Fuhlbrück, Johannes Köbler, and Oleg Verbitsky.
\newblock Identifiability of graphs with small color classes by the
  {Weisfeiler-Leman} algorithm.
\newblock {\em {SIAM} J. Discret. Math.}, 35(3):1792--1853, 2021.
\newblock \href {https://doi.org/10.1137/20M1327550}
  {\path{doi:10.1137/20M1327550}}.

\bibitem{GIBoundedCCS}
Merrick Furst, John Hopcroft, and Eugene~M. Luks.
\newblock A subexponential algorithm for trivalent graph isomorphism.
\newblock Technical report, Cornell University, USA, 1980.

\bibitem{MCVP}
Leslie~M. Goldschlager.
\newblock The monotone and planar circuit value problems are log space complete
  for {P}.
\newblock {\em {SIGACT} News}, 9(2):25--29, 1977.
\newblock \href {https://doi.org/10.1145/1008354.1008356}
  {\path{doi:10.1145/1008354.1008356}}.

\bibitem{GradelPakusa19}
Erich Gr{\"{a}}del and Wied Pakusa.
\newblock Rank logic is dead, long live rank logic!
\newblock {\em J. Symb. Log.}, 84(1):54--87, 2019.
\newblock \href {https://doi.org/10.1017/jsl.2018.33}
  {\path{doi:10.1017/jsl.2018.33}}.

\bibitem{WLEquivalencePHard}
Martin Grohe.
\newblock Equivalence in finite-variable logics is complete for polynomial
  time.
\newblock {\em Comb.}, 19(4):507--532, 1999.
\newblock \href {https://doi.org/10.1007/S004939970004}
  {\path{doi:10.1007/S004939970004}}.

\bibitem{GroheMinors}
Martin Grohe.
\newblock Fixed-point definability and polynomial time on graphs with excluded
  minors.
\newblock {\em J. {ACM}}, 59(5):27:1--27:64, 2012.
\newblock \href {https://doi.org/10.1145/2371656.2371662}
  {\path{doi:10.1145/2371656.2371662}}.

\bibitem{WLdimvstw}
Martin Grohe, Moritz Lichter, Daniel Neuen, and Pascal Schweitzer.
\newblock Compressing {CFI} graphs and lower bounds for the {Weisfeiler-Leman}
  refinements.
\newblock In {\em 64th {IEEE} Annual Symposium on Foundations of Computer
  Science, {FOCS} 2023, Santa Cruz, CA, USA, November 6-9, 2023}, pages
  798--809. {IEEE}, 2023.
\newblock \href {https://doi.org/10.1109/FOCS57990.2023.00052}
  {\path{doi:10.1109/FOCS57990.2023.00052}}.

\bibitem{BoundedTreeWidth}
Martin Grohe and Julian Mari{\~{n}}o.
\newblock Definability and descriptive complexity on databases of bounded
  tree-width.
\newblock In {\em Database Theory - {ICDT} '99, 7th International Conference,
  Jerusalem, Israel, January 10-12, 1999, Proceedings}, volume 1540 of {\em
  Lecture Notes in Computer Science}, pages 70--82. Springer, 1999.
\newblock \href {https://doi.org/10.1007/3-540-49257-7_6}
  {\path{doi:10.1007/3-540-49257-7_6}}.

\bibitem{BoundedRankWidth}
Martin Grohe and Daniel Neuen.
\newblock Canonisation and definability for graphs of bounded rank width.
\newblock {\em {ACM} Trans. Comput. Log.}, 24(1):6:1--6:31, 2023.
\newblock \href {https://doi.org/10.1145/3568025} {\path{doi:10.1145/3568025}}.

\bibitem{sherali_adams2}
Martin Grohe and Martin Otto.
\newblock Pebble games and linear equations.
\newblock {\em J. Symb. Log.}, 80(3):797--844, 2015.
\newblock \href {https://doi.org/10.1017/JSL.2015.28}
  {\path{doi:10.1017/JSL.2015.28}}.

\bibitem{Bijective_Pebble_Games}
Lauri Hella.
\newblock Logical hierarchies in {PTIME}.
\newblock {\em Inf. Comput.}, 129(1):1--19, 1996.
\newblock \href {https://doi.org/10.1006/INCO.1996.0070}
  {\path{doi:10.1006/INCO.1996.0070}}.

\bibitem{ImmermanLanderCCS3}
Neil Immerman and Eric~S. Lander.
\newblock {\em Describing Graphs: {A} First-Order Approach to Graph
  Canonization}, pages 59--81.
\newblock Springer New York, New York, NY, 1990.
\newblock \href {https://doi.org/10.1007/978-1-4612-4478-3_5}
  {\path{doi:10.1007/978-1-4612-4478-3_5}}.

\bibitem{bliss1}
Tommi~A. Junttila and Petteri Kaski.
\newblock Engineering an efficient canonical labeling tool for large and sparse
  graphs.
\newblock In {\em Proceedings of the Nine Workshop on Algorithm Engineering and
  Experiments, {ALENEX} 2007, New Orleans, Louisiana, USA, January 6, 2007}.
  {SIAM}, 2007.
\newblock \href {https://doi.org/10.1137/1.9781611972870.13}
  {\path{doi:10.1137/1.9781611972870.13}}.

\bibitem{bliss2}
Tommi~A. Junttila and Petteri Kaski.
\newblock Conflict propagation and component recursion for canonical labeling.
\newblock In {\em Theory and Practice of Algorithms in (Computer) Systems -
  First International {ICST} Conference, {TAPAS} 2011, Rome, Italy, April
  18-20, 2011. Proceedings}, volume 6595 of {\em Lecture Notes in Computer
  Science}, pages 151--162. Springer, 2011.
\newblock \href {https://doi.org/10.1007/978-3-642-19754-3_16}
  {\path{doi:10.1007/978-3-642-19754-3_16}}.

\bibitem{WL3onPlanarGraphs}
Sandra Kiefer, Ilia Ponomarenko, and Pascal Schweitzer.
\newblock The {Weisfeiler-Leman} dimension of planar graphs is at most 3.
\newblock {\em J. {ACM}}, 66(6):44:1--44:31, 2019.
\newblock \href {https://doi.org/10.1145/3333003} {\path{doi:10.1145/3333003}}.

\bibitem{CRIdentification}
Sandra Kiefer, Pascal Schweitzer, and Erkal Selman.
\newblock Graphs identified by logics with counting.
\newblock {\em {ACM} Trans. Comput. Log.}, 23(1):1:1--1:31, 2022.
\newblock \href {https://doi.org/10.1145/3417515} {\path{doi:10.1145/3417515}}.

\bibitem{KuceraRandomRegularGraphs}
Ludek Kucera.
\newblock Canonical labeling of regular graphs in linear average time.
\newblock In {\em 28th Annual Symposium on Foundations of Computer Science, Los
  Angeles, California, USA, 27-29 October 1987}, pages 271--279. {IEEE}
  Computer Society, 1987.
\newblock \href {https://doi.org/10.1109/SFCS.1987.11}
  {\path{doi:10.1109/SFCS.1987.11}}.

\bibitem{Lichter2023}
Moritz Lichter.
\newblock Separating rank logic from polynomial time.
\newblock {\em J. ACM}, 70(2), 03 2023.
\newblock \href {https://doi.org/10.1145/3572918} {\path{doi:10.1145/3572918}}.

\bibitem{Lichter2023b}
Moritz Lichter.
\newblock Witnessed symmetric choice and interpretations in fixed-point logic
  with counting.
\newblock In {\em 50th International Colloquium on Automata, Languages, and
  Programming (ICALP 2023)}, volume 261 of {\em LIPIcs}, pages 133:1--133:20.
  Schloss Dagstuhl -- Leibniz-Zentrum f{\"u}r Informatik, 2023.
\newblock \href {https://doi.org/10.4230/LIPIcs.ICALP.2023.133}
  {\path{doi:10.4230/LIPIcs.ICALP.2023.133}}.

\bibitem{nauty}
Brendan~D. McKay and Adolfo Piperno.
\newblock Practical graph isomorphism, {II}.
\newblock {\em J. Symb. Comput.}, 60:94--112, 2014.
\newblock \href {https://doi.org/10.1016/J.JSC.2013.09.003}
  {\path{doi:10.1016/J.JSC.2013.09.003}}.

\bibitem{WLgoesNeural}
Christopher Morris, Martin Ritzert, Matthias Fey, William~L. Hamilton, Jan~Eric
  Lenssen, Gaurav Rattan, and Martin Grohe.
\newblock {Weisfeiler} and {Leman} go neural: {Higher}-order graph neural
  networks.
\newblock In {\em The Thirty-Third {AAAI} Conference on Artificial
  Intelligence, {AAAI} 2019, The Thirty-First Innovative Applications of
  Artificial Intelligence Conference, {IAAI} 2019, The Ninth {AAAI} Symposium
  on Educational Advances in Artificial Intelligence, {EAAI} 2019, Honolulu,
  Hawaii, USA, January 27 - February 1, 2019}, pages 4602--4609. {AAAI} Press,
  2019.
\newblock \href {https://doi.org/10.1609/AAAI.V33I01.33014602}
  {\path{doi:10.1609/AAAI.V33I01.33014602}}.

\bibitem{multipedesII}
Daniel Neuen and Pascal Schweitzer.
\newblock Benchmark graphs for practical graph isomorphism.
\newblock In {\em 25th Annual European Symposium on Algorithms, {ESA} 2017,
  September 4-6, 2017, Vienna, Austria}, volume~87 of {\em LIPIcs}, pages
  60:1--60:14. Schloss Dagstuhl - Leibniz-Zentrum f{\"{u}}r Informatik, 2017.
\newblock \href {https://doi.org/10.4230/LIPICS.ESA.2017.60}
  {\path{doi:10.4230/LIPICS.ESA.2017.60}}.

\bibitem{multipedesI}
Daniel Neuen and Pascal Schweitzer.
\newblock An exponential lower bound for individualization-refinement
  algorithms for graph isomorphism.
\newblock In {\em Proceedings of the 50th Annual {ACM} {SIGACT} Symposium on
  Theory of Computing, {STOC} 2018, Los Angeles, CA, USA, June 25-29, 2018},
  pages 138--150. {ACM}, 2018.
\newblock \href {https://doi.org/10.1145/3188745.3188900}
  {\path{doi:10.1145/3188745.3188900}}.

\bibitem{schneider_upperBound}
Thomas Schneider and Pascal Schweitzer.
\newblock An upper bound on the {Weisfeiler-Leman} dimension, 2024.
\newblock \href {https://arxiv.org/abs/2403.12581} {\path{arXiv:2403.12581}}.

\bibitem{FibersOrder5}
Kyoungah See and Sung~Y. Song.
\newblock Association schemes of small order.
\newblock {\em Journal of Statistical Planning and Inference}, 73(1):225--271,
  1998.
\newblock \href {https://doi.org/10.1016/S0378-3758(98)00064-0}
  {\path{doi:10.1016/S0378-3758(98)00064-0}}.

\bibitem{seppelt_WLEquivcoNPhard}
Tim Seppelt.
\newblock {An Algorithmic Meta Theorem for Homomorphism Indistinguishability}.
\newblock In {\em 49th International Symposium on Mathematical Foundations of
  Computer Science (MFCS 2024)}, volume 306 of {\em Leibniz International
  Proceedings in Informatics (LIPIcs)}, pages 82:1--82:19, Dagstuhl, Germany,
  2024. Schloss Dagstuhl -- Leibniz-Zentrum f{\"u}r Informatik.
\newblock \href {https://doi.org/10.4230/LIPIcs.MFCS.2024.82}
  {\path{doi:10.4230/LIPIcs.MFCS.2024.82}}.

\bibitem{brambles}
Paul~D. Seymour and Robin Thomas.
\newblock Graph searching and a min-max theorem for tree-width.
\newblock {\em J. Comb. Theory, Ser. {B}}, 58(1):22--33, 1993.
\newblock \href {https://doi.org/10.1006/JCTB.1993.1027}
  {\path{doi:10.1006/JCTB.1993.1027}}.

\bibitem{WL}
B.~Weisfeiler and A.~Leman.
\newblock The reduction of a graph to canonical form and the algebra which
  appears therein.
\newblock {\em Nauchno-Technicheskaya Informatsia, Seriya 2}, 9:12--16, 1968.
\newblock An english translation due to Grigory Ryabov is available at
  \url{https://www.iti.zcu.cz/wl2018/pdf/wl_paper_translation.pdf}.

\bibitem{abelian_color_classes}
Faried~Abu Zaid, Erich Grädel, Martin Grohe, and Wied Pakusa.
\newblock Choiceless polynomial time on structures with small abelian colour
  classes.
\newblock In {\em Mathematical Foundations of Computer Science 2014 - 39th
  International Symposium, {MFCS} 2014, Budapest, Hungary, August 25-29, 2014.
  Proceedings, Part {I}}, volume 8634 of {\em Lecture Notes in Computer
  Science}, pages 50--62. Springer, 2014.
\newblock \href {https://doi.org/10.1007/978-3-662-44522-8_5}
  {\path{doi:10.1007/978-3-662-44522-8_5}}.

\end{thebibliography}

\end{document}